\documentclass[11pt]{article}
\setlength{\parindent}{0em}

\usepackage{fullpage}
\usepackage{amssymb}
\usepackage{amsmath}
\usepackage{amsthm}
\usepackage{multirow}
\usepackage{enumerate}
\usepackage{graphicx}
\usepackage{mathrsfs}
\usepackage[utf8]{inputenc}
\usepackage{cite}
\usepackage{hyperref}
\usepackage{cleveref}
\usepackage{mathtools}
\usepackage{amsfonts}
\usepackage[T1]{fontenc}
\usepackage{mathpazo}
\usepackage{bm}
\usepackage{isomath}
\usepackage{verbatim}
\usepackage{changepage}
\usepackage[usenames,dvipsnames,svgnames,table]{xcolor}
\usepackage [english]{babel}
\usepackage{url}


\definecolor{blueviolet}{RGB}{60,50,200}
\definecolor{oliveg}{RGB}{40,200,30}
\hypersetup{colorlinks=true,       
    linkcolor=blueviolet,
    citecolor=oliveg,
}

\usepackage{tikz}
\usetikzlibrary{calc}

\usepackage{algorithmicx,algorithm}
\usepackage{algpseudocode}


\usepackage{color}

\usepackage{todonotes}


\theoremstyle{definition}
\newtheorem{definition}{Definition}[section]

\theoremstyle{plain}
\newtheorem{lemma}{Lemma}[section]
\newtheorem{theorem}{Theorem}
\newtheorem{fact}{Fact}

\newtheorem{corollary}{Corollary}[section]
\newtheorem{claim}{Claim}[section]

\newtheorem{observation}{Observation}[section]

\newcommand{\alert}[1]{\textcolor{red}{#1}}

\newcommand{\diag}{\textnormal{diag}}
\newcommand{\tr}{\textnormal{tr}}

\newcommand{\CAL}[1]{\mathcal{#1}}
\newcommand{\G}{\mathfrak{g}}

\newcommand{\F}{\mathbb{F}}

\newcommand{\N}{\mathbb{N}}
\newcommand{\Z}{\mathbb{Z}}
\newcommand{\Q}{\mathbb{Q}}
\newcommand{\C}{\mathbb{C}}

\newcommand{\poly}{\textnormal{poly}}
\renewcommand{\char}{\textnormal{char}}
\newcommand{\Det}{\text{Det}}
\newcommand{\veca}{{\mathbf{a}}}
\newcommand{\vecb}{{\mathbf{b}}}
\newcommand{\vecc}{{\mathbf{c}}}

\newcommand{\vecx}{{\mathbf{x}}}
\newcommand{\vecy}{{\mathbf{y}}}
\newcommand{\vecz}{{\mathbf{z}}}
\newcommand{\vecu}{{\mathbf{u}}}

\newcommand{\vecr}{{\mathbf{r}}}

\newcommand{\spa}{\mathrm{span}}

\newcommand{\GIMM}{\mathfrak{g}_{\scriptscriptstyle \IMM}}

\newcommand{\oIMM}{\text{IMM}}
\newcommand{\IMM}{\text{Tr}\mbox{-}\text{IMM}}
\newcommand{\EIMM}{\mathsf{TRACE}}
\newcommand{\EDET}{\mathsf{DET}}
\newcommand{\FMAI}{\mathsf{FMAI}}
\newcommand{\MMTI}{\mathsf{MMTI}}
\newcommand{\IMMTI}{\mathsf{TRACE\mbox{-}TI}}

\newcommand{\vineet}[1]{\textcolor{red}{#1}}


\makeatletter
\newcommand\footnoteref[1]{\protected@xdef\@thefnmark{\ref{#1}}\@footnotemark}
\makeatother


\newcommand{\GL}{\text{GL}}
\newcommand{\GLNF}{\text{GL}(n,\F)}

\newcommand{\subimm}{\textnormal{\footnotesize{Tr-{\tiny IMM}}}}
\newcommand{\gimm}{\textnormal{G}_{\subimm}}
\newcommand{\Gf}{\mathfrak{g}_{f}}
\newcommand{\realimm}{\textnormal{IMM}_{w,d}}

\begin{document}
\title{Randomized polynomial-time equivalence between determinant and trace-IMM equivalence tests}
\author{Janaky Murthy\\
\footnotesize{Indian Institute of Science}\\
\normalsize{\tt janakymurthy@iisc.ac.in}
\and {Vineet Nair}\\
\footnotesize{Technion Israel Institute of Technology}
\footnote{A part of this work was done when the author was a graduate student at the Indian Institute of Science.}\\ 
\normalsize{\tt vineet@cs.technion.ac.il}
\and {Chandan Saha}\\
\footnotesize{Indian Institute of Science}\\
\normalsize{\tt chandan@iisc.ac.in}}
\clearpage\maketitle
\thispagestyle{empty}

\begin{abstract}
Equivalence testing for a polynomial family $\{g_m\}_{m \in \N}$ over a field $\F$ is the following problem: Given black-box access to an $n$-variate polynomial $f(\vecx)$, where $n$ is the number of variables in $g_m$ for some $m \in \N$, check if there exists an $A \in \GL(n,\F)$ such that $f(\vecx) = g_m(A\vecx)$. If yes, then output such an $A$. The complexity of equivalence testing has been studied for a number of important polynomial families, including the determinant ($\Det$) and the family of iterated matrix multiplication polynomials. 
Two popular variants of the iterated matrix multiplication polynomial are: $\oIMM_{w,d}$ (the $(1,1)$ entry of the product of $d$ many $w\times w$  symbolic matrices) and $\IMM_{w,d}$ (the trace of the product of $d$ many $w\times w$ symbolic matrices). The families -- $\Det$, $\oIMM$ and $\IMM$ -- are $\mathsf{VBP}$-complete under $p$-projections, and so, in this sense, they have the same complexity. But, do they have the same equivalence testing complexity? We show that the answer is ``yes" for $\Det$ and $\IMM$ (modulo the use of randomness).


The above result may appear a bit surprising as the complexity of equivalence testing for $\oIMM$ and that for $\Det$ are quite different over $\Q$: a randomized polynomial-time  equivalence testing for $\oIMM$ over $\Q$ is known \cite{KayalNST19}, whereas \cite{GargGKS19} showed that equivalence testing for $\Det$ over $\Q$ is integer factoring hard (under randomized reductions and assuming GRH). 
To our knowledge, the complexity of equivalence testing for $\IMM$ was not known before this work. We show that, despite the syntactic similarity between $\oIMM$ and $\IMM$, equivalence testing for $\IMM$ and that for $\Det$ are randomized polynomial-time Turing reducible to each other over any field of characteristic zero or sufficiently large. 
The result is obtained by connecting the two problems via another well-studied problem in computer algebra, namely the \emph{full matrix algebra isomorphism} problem ($\FMAI$). In particular, we prove the following: 
\begin{enumerate}
    \item Testing equivalence of polynomials to $\IMM_{w,d}$, for $d\geq 3$ and $w\geq 2$, is randomized polynomial-time Turing reducible to testing equivalence of polynomials to $\Det_w$, the determinant of the $w \times w$ matrix of formal variables. (Here, $d$ need not be a constant.)
    \item $\FMAI$ is randomized polynomial-time Turing reducible to equivalence testing (in fact, to tensor isomorphism testing) for the family of \emph{matrix multiplication tensors} $\{\IMM_{w,3}\}_{w \in \N}$.
\end{enumerate}
These results, in conjunction with the randomized poly-time reduction (shown in \cite{GargGKS19}) from determinant equivalence testing to $\FMAI$, imply that the four problems -- $\FMAI$, equivalence testing for $\IMM$ and for $\Det$, and the $3$-tensor isomorphism problem for the family of matrix multiplication tensors  -- are randomized poly-time equivalent under Turing reductions.

\end{abstract}

\newpage
\pagenumbering{arabic}
\section{Introduction}\label{sec: intro}
The \emph{polynomial equivalence problem} or \emph{equivalence testing} is the following algorithmic task: Given two $n$-variate polynomials $f$ and $g$ over a field $\F$ as lists of coefficients, determine if there exists an $A \in \GLNF$ such that $f(\vecx) = g(A\vecx)$. If yes, then $f$ is said to be \emph{equivalent to}\footnote{Indeed, $f$ and $g$ represent the same function on $\F^n$ upto a change of basis.} $g$ over $\F$. The complexity of equivalence testing depends on the underlying field $\F$. Over finite fields, the problem is in $\mathsf{NP} \cap \mathsf{coAM}$ \cite{Thierauf98,Saxenaphd}\footnote{This is shown by using the classic set lower bound protocol \cite{GoldwasserS86}.}, and hence unlikely to be $\mathsf{NP}$-complete. Whereas over $\Q$, it is not even known whether equivalence testing is decidable. The best known complexity of the problem over other fields follows from a naive reduction to solving a system of polynomial equations. However, polynomial solvability could be harder than testing polynomial equivalence. \\


\textbf{Connections to other problems.} A few works in the literature have related equivalence testing to other fundamental problems. For example, \cite{AgrawalS05} showed that the special instance of cubic form equivalence is at least as hard as (but possibly harder than) graph isomorphism, irrespective of the underlying field. There is a close connection between cubic form equivalence and the algebra isomorphism problem. \cite{AgrawalS06} gave a polynomial-time reduction from commutative algebra isomorphism to cubic form equivalence over any field. In the reverse direction, a polynomial-time reduction is known from cubic form equivalence to commutative algebra isomorphism over \emph{almost} all fields \cite{GrowchowQ19, AgrawalS05}. In fact, the results in \cite{BW15}, \cite{FGS19} and \cite{GrowchowQ19} together imply that a host of problems, which includes $3$-tensor isomorphism, matrix space isometry, matrix space conjugacy, (commutative or associative) algebra isomorphism and cubic form equivalence, are polynomial-time reducible to each other. There is a cryptographic authentication scheme \cite{Patarin96} based on the presumed hardness of cubic form equivalence\footnote{more generally, constant-degree form equivalence} over finite fields (or rather a generalization of it known as \emph{Isomorphism of Polynomials with one Secret} (IP1S)\footnote{IP1S is the following problem: Given two ordered sets of $n$-variate polynomials $(f_1, f_2, \ldots, f_m)$ and $(g_1, g_2, \ldots, g_m)$, decide if there exists an $A \in \GLNF$ such that $f_i(\vecx) = g_i(A\vecx)$ for all $i \in [m]$. Note that even the quadratic case is non-trivial here as we are dealing with tuples of polynomials. Recently, \cite{IvanyosQ19} gave a randomized poly-time algorithm for the quadratic IP1S problem over finite fields of odd size. In the general setting, there is an algorithm for IP1S over finite fields that is significantly better than the brute-force strategy, but it still runs in exponential time \cite{FaugereP06, PatarinGC98}.}). It is not known whether cubic form equivalence is even decidable over $\Q$. In contrast, the complexity of quadratic form equivalence testing is completely resolved, primarily due to well-known classification results for quadratic forms (see \cite{Serre73, Araujo11}). The classification yields a polynomial-time quadratic form equivalence testing over finite fields. Over $\Q$ though, quadratic form equivalence can be solved in polynomial time only with oracle access to integer factoring. Moreover, integer factoring reduces in randomized polynomial time to quadratic form equivalence over $\Q$ \cite{Walter13}\footnote{This reduction is to the search version of the quadratic form equivalence problem. In the search version of equivalence testing, we are required to output an invertible transformation $A$ if the input polynomials are equivalent.}. \\


\textbf{Special polynomial families.} The work of \cite{Kayal11} initiated the study of a natural variant of the polynomial equivalence problem, namely equivalence testing for special families of polynomials. In this setting, we fix some important family of polynomials $\CAL{G} = \{g_m\}_{m \in N}$ and then aim to design an equivalence testing algorithm for $\CAL{G}$. Such an algorithm takes input black-box access\footnote{i.e., query access to evaluations of $f$ at chosen points from $\F^n$.} to a single $n$-variate polynomial $f(\vecx)$ and determines whether $f$ is equivalent to $g_m$ for some $m \in \N$, and if yes, then it also outputs an $A \in \GL(n,\F)$ such that $f(\vecx) = g_m(A\vecx)$.\footnote{The problem is well-posed even if $f$ is given verbosely as a list of coefficients and it is not required to output an invertible transformation $A$ in the `yes' case. However, it turns out that for a number of popular polynomial families it is indeed possible to design efficient equivalence testing algorithms that satisfy these stronger requirements.} \cite{Kayal12,Kayal11} gave randomized polynomial-time equivalence testing algorithms for a few interesting polynomial families, viz. the determinant, the permanent, the family of elementary symmetric polynomials and the family of power symmetric polynomials. These families are quite popular in algebraic complexity theory, particularly in the context of proving arithmetic circuit lower bounds (see the surveys \cite{ShpilkaY10, ChenKW11, S15survey}). Except for the determinant, the algorithms in \cite{Kayal12,Kayal11} work over $\C, \Q$, and finite fields\footnote{Over $\C$, the computation model assumes that arithmetic with numbers in $\C$ and root finding of univariate polynomials over $\C$ can be done efficiently. Also, the finite fields are assumed to be of sufficiently large characteristic.}, and for the determinant it works only over $\C$. Recently, \cite{GargGKS19} gave a randomized polynomial-time equivalence testing algorithm for the determinant over finite fields\footnote{A determinant equivalence test over finite fields was also given in \cite{KayalNS19}, but the algorithm there outputs an invertible transformation over a low extension of the base field.}. They also showed that determinant equivalence test over $\Q$ is intimately connected to integer factoring:  Let $\Det_w(\vecx)$ be the determinant of the $w \times w$ symbolic matrix. Then, deciding if a given polynomial is equivalent to $\Det_w$ over $\Q$ can be done in randomized polynomial-time with oracle access to integer factoring, provided $w$ is a constant\footnote{When $w$ is not a constant, \cite{GargGKS19} gave a randomized polynomial-time determinant equivalence test over $\Q$, but the algorithm (which works without an integer factoring oracle) outputs a transformation over a low extension of $\Q$. }. Furthermore, assuming GRH, there is a randomized polynomial-time reduction from factoring square-free integers to finding an $A \in \GL(2,\Q)$ such that a given quadratic form $f = \Det_2(A \cdot \vecx)$, if $f$ is equivalent to $\Det_2$. \\ 

Determinant equivalence test is particularly interesting in the context of the permanent versus determinant problem \cite{Valiant79a}. An approach to solve this long-standing open problem is given by Geometric Complexity Theory (GCT) \cite{MulmuleyS01, MulmuleyS08}, which proposes the applications of deep tools and techniques from algebraic geometry, group theory and representation theory to achieve this goal. GCT reduces the problem to showing that the (padded) permanent polynomial is not in the \emph{orbit closure}\footnote{The orbit of an $n$-variate degree-$d$ polynomial $g \in \C[\vecx]$ is the set $\{g(A\vecx) \mid A\in \GL(n, \C) \}$, and the orbit closure of $g$ is the Zariski closure of the orbit when viewed as points in $\C^{{n+d \choose d}}$.} of a polynomial-size determinant polynomial, and suggests (among other things) to develop an algorithmic approach to do the same. Equivalence testing for the determinant is the related problem of checking if a given polynomial is in the orbit of the determinant polynomial.  \\

The determinant $\Det := \{\Det_w\}_{w \in \N}$ is complete (under $p$-projections) for the class $\mathsf{VBP}$ \footnote{Class $\mathsf{VBP}$ consists of polynomial families that are computable by polynomial-size algebraic branching programs (ABP). ABP is a powerful model for computing polynomials that subsumes arithmetic formulas.}\cite{MahajanV97}. Likewise, the family of \emph{iterated matrix multiplication} polynomials is also complete for the class $\mathsf{VBP}$, and has been used quite a bit in proving arithmetic circuit lower bounds. In this sense, the two families have the same complexity\footnote{Consider a class $\CAL{C}$ of arithmetic circuits that is closed under affine projections, e.g., the class of depth three circuits. A super-polynomial lower bound for circuits in $\CAL{C}$ computing the determinant implies a super-polynomial lower bound for circuits in $\CAL{C}$ computing the iterated matrix multiplication polynomial (IMM) and vice versa. Thus, Det and IMM have the same complexity, and one may study the ``permanent versus IMM'' problem in the same vein as the permanent versus determinant problem. On the other hand, if $\CAL{C}$ is not closed under affine projections, then there are classes (like multilinear formulas) for which a super-polynomial lower bound is known for determinant  \cite{Raz09} but not for IMM.}. But, do they have similar equivalence testing complexity? Our work here, in conjunction with \cite{GargGKS19} and \cite{KayalNST19}, gives an answer to this question. \\

\textbf{Iterated matrix multiplication.} Two natural versions of the iterated matrix multiplication polynomial are: a) $\oIMM_{w,d}$ that is defined as the $(1,1)$ entry of the product of $d$ many $w\times w$ symbolic matrices (i.e., matrices whose entries are distinct variables), and b) $\IMM_{w,d}$ that is defined as the trace of the product of $d$ many $w\times w$ symbolic matrices. The $\oIMM := \{\oIMM_{w,d}\}_{w,d \in \N}$ family has been studied more from the lower bound perspective \cite{NisanW97,FournierLMS15,KumarS17, KayalNS20, KayalS15a, KayalST18, ChillaraLO19} because it naturally captures the algebraic branching program model (see Section \ref{subsection: abps and matrix products}). On the other hand, $\IMM := \{\IMM_{w,d}\}_{w,d \in \N}$ has been studied in \cite{Grochow12, Lan15, Gesmundo15, GIP17}\footnote{Actually, \cite{GIP17} studied a related polynomial $\text{Tr-Pow}_{w,d}$, which is the trace of the $d$-th power of a $w \times w$ symbolic matrix. They showed that a particular line of attack prescribed by GCT, namely \emph{orbit occurrence obstructions}, cannot prove super-linear lower bound on the ``Tr-Pow complexity'' of the permanent. We are not aware of a similar result (or, more generally, a result that rules out the \emph{occurrence obstructions} approach as in \cite{BurgisserIP16, IkenmeyerP16}) with Tr-Pow (or Det) replaced by Tr-IMM.} owing to its nice structural properties (pertaining to its group of symmetries and the associated Lie algebra) that may be quite useful for studying GCT methods when applied to the ``Permanent versus $\IMM$'' problem. $\oIMM$ and $\IMM$ are also complete for the class $\mathsf{VBP}$. Interestingly, the three polynomials -- $\Det_w$, $\oIMM_{w,d}$ and $\IMM_{w,d}$ -- are characterized by their respective groups of symmetries \cite{Frobenius97, KayalNST19, Gesmundo15}. \\

\textbf{Equivalence testing for iterated matrix multiplication.} How does equivalence testing for $\oIMM$ and $\IMM$ relate to that of $\Det$? In \cite{KayalNST19}, a randomized polynomial-time equivalence testing algorithm was given for $\oIMM$ over $\C,\Q$ and finite fields. Comparing this with the above-mentioned results on determinant equivalence test \cite{Kayal12, GargGKS19}, we see that the complexity of equivalence tests for $\Det$ and $\oIMM$ are quite different over $\Q$ (unless integer factoring is easy). Is this also the case between $\Det$ and $\IMM$? One may be tempted to say `yes' owing to the closeness of the definitions of $\oIMM$ and $\IMM$. However, contrary to this first impression, we show that equivalence testing for $\Det$ and that for $\IMM$ are randomized polynomial-time Turing reducible to each other over $\C$, $\Q$ and finite fields\footnote{The reduction works over any field $\F$ of characteristic zero or sufficiently large. We also require that univariate polynomial factoring over $\F$ can be done efficiently.} (see Corollary \ref{corollary: equivalence of det, trace and fmai}). Thus, viewed along this line, $\Det$ and $\IMM$ are closer to each other than to $\oIMM$.\footnote{Talking of the difference between the `trace model' and the `(1,1) model', a recent work \cite{BlaserIMPS20} showed that in the non-commutative setting, the border width complexity and the width complexity of a polynomial are \emph{not} always equal for the trace-ABP model, unlike the case for the classical $(1,1)$-ABP model \cite{Nisan91}.} For brevity, we would henceforth denote the equivalence testing problems for $\Det$ and $\IMM$ by $\EDET$ and $\EIMM$ respectively. \\

\textbf{Connections to algebra isomorphism and $3$-tensor isomorphism.} As mentioned before, cubic form equivalence, algebra isomorphism and $3$-tensor isomorphism are polynomial-time equivalent. Moreover, degree-$d$ form equivalence reduces to cubic form equivalence \cite{AgrawalS05, AgrawalS06} and $d$-tensor isomorphism reduces to $3$-tensor isomorphism \cite{GrowchowQ19} in polynomial-time, if $d$ is bounded. $\Det$ and $\IMM$ being two important polynomial families, we wonder if $\EDET$ and $\EIMM$ can be linked with any natural case of algebra isomorphism. Further, do $\EDET$ and $\EIMM$ reduce to any special case of cubic form equivalence or $3$-tensor isomorphism? We show that the answers to these are `yes'. The relevant problems are the \emph{full-matrix algebra isomorphism} ($\FMAI$) problem and the $3$-tensor isomorphism problem for the family of \emph{matrix multiplication tensors} ($\MMTI$). \\


$\FMAI$ is a well-studied problem in computer algebra which is defined as follows: Given a basis of a matrix algebra $\CAL{A} \subseteq \CAL{M}_m(\F)$, check if $\CAL{A}$ is isomorphic\footnote{i.e., isomorphic as algebras over $\F$.} to $\CAL{M}_w(\F)$, where $\CAL{M}_m(\F)$ is the algebra of $m\times m$ matrices over $\F$ and $\textnormal{dim}_{\F}(\CAL{A}) = w^2$; if yes, then output an isomorphism from $\CAL{A}$ to $\CAL{M}_w(\F)$. A randomized polynomial-time algorithm to solve $\FMAI$ over finite fields was given in \cite{Ronyai87,Ronyai90}, whereas over $\Q$ a randomized Turing reduction from $\FMAI$  to integer factoring was shown in \cite{IvanyosRS12,CremonaFOSS15}. The reduction is polynomial-time if $\textnormal{dim}_{\Q}(\CAL{A})$ is bounded. Also, \cite{BabaiR90, Eberly89} gave a randomized polynomial-time algorithm that outputs an isomorphism from $\CAL{A}\otimes_{\Q} \mathbb{L}$ to $\CAL{M}_w(\mathbb{L})$, where $\mathbb{L}$ is a degree $w$ extension field of $\Q$, if $\CAL{A}$ is isomorphic to $\CAL{M}_w(\Q)$. The decision version of $\FMAI$ over $\Q$ is in $\mathsf{NP} \cap \mathsf{coNP}$  \cite{Ronyai92}. The results for $\EDET$ in \cite{GargGKS19} were obtained by giving a randomized poly-time Turing reduction from $\EDET$ to $\FMAI$. In this work, we give a randomized polynomial-time Turing reduction from $\EIMM$ to $\EDET$ (Theorem \ref{theorem: reduction from equivalence testing of trace to det}). \\

A $d$-tensor is a degree-$d$ form (i.e., a degree-$d$ homogeneous polynomial) $f(\vecx_1, \vecx_2, \ldots, \vecx_d)$ whose every monomial has exactly one variable from each of the sets $\vecx_1, \vecx_2, \ldots, \vecx_d$. The $d$-tensor isomorphism problem is the following: Given two $d$-tensors $f(\vecx_1, \vecx_2, \ldots, \vecx_d)$ and $g(\vecx_1, \vecx_2, \ldots, \vecx_d)$ decide if there exist $A_1 \in \GL(|\vecx_1|, \F), \ldots, A_d \in \GL(|\vecx_d|, \F)$ such that $f = g(A_1 \vecx_1, A_2\vecx_2, \ldots, A_d \vecx_d)$. The $d$-tensor isomorphism problem for a family of $d$-tensors is defined accordingly, just like equivalence testing for a family of polynomials. $\MMTI$ is the $3$-tensor isomorphism problem for the family of matrix multiplication tensors $\{\IMM_{w,3}\}_{w \in \N}$. The matrix multiplication tensor $\IMM_{w,3}$ is a crucial object in the study of asymptotically fast algorithms for multiplying two $w \times w$ matrices. In this paper, we give a randomized polynomial-time Turing reduction from $\FMAI$ to $\MMTI$ (Theorem \ref{theorem: reduction from fmai to emi}). Further, it follows easily from the symmetries of $\IMM_{w,d}$ (\cite{Gesmundo15}, see Lemma \ref{fact: symmetries of IMM}) that $\MMTI$ reduces in polynomial-time to $\EIMM$. \\

Thus, the above results together with the reduction in \cite{GargGKS19} show that the four problems -- $\EIMM$, $\EDET$, $\FMAI$ and $\MMTI$ -- are randomized polynomial-time Turing reducible to each other. Although, the equivalence between $\MMTI$ and $\FMAI$ has the same essence as the equivalence between 3-tensor isomorphism (or cubic form equivalence) and algebra isomorphism, our proofs are quite different from the proofs in \cite{GrowchowQ19, FGS19, AgrawalS05, AgrawalS06}\footnote{The reductions in these prior works are deterministic and hold for the decision versions of the problems, whereas the reductions here are randomized and for the search versions of the problems.}. In particular, we do not see any easy adaptation of the arguments in \cite{GrowchowQ19, FGS19, AgrawalS05, AgrawalS06} leading to the results mentioned above. Our proofs link $\MMTI$ with $\FMAI$, via $\EIMM$ and $\EDET$, by exploiting the structure of the Lie algebra of $\IMM_{w,d}$ (which is in the same spirit as the reduction from $\EDET$ to $\FMAI$ in \cite{GargGKS19} using the Lie algebra of $\Det_{w}$). Also, the reduction from $d$-tensor isomorphism (similarly, degree-$d$ form equivalence) to $3$-tensor isomorphism (respectively, cubic form equivalence) in \cite{GrowchowQ19, AgrawalS05, AgrawalS06} is efficient only if $d$ is a constant. Whereas, our randomized reduction from testing equivalence to $\IMM_{w,d}$ to $\MMTI$ runs in time $\poly(w,d)$.

\subsection{The results (stated formally)}
The polynomial $\IMM_{w,d} := \text{tr}(Q_0\cdot Q_1\ldots Q_{d-1})$, where $Q_k$ is a $w\times w$ symbolic matrix in $\vecx_k$ variables. Throughout, we will assume that $w \geq 2$, $d \geq 3$ and $\char(\F) = 0$ or $> (w^2d)^5$, and univariate polynomial factoring over $\F$ can be done in probabilistic polynomial time. The restriction on the characteristic of $\F$ has not been optimized in this paper. 


\begin{theorem}[$\EIMM$ to $\EDET$]\label{theorem: reduction from equivalence testing of trace to det}
There is a randomized algorithm that takes as input black-box access to an $n$-variate degree-$d$ polynomial $f$ and oracle access to $\EDET$ over $\F$, and does the following with high probability: If there is a $w\in \N$ such that $f$ is equivalent to $\IMM_{w,d}$, then it outputs an $A \in \GL(n,\F)$ such that $f = \IMM_{w,d}(A\vecx)$; otherwise it outputs `No such $w$ exists'. The algorithm runs in $\poly(n, \beta)$ time, where $\beta$ is the bit length of the coefficients of $f$. 
\end{theorem}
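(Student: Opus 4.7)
The plan is to split the reduction into two phases: first reduce $\EIMM$ to a \emph{multilinear} equivalence variant (as considered in \cite{Grochow12}), in which the unknown transformation must be block-diagonal with respect to a known decomposition of $\F^n$; then reduce the multilinear variant to $\EDET$.

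\textbf{Phase 1 (block decomposition via the Lie algebra).} I would compute a basis of $\mathfrak{g}_f$ by solving the standard defining linear system. If $f = \IMM_{w,d}(A\vecx)$, then $\mathfrak{g}_f = A^{-1}\mathfrak{g}_{\IMM_{w,d}}A$, and the known structure of $\mathfrak{g}_{\IMM_{w,d}}$ (isomorphic to $(\mathfrak{gl}_w)^d$ modulo the one-dimensional diagonal centre) implies that its centre $\mathfrak{z}(\mathfrak{g}_f)$ has dimension $d-1$ and acts on $\F^n$ with exactly $d$ eigenspaces, each of dimension $w^2$, corresponding to the $\IMM$ blocks $\vecy_0, \ldots, \vecy_{d-1}$. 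Diagonalize a random element of $\mathfrak{z}(\mathfrak{g}_f)$ (with high success probability by a Schwartz--Zippel argument) to recover these blocks, and verify that $f$ is set-multilinear in the decomposition; otherwise reject. This reduces the task to finding block-diagonal $A_0, \ldots, A_{d-1}$ such that $f = \IMM_{w,d}(A_0\vecy_0, \ldots, A_{d-1}\vecy_{d-1})$.

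\textbf{Phase 2 (from multilinear equivalence to $\EDET$).} The semisimple part of $\mathfrak{g}_f$ further decomposes into $d$ simple ideals isomorphic to $\mathfrak{sl}_w$; the $k$-th ideal acts on $\vecy_k$ as left multiplication and on $\vecy_{k-1}$ as right multiplication, endowing each $\vecy_k$ with a bimodule structure over $\CAL{M}_w(\F) \otimes \CAL{M}_w(\F)^{\mathrm{op}}$. This bimodule structure pins down, up to the residual symmetries of $\Det_w$, a $w \times w$ matrix layout of each block and in particular a degree-$w$ polynomial $g_k$ on $\vecy_k$ that is equivalent to $\Det_w$. I would query the $\EDET$ oracle on each $g_k$ to obtain a transformation $C_k$ with $g_k = \Det_w(C_k\vecy_k)$, which fixes the row/column basis of $\vecy_k$ up to the $(L, R)$-freedom with $\det(L)\det(R) = 1$. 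Resolve this residual freedom by enforcing the cyclic consistency coming from $\IMM$'s symmetries (the right-action basis of $\vecy_{k-1}$ must agree with the left-action basis of $\vecy_k$ via the shared $\mathfrak{sl}_w$-factor) to produce block-diagonal $A_k$'s; composed with the Phase 1 basis change, these yield the required $A \in \GL(n,\F)$, and a final check verifies $f = \IMM_{w,d}(A\vecx)$.

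\textbf{Main obstacle.} The main technical challenge is Phase 2: simultaneously resolving the per-block $\Det_w$-symmetry ambiguities so that the cyclic $\IMM$-consistency holds. Concretely, one has to set up a system coupling the per-block $(L_k, R_k)$ freedoms and argue that it admits a solution precisely when $f$ is equivalent to $\IMM_{w,d}$. Secondary technical issues include constructing each $g_k$ explicitly from the bimodule structure without circularity, decomposing the semisimple part of $\mathfrak{g}_f$ into simple ideals over $\F$, and controlling the randomization --- all expected to go through under the paper's assumptions on $\char(\F)$ and efficient univariate factoring over $\F$.
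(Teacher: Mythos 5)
Your overall architecture matches the paper's: use the Lie algebra $\G_f$ to recover the $d$ blocks of dimension $w^2$ (hence $w$), reduce to the block-diagonal/tensor version, and then invoke $\EDET$ once per block and glue. Your Phase 1 via a generic element of the centre of $\G_f$ is a plausible variant of what the paper does with a random element of $\G_f$ itself and its irreducible invariant subspaces (Algorithm \ref{alg:irr_inv_sub_space_gf}), but two points are missing. First, your centre argument needs the exact equality $\GIMM = \CAL{B}_0 + \cdots + \CAL{B}_{d-1}$ over $\F$ (so that the centre is exactly the $(d-1)$-dimensional space of block-scalar matrices); the paper deliberately works only with the containment and with Claim \ref{claim: gtrimm is block diagonal} plus Lemma \ref{lemma:irreducible_space_limm}, avoiding that stronger statement. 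Second, recovering the blocks as an unordered collection is not enough: you must recover the cyclic order in which they occur in the matrix product (up to rotation/reflection), otherwise the instance ``find block-diagonal $A_0,\ldots,A_{d-1}$ with $f=\IMM_{w,d}(A_0\vecy_0,\ldots,A_{d-1}\vecy_{d-1})$'' may be unsolvable even for equivalent $f$. The paper does this with an evaluation-dimension test (Claim \ref{claim:reordering_layer_spaces}, Observation \ref{obs:sigma_compute}: adjacent pairs of blocks give $w^2$, non-adjacent give $w^4$); your adjacency data from the ideal--block incidence could in principle serve, but you never say so.

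The genuine gap is Phase 2: the two items you yourself flag as obstacles are exactly the substance of the reduction, and they are not carried out. (i) Producing a degree-$w$ polynomial $g_k$ on a block that is equivalent to $\Det_w$ ``from the bimodule structure'' presupposes an explicit identification of the associative envelope of the $k$-th simple ideal with $\CAL{M}_w(\F)$, i.e.\ an instance of $\FMAI$ --- which you cannot invoke here without circularity (the whole point of Corollary \ref{corollary: equivalence of det, trace and fmai} is that $\FMAI$ is only known to reduce to $\EDET$ through Theorems \ref{theorem: reduction from equivalence testing of trace to det} and \ref{theorem: reduction from fmai to emi}). The paper sidesteps this by reconstructing a width-$w^2$ set-multilinear ABP for $h$ via \cite{KlivansS03}, observing that the $k$-th layer equals $T_{k-1}^{-1}(I_w\otimes X_k)T_k$, and extracting $g_k$ as a $w$-th root of its determinant by black-box factorization. (ii) Resolving the per-block $(L_k,R_k)$ and transpose ambiguities consistently around the cycle is precisely the content of Claim \ref{claim: uniqueness from transpose}, Claim \ref{claim: Uniqueness of commuting matrices} and Observations \ref{observation:exactly_one_succeds}--\ref{obs:structure of widehatY ABP}: $T\cdot(I_w\otimes X)=(I_w\otimes X)\cdot S$ forces $T=S=M\otimes I_w$, no nonzero $T,S$ satisfy $T\cdot(I_w\otimes X)=(I_w\otimes X)^T\cdot S$, and these facts let one splice the $\EDET$ outputs back into a single trace ABP by solving linear systems. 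Your proposal states that such a coupled system ``has to be set up'' and argued solvable exactly when $f$ is equivalent to $\IMM_{w,d}$, but gives no mechanism for doing so; until that argument (or an equivalent of the paper's ABP-based gluing) is supplied, the reduction is not proved.
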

The reduction is given in Section \ref{sec: reduction from IMM to DET}. Theorem \ref{theorem: reduction from equivalence testing of trace to det} implies a randomized poly-time algorithm for $\EIMM$ over $\C$ and finite fields, and also over $\Q$ (provided the algorithm has access to integer factoring oracle and $w$ is bounded) via known results on $\EDET$ \cite{Kayal12, GargGKS19}. Two other remarks: 
\begin{enumerate}
\item  \emph{No knowledge of $w$}: The algorithm requires no knowledge of $w$, if the input polynomial $f$ is equivalent to $\IMM_{w,d}$ for some $w\in \N$ then the algorithm finds such a $w$. 

\item \emph{Reduction to $\IMMTI$}: The \emph{tensor isomorphism} problem for $\IMM$ (denoted $\IMMTI$) is as follows: 
Given blackbox access to a $d$-tensor $g(\vecx_0, \ldots, \vecx_{d-1})$, check if there are $B_0, \ldots, B_{d-1} \in \GL(w^2, \F)$ such that $g= \IMM_{w,d}(B_0\vecx_0, \ldots, B_{d-1}\vecx_{d-1})$, and if yes then output such $B_0, \ldots, B_{d-1}$. 
The algorithm in Theorem \ref{theorem: reduction from equivalence testing of trace to det} first reduces $\EIMM$ to $\IMMTI$ (finding $w$ in this step), and then solves $\IMMTI$ using $\EDET$ oracle over $\F$. The reduction from $\EIMM$ to $\IMMTI$ (which resembles a similar reduction used in the equivalence test for $\oIMM$ \cite{KayalNST19}) does not require oracle access to $\EDET$. A randomized polynomial-time algorithm for $\IMMTI$ \emph{over $\C$} was given in \cite{Grochow12}, but the algorithm there does not reduce $\IMMTI$ to $\EDET$. 
\end{enumerate}

\begin{theorem}[$\FMAI$ to $\MMTI$] \label{theorem: reduction from fmai to emi}
There is a randomized algorithm that takes as input a basis of an algebra $\CAL{A} \subseteq \CAL{M}_m(\F)$, and oracle access to $\MMTI$, and does the following with high probability: If $\CAL{A} \cong \CAL{M}_w(\F)$, where $w^2 = \textnormal{dim}_{\F}(\CAL{A})$, then it outputs `Yes'; otherwise it outputs `No such $w\in \N$ exists'. If  the algorithm outputs `Yes', then it also outputs an algebra isomorphism from $\CAL{A}$ to $\CAL{M}_w(\F)$. The algorithm runs in $\poly(m, \beta)$ time, where $\beta$ is the bit length of the entries of the input basis matrices. 
\end{theorem}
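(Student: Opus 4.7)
The plan is to reduce $\FMAI$ to $\MMTI$ by associating to each candidate matrix algebra $\CAL{A}\subseteq \CAL{M}_m(\F)$ a natural $3$-tensor encoding both its multiplication and its trace form, and then invoking the $\MMTI$ oracle on that tensor.

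I would begin with elementary preprocessing on the input basis $e_1,\ldots,e_{w^2}$ of $\CAL{A}$: verify that $w:=\sqrt{\dim_\F\CAL{A}}$ and $k:=m/w$ are integers, and compute the Jacobson radical and the center of $\CAL{A}$, rejecting unless the radical is trivial and the center is one-dimensional. All of these are necessary conditions for $\CAL{A}\cong \CAL{M}_w(\F)$ and reduce to polynomial-time linear algebra. Next, I would form the trilinear trace tensor
\[T_{\CAL{A}}(\vecx,\vecy,\vecz) \;:=\; \textnormal{tr}_m\!\left(\Big(\textstyle\sum_i x_i e_i\Big)\Big(\textstyle\sum_i y_i e_i\Big)\Big(\textstyle\sum_i z_i e_i\Big)\right).\]
If $\CAL{A}\cong \CAL{M}_w(\F)$ via some algebra isomorphism $\phi$, then the structure theorem for semisimple algebras forces the embedding $\CAL{A}\hookrightarrow \CAL{M}_m(\F)$ to be, up to conjugation in $\CAL{M}_m$, the map $M\mapsto I_k\otimes \phi(M)$. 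Thus $\textnormal{tr}_m(XYZ)=k\cdot \textnormal{tr}_w(\phi(X)\phi(Y)\phi(Z))$, and the rescaled tensor $\tilde T:=k^{-1}T_{\CAL{A}}$ satisfies $\tilde T = \IMM_{w,3}(\Phi\vecx,\Phi\vecy,\Phi\vecz)$, where $\Phi\in\GL(w^2,\F)$ is the matrix of $\phi$ in the given bases; in particular, $\tilde T$ lies in the $\MMTI$-orbit of $\IMM_{w,3}$.

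I would then query the $\MMTI$ oracle on $\tilde T$. A `No' answer yields `No'. A `Yes' answer returns $(B_0,B_1,B_2)\in \GL(w^2,\F)^3$ with $\tilde T = \IMM_{w,3}(B_0\vecx,B_1\vecy,B_2\vecz)$; let $\beta_i:\CAL{A}\to\CAL{M}_w(\F)$ denote the linear map with matrix $B_i$. Any hypothetical algebra iso $\phi$ would place $(\beta_i\circ\phi^{-1})_{i=0,1,2}$ in $\text{Stab}(\IMM_{w,3})$. By Lemma~\ref{fact: symmetries of IMM}, the identity component of this stabilizer consists of the ``gauge'' triples $(M\mapsto P_0MP_1^{-1},\,M\mapsto P_1MP_2^{-1},\,M\mapsto P_2MP_0^{-1})$ for $P_0,P_1,P_2\in\GL(w,\F)$, together with discrete cyclic-shift and transpose involutions. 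Hence, for the correct discrete variant, $\phi$ has the form $\phi(X)=U\beta_0(X)V$ for some $U,V\in\GL(w,\F)$, and the algebra-homomorphism requirement $\phi(XY)=\phi(X)\phi(Y)$ collapses to the twisted-multiplicativity constraint $\beta_0(XY) = \beta_0(X)\,S\,\beta_0(Y)$ with $S:=VU$. This is a \emph{linear} system in the entries of $S$, obtained by letting $(X,Y)$ range over basis pairs of $\CAL{A}$. Taking $S=\beta_0(1_{\CAL{A}})^{-1}$ (which, when it exists, is forced by unitality), I would check whether $\phi(X):=\beta_0(X)\,S$ is multiplicative on a basis; if so, $\phi$ is the desired algebra isomorphism. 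If not, I try the (constantly many) other discrete symmetry variants built from $\beta_1,\beta_2$ and transposes, and output `No' if none succeeds.

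The main obstacle I expect is this final extraction step: converting three independent $\GL(w^2,\F)$-transformations returned by the oracle into a single matrix defining a \emph{unital algebra} isomorphism. This hinges on the precise description of $\text{Stab}(\IMM_{w,3})$ afforded by Lemma~\ref{fact: symmetries of IMM} and on the fortunate fact that, thanks to the large gauge symmetry, the constraints on $S$ are linear over $\F$ and do not require passing to an algebraic closure. A secondary concern is the converse direction -- that a `Yes' answer from $\MMTI$ forces $\CAL{A}\cong\CAL{M}_w(\F)$ -- but this is automatic: either the extraction produces an explicit isomorphism (certifying $\CAL{A}\cong\CAL{M}_w(\F)$) or fails for every discrete variant, in which case no such isomorphism can exist, since any $\phi$ would be recovered by one of the variants via the symmetry analysis.
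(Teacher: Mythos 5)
Your route is genuinely different from the paper's: you build a $3$-tensor directly from the \emph{ambient} trace form $\tr_m(XYZ)$ of the given embedding $\CAL{A}\subseteq\CAL{M}_m(\F)$, whereas the paper never touches the embedding at all -- it passes to the left-regular representation matrices $L_{i,j}$, constructs (by solving a linear system) a degree-four tensor whose Lie algebra contains the conjugated spaces $\CAL{B}'_k$, invokes the characterization of $\IMM_{w,4}$ by its Lie algebra (Lemma \ref{lemma: characterization of trace by its lie algebra} and Corollary \ref{corollary: lie algebra characterization for equivalent polynomials}), reduces $\IMMTI$ to $\MMTI$ via Theorem \ref{theorem: reduction from equivalence testing for degree d to degree 3}, and finally reads off $\phi$ from $B_0L_{i,j}^TB_0^{-1}=I_w\otimes F_{i,j}^T$ and $B_1L_{i,j}B_1^{-1}=I_w\otimes F_{i,j}$. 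Your extraction step is essentially sound: since each returned $B_k$ acts on its own slot, the composition with a hypothetical $\Phi$ is a slot-preserving symmetry, and Lemma \ref{fact: symmetries of IMM} then forces $\beta_0(X)=C_0\phi(X)C_1^{-1}$, so $X\mapsto\beta_0(X)\beta_0(1_{\CAL{A}})^{-1}$ is a conjugate of $\phi$ and your final multiplicativity check certifies soundness. So the back end of your argument works.

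The genuine gap is in the front end, where you assert that the structure theorem forces the embedding to be conjugate to $M\mapsto I_k\otimes\phi(M)$ with $k=m/w$. This presumes $1_{\CAL{A}}=I_m$, which the $\FMAI$ problem (as used in the paper) does not grant: the unit of $\CAL{A}$ may be a proper idempotent $e$ of $\CAL{M}_m$, in which case $\F^m=e\F^m\oplus(1-e)\F^m$, the action on the second summand is zero, the multiplicity is $k=\mathrm{rank}(e)/w$, and $w$ need not divide $m$ at all. As written, your preprocessing rejects such yes-instances, and your rescaling by $(m/w)^{-1}$ is wrong even when it does not reject; the fix is to compute the unit $e$ of $\CAL{A}$ and use $k=\mathrm{rank}(e)/w$. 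A second, related gap concerns positive characteristic: you need $k$ to be invertible in $\F$, but the paper only assumes $\char(\F)=0$ or $\char(\F)>(w^2d)^5$, which bounds nothing in terms of $m$; if $\char(\F)$ divides $k$ then $\tr_m(XYZ)$ vanishes identically on $\CAL{A}$ and your tensor carries no information. Both issues disappear if, like the paper, you replace the ambient trace by the trace of the left-regular representation (there the multiplicity is exactly $w$, which is smaller than the characteristic), at which point your construction becomes a clean and arguably simpler alternative to the paper's Lie-algebra-characterization route, and has the added feature of using only a degree-three tensor rather than a degree-four one.
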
 
The algorithm is given in Section \ref{subsec: proof of reduction from fmai to imm}. It uses a characterization of $\IMM_{w,d}$ by the Lie algebra $\GIMM$ of its group of symmetries (Lemma \ref{lemma: characterization of trace by its lie algebra}) along with a nice choice of basis of $\GIMM$ (Section \ref{section: lie algebra of imm}) to reduce $\FMAI$ to degree four $\IMMTI$ in \emph{deterministic} polynomial time, which in turn reduces to $\MMTI$ in randomized polynomial time (Theorem \ref{theorem: reduction from equivalence testing for degree d to degree 3}). Two more remarks on Theorem \ref{theorem: reduction from fmai to emi}:
\begin{enumerate}

\item \emph{$\MMTI$ to $\EIMM$}: Using oracle access to $\EIMM$, it is easy to solve $\MMTI$ (in fact $\IMMTI$) in polynomial time: Since a polynomial identity test at the end of a $\IMMTI$ algorithm ensures that the output of the algorithm is correct, it suffices to prove that if the input to a $\EIMM$ algorithm is a $d$-tensor $f$ that is isomorphic to $\IMM_{w,d}$, then the algorithm outputs $d$ matrices $B_0,\ldots, B_{d-1}$ such that $f(\vecx) = \IMM_{w,d}(B_0\vecx_0, \ldots, B_{d-1}\vecx_{d-1})$. This is true as any algorithm for $\EIMM$ outputs a block-diagonal matrix $B$ such that $f(\vecx) = \IMM_{w,d}(B\vecx)$ (from Lemma \ref{fact: symmetries of IMM}). Matrices $B_0,\ldots, B_{d-1}$ can be easily derived from $B$. 

\item \emph{A reduction from $\FMAI$ to $\EDET$}: A Turing reduction from $\FMAI$ to $\EDET$ over $\F$ was given in \cite{GargGKS19} that runs in exponential time. We improve this run-time significantly: Theorems \ref{theorem: reduction from equivalence testing of trace to det} and \ref{theorem: reduction from fmai to emi} imply that $\FMAI$ is in fact randomized polynomial-time Turing reducible to $\EDET$.
\end{enumerate}

\begin{corollary}\label{corollary: equivalence of det, trace and fmai}
It follows from Theorems \ref{theorem: reduction from equivalence testing of trace to det} and \ref{theorem: reduction from fmai to emi}, and the randomized polynomial-time Turing reduction from $\EDET$ to $\FMAI$ in \cite{GargGKS19}, that the four problems -- $\EIMM$, $\EDET$, $\FMAI$ and $\MMTI$ -- are randomized polynomial-time equivalent under Turing reductions (see Figure \ref{figure: connections between problems} below). 
\end{corollary}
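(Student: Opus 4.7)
The plan is to close a cycle of randomized polynomial-time Turing reductions among the four problems, which by transitivity establishes their mutual equivalence. The four reductions we need are: $\EIMM \to \EDET$, $\EDET \to \FMAI$, $\FMAI \to \MMTI$, and $\MMTI \to \EIMM$. Three of these are handed to us directly: Theorem \ref{theorem: reduction from equivalence testing of trace to det} supplies the first, the main result of \cite{GargGKS19} supplies the second, and Theorem \ref{theorem: reduction from fmai to emi} supplies the third. So the only step that needs a small argument of its own is the reduction from $\MMTI$ to $\EIMM$.

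For that last step, I would proceed as sketched in Remark 1 following Theorem \ref{theorem: reduction from fmai to emi}. Given a $3$-tensor $f(\vecx_0,\vecx_1,\vecx_2)$ as an input to $\MMTI$, view it simply as an $n$-variate polynomial with $n=3w^2$ and hand it to an $\EIMM$ oracle. If $f$ is isomorphic (as a tensor) to $\IMM_{w,3}$, then in particular $f$ is equivalent to $\IMM_{w,3}$, so the oracle returns some $B \in \GL(3w^2,\F)$ with $f(\vecx) = \IMM_{w,3}(B\vecx)$. By the symmetry characterization of $\IMM_{w,3}$ recorded in Lemma \ref{fact: symmetries of IMM}, every such $B$ must already be block-diagonal (up to the allowed cyclic permutation on the three variable blocks) with blocks $B_0,B_1,B_2 \in \GL(w^2,\F)$, and these can be read off immediately to produce the tensor isomorphism. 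Correctness is then verified by a standard randomized polynomial identity test on $f(\vecx) - \IMM_{w,3}(B_0\vecx_0,B_1\vecx_1,B_2\vecx_2)$, which also lets us reject the `no' instances with high probability: if $f$ is not isomorphic to $\IMM_{w,3}$ then either the $\EIMM$ oracle returns `No', or whatever it returns will fail the identity test.

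Putting the four reductions together, we obtain the cycle
\begin{equation*}
\EIMM \;\longrightarrow\; \EDET \;\longrightarrow\; \FMAI \;\longrightarrow\; \MMTI \;\longrightarrow\; \EIMM,
\end{equation*}
and since each arrow is a randomized polynomial-time Turing reduction (and Turing reductions of this kind compose, with the error probability amplified down to any desired inverse polynomial by independent repetition of the probabilistic sub-routines), all four problems are randomized polynomial-time equivalent under Turing reductions. I do not anticipate any real obstacle: the only moving part beyond invoking the three theorems is noticing that the `yes' output of $\EIMM$ is automatically of the block-diagonal form demanded by $\MMTI$, which is an immediate consequence of Lemma \ref{fact: symmetries of IMM}.
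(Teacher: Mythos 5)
Your proposal is correct and matches the paper's own argument: the corollary is obtained by composing Theorem \ref{theorem: reduction from equivalence testing of trace to det}, the reduction from $\EDET$ to $\FMAI$ of \cite{GargGKS19}, Theorem \ref{theorem: reduction from fmai to emi}, and the $\MMTI$-to-$\EIMM$ step via the symmetries of $\IMM_{w,d}$ (Lemma \ref{fact: symmetries of IMM}), exactly as in Remark~1 following Theorem \ref{theorem: reduction from fmai to emi} and Figure \ref{figure: connections between problems}. Your handling of the block structure of the returned transformation (block-diagonal up to the cyclic rotation/transpose allowed by the symmetry group, followed by a final PIT) is the same observation the paper relies on.
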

As mentioned before, the next theorem (proved in Appendix \ref{sec: reduction from degree d to degree 3}) is used in the proof of Theorem \ref{theorem: reduction from fmai to emi}. 
\begin{theorem}[$\IMMTI$ to $\MMTI$]\label{theorem: reduction from equivalence testing for degree d to degree 3}
There is a randomized algorithm that takes as input black-box access to an $n$-variate $d$-tensor $f(\vecx_0, \ldots, \vecx_{d-1})$, and oracle access to $\MMTI$, and does the following with high probability: If $f$ is isomorphic to $\IMM_{w,d}$, then it outputs $B_0, B_1, \ldots, B_{d-1} \in \GL(w^2,\F)$ such that $f = \IMM_{w,d}(B_0\vecx_0, \ldots, B_{d-1}\vecx_{d-1})$; otherwise it outputs `No'. The algorithm runs in $\poly(n, \beta)$ time, where $\beta$ is the bit length of the coefficients of $f$.
\end{theorem}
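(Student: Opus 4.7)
My plan is induction on $d$, using the $\MMTI$ oracle only once, at the base case $d=3$. For $d=3$, invoke $\MMTI$ on $f$ directly. For $d\geq 4$: first infer $w$ from the variable partitioning (outputting ``No'' if the block sizes are not all equal to some $w^2$). Sample a random $\vecm_{d-1}\in\F^{w^2}$ and define $f'(\vecx_0,\ldots,\vecx_{d-2}):=f(\vecx_0,\ldots,\vecx_{d-2},\vecm_{d-1})$. If $f=\IMM_{w,d}(B_0^*\vecx_0,\ldots,B_{d-1}^*\vecx_{d-1})$, letting $\sigma\colon \F^{w^2}\cong\F^{w\times w}$ be the natural identification and $N:=\sigma(B_{d-1}^*\vecm_{d-1})$, a Schwartz--Zippel argument (applied to $\det\sigma(B_{d-1}^*\vecy)$, which is nonzero since $B_{d-1}^*$ is invertible) ensures that $N\in\GL(w,\F)$ whp, and cyclicity of the trace yields
\[
f' \;=\; \IMM_{w,d-1}\bigl(\overline{L}_N B_0^*\vecx_0,\;B_1^*\vecx_1,\ldots,B_{d-2}^*\vecx_{d-2}\bigr),
\]
where $\overline{L}_N\in\GL(w^2,\F)$ is the unique matrix with $\sigma(\overline{L}_N\vecy)=N\sigma(\vecy)$. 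Recursively apply the algorithm to $f'$ to obtain $D_0,D_1,\ldots,D_{d-2}\in\GL(w^2,\F)$ with $f'=\IMM_{w,d-1}(D_0\vecx_0,\ldots,D_{d-2}\vecx_{d-2})$ (output ``No'' if the recursion does); then discard $D_0$.

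By Lemma \ref{fact: symmetries of IMM} applied to $\IMM_{w,d-1}$, there exist $A_0,A_1,\ldots,A_{d-2}\in\GL(w,\F)$ such that $\sigma(D_k\vecx_k)=A_k\sigma(B_k^{**}\vecx_k)A_{k+1}^{-1}$ for $k=0,1,\ldots,d-2$, with $B_0^{**}:=\overline{L}_N B_0^*$, $B_k^{**}:=B_k^*$ for $k\geq 1$, and $A_{d-1}:=A_0$ (cyclic). A straightforward cancellation of consecutive $A_{k+1}^{-1}A_{k+1}$ factors in the product $\prod_{k=1}^{d-2}\sigma(D_k\vecx_k)$ then shows that the ``outer'' pair $(B_0,B_{d-1}):=(\overline{R}_{A_1^{-1}} B_0^*,\;\overline{L}_{A_0} B_{d-1}^*)$ lies in $\GL(w^2,\F)^2$ and satisfies
\[
f \;=\; \IMM_{w,d}\bigl(B_0\vecx_0,D_1\vecx_1,\ldots,D_{d-2}\vecx_{d-2},B_{d-1}\vecx_{d-1}\bigr),
\]
where $\overline{R}_M$ is defined by $\sigma(\overline{R}_M\vecy)=\sigma(\vecy)M$. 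Hence a compatible $(B_0,B_{d-1})$ exists for the recursively produced $(D_1,\ldots,D_{d-2})$; the remaining algorithmic task is to \emph{compute} one such pair without explicit access to $A_0,A_1,B_0^*,B_{d-1}^*$.

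For this, I use bilinear slices. For each $(b,c)\in[w]^2$, choose matrices $T_1,\ldots,T_{d-2}\in\F^{w\times w}$ with $T_1T_2\cdots T_{d-2}=E_{bc}$ (e.g.\ $T_1=\cdots=T_{d-3}=E_{bb}$, $T_{d-2}=E_{bc}$), and set $\vecy_k^{(bc)}:=D_k^{-1}\sigma^{-1}(T_k)$. Then $f(\vecx_0,\vecy_1^{(bc)},\ldots,\vecy_{d-2}^{(bc)},\vecx_{d-1})=\tr(\sigma(B_0\vecx_0)\,E_{bc}\,\sigma(B_{d-1}\vecx_{d-1}))$, a bilinear form in $\vecx_0,\vecx_{d-1}$ whose $w^2\times w^2$ coefficient matrix $M^{(bc)}$ can be recovered by polynomial interpolation using black-box access to $f$. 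Writing $X_u:=\sigma(B_0 e_u)$ and $Y_v:=\sigma(B_{d-1} e_v)$, a direct computation gives $M^{(bc)}_{u,v}=(Y_vX_u)_{c,b}$, so the product $Y_vX_u$ is known for every $(u,v)\in[w^2]^2$. Finally, sample random $\vecc,\vecd\in\F^{w^2}$ and form $P_{u,\vecd}:=\sum_v d_v Y_vX_u=Y_{\vecd} X_u$, $P_{\vecc,v}:=\sum_u c_u Y_vX_u=Y_v X_{\vecc}$, $P_{\vecc,\vecd}:=Y_{\vecd}X_{\vecc}$ with $X_{\vecc}:=\sigma(B_0\vecc)$, $Y_{\vecd}:=\sigma(B_{d-1}\vecd)$; since $B_0,B_{d-1}$ are invertible, $P_{\vecc,\vecd}$ is invertible whp. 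Then $X_u^{\mathrm{new}}:=P_{\vecc,\vecd}^{-1}P_{u,\vecd}=X_{\vecc}^{-1}X_u$ and $Y_v^{\mathrm{new}}:=P_{\vecc,v}=Y_vX_{\vecc}$ satisfy $Y_v^{\mathrm{new}}X_u^{\mathrm{new}}=Y_vX_u$, hence they correspond to a valid $(B_0^{\mathrm{new}},B_{d-1}^{\mathrm{new}})=(\overline{L}_{X_{\vecc}^{-1}}B_0,\overline{R}_{X_{\vecc}}B_{d-1})$ (differing from the original by the residual gauge $A_0=X_{\vecc}^{-1}$). Assemble $B_0,B_{d-1}$ from $X_u^{\mathrm{new}},Y_v^{\mathrm{new}}$ and finish by a randomized polynomial identity test on $f-\IMM_{w,d}(B_0\vecx_0,D_1\vecx_1,\ldots,D_{d-2}\vecx_{d-2},B_{d-1}\vecx_{d-1})$; output ``No'' on failure. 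The main obstacle lies in the second paragraph: carefully tracking the $\IMM_{w,d-1}$-gauge freedom through the recursive call to guarantee that its output $(D_1,\ldots,D_{d-2})$ is always completable to an $\IMM_{w,d}$-decomposition, so that the bilinear-slice procedure is guaranteed to succeed whenever $f$ is isomorphic to $\IMM_{w,d}$.
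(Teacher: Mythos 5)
Your proof is correct, but it takes a genuinely different route from the paper's. The paper's algorithm is non-recursive: it fixes the last $d-3$ variable sets at random points to get a $3$-tensor isomorphic to $\IMM_{w,3}$, makes its single $\MMTI$ query there to obtain $X'_0,X'_1,X'_2$, and then recovers the remaining layers by a different mechanism --- it substitutes points turning $X'_0,X'_1,X'_2$ into elementary matrices so that a further restriction equals the $(1,1)$ entry of the residual product $(C_0^{-1}M^{-1})X_3\cdots X_{d-1}C_0$, reconstructs that entry as a width-$w$ set-multilinear ABP via \cite{KlivansS03}, and finally reads off the two boundary matrices $X'_3$ and $X'_{d-1}$ entry-by-entry from additional evaluations of $f$, ending with a PIT. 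You instead peel off one variable set per level of a $(d-3)$-deep recursion (cyclicity of the trace plus invertibility of $N=\sigma(B^*_{d-1}\vecm_{d-1})$ with high probability), take the middle matrices $D_1,\ldots,D_{d-2}$ straight from the recursive output, and recover the two boundary matrices by your bilinear-slice trick: the substitutions $\vecy^{(bc)}_k = D_k^{-1}\sigma^{-1}(T_k)$ expose all products $Y_vX_u$, and the gauge-fixing $X_u\mapsto X_{\vecc}^{-1}X_u$, $Y_v\mapsto Y_vX_{\vecc}$ yields a computable completing pair; I checked the identity $M^{(bc)}_{u,v}=(Y_vX_u)_{c,b}$ and the cancellation in your second paragraph, and both are sound. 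What your route buys is complete avoidance of the set-multilinear ABP reconstruction subroutine (the paper's main external tool here); what it costs is the recursion, whose per-level failure probabilities need a union bound, and the gauge-tracking step you flag as the main obstacle --- which does in fact go through: applying Lemma \ref{fact: symmetries of IMM} to the decomposition $\IMM_{w,d-1}=\tr(Q'_0\cdots Q'_{d-2})$ with $Q'_k=\sigma\bigl(D_k(B^{**}_k)^{-1}\vecx_k\bigr)$ supported on $\vecx_k$ variables only, the rotation case forces $\ell=0$ and the transpose-reversal case would force $2k\equiv\ell \pmod{d-1}$ for every $k$, impossible for $d-1\geq 3$; hence only the slot-preserving conjugation form $\sigma(D_k\vecx_k)=A_k\,\sigma(B^{**}_k\vecx_k)\,A_{k+1}^{-1}$ survives, exactly as you assert (the paper makes the same silent step when it invokes the lemma at $d=3$). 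You should state this exclusion explicitly, and add an invertibility check on the assembled $B_0,B_{d-1}$ so that the non-isomorphic case is fully covered by the final PIT, but these are presentation-level fixes rather than gaps.
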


The figure below is a depiction of Corollary \ref{corollary: equivalence of det, trace and fmai}. An arrow from Problem A to B indicates a randomized polynomial-time Turing reduction from A to B.
\begin{figure}[h]
\centering
\begin{tikzpicture}
\coordinate (a) at (0,0);
\coordinate (b) at ($(a) + (5,0)$);
\coordinate (c) at ($(a) + (5,-3)$);
\coordinate (d) at ($(a) + (0,-3)$);

\node at ($(a)$) [fill=white!100!] {\scriptsize $\EIMM$};
\node at ($(b)$) [fill=white!100!] {\scriptsize $\EDET$};
\node at ($(c)$) [fill=white!100!] {\scriptsize $\FMAI$};
\node at ($(d)$) [fill=white!100!] {\scriptsize $\MMTI$};

\node at ($(a)+ (2.5,0.2)$) [fill=white!100!] {\scriptsize This paper (Theorem \ref{theorem: reduction from equivalence testing of trace to det})};
\node at ($(b)+ (0.7,-1.5)$) [fill=white!100!] {\scriptsize \cite{GargGKS19}};
\node at ($(d)+ (2.5,-0.2)$) [fill=white!100!] {\scriptsize This paper (Theorem \ref{theorem: reduction from fmai to emi})};
\node at ($(a)+ (-1.8,-1.3)$) [fill=white!100!] {\scriptsize Symmetries of $\IMM_{w,d}$};
\node at ($(a)+ (-1.8,-1.7)$) [fill=white!100!] {\scriptsize (Lemma \ref{fact: symmetries of IMM} and \cite{Gesmundo15})};

\draw [->] ($(a) + (0.6,0)$) -- ($(b) - (0.4,0)$);
\draw [->] ($(b) + (0,-0.2)$) -- ($(c) + (0,0.2)$);
\draw [->] ($(c) - (0.4,0)$) -- ($(d) + (0.55,0)$);
\draw [->] ($(d) + (0,0.2)$) -- ($(a) + (0,-0.2)$);
\end{tikzpicture}
\caption{Reductions between $\EIMM$, $\EDET$, $\FMAI$, and $\MMTI$}
\label{figure: connections between problems}
\end{figure}
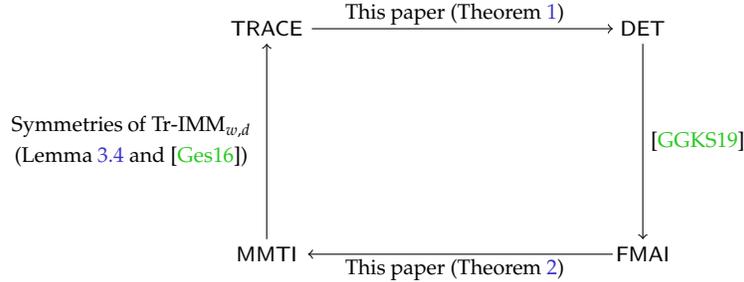
%

\section{Notations and definitions}\label{subsec: Variable ordering, notations and definitions}
Recall that $\IMM_{w,d} := \tr(Q_0\cdot Q_1\ldots Q_{d-1})$, where $Q_k = (x_{ij}^{(k)})_{i,j \in [w]}$. Let $\vecx_k = \{x_{ij}^{(k)}\}_{i,j \in [w]}$, $\vecx = \uplus_{k\in [0,d-1]} \vecx_k$, and $n = w^2d$. At times, we will refer to the $\vecx$ variables as $x_1, \ldots, x_n$. The $\vecx$ variables are ordered as $\mathbf{x}_0 > \mathbf{x}_1 > \ldots > \mathbf{x}_{d-1}$, and within a variable set $\mathbf{x}_k$, if $k$ is even (similarly, odd) then the variables are ordered in row-major (respectively, column-major) fashion. The rows and columns of a matrix in $\CAL{M}_n = \CAL{M}_n(\F)$, and the entries of a column vector in $\F^n$ are indexed by $\vecx$ variables ordered as above. A matrix in $\CAL{M}_n$ is called \emph{block-diagonal} if the row and column of every non-zero entry of the matrix is indexed by variables from the same variable set. A few more basic definitions and terminologies about matrices, matrix products and ABP are given in Appendix \ref{subsection: abps and matrix products}. The indices $k,\ell \in [0,d-1]$ will be treated as elements in $\Z/d\Z$, i.e., $k+1 = 0$ if $k=d-1$. 
Let $\CAL{L} \subseteq \CAL{M}_n$. A subspace $\CAL{U} \subseteq \F^n$ is \emph{$\CAL{L}$-invariant} if for all $M \in \CAL{L}$, $M\cdot \CAL{U} \subseteq \CAL{U}$. 
\begin{definition}[Irreducible invariant subspace]{\label{def:irr_inv_sub_spaces}}
An $\CAL{L}$-invariant subspace $\mathcal{U} \subseteq \F^n$ is irreducible if there are no proper $\CAL{L}$-invariant subspaces $\mathcal{U}_1$ and $\mathcal{U}_2$ of $\mathcal{U}$ such that $\mathcal{U} = \mathcal{U}_1 \oplus \mathcal{U}_2$. 
\end{definition}
\begin{definition}[Closure of a vector]
The closure of a vector $\mathbf{v} \in \mathbb{F}^n$ under the action of $\mathcal{L} \subseteq \CAL{M}_{n}$ is the smallest $\CAL{L}$-invariant subspace of $\F^n$ containing $\mathbf{v}$. 
\end{definition}
An algorithm to compute the closure of a vector in polynomial-time is given in \cite{KayalNST19}.
An easy-to-work-with definition of the Lie algebra of the group of symmetries of a polynomial was given in \cite{Kayal12}. For brevity, we will call it the Lie algebra of a polynomial.\footnote{Geometrically speaking, the Lie algebra of an $n$-variate polynomial $f(\vecx)$ is the subspace of $\CAL{M}_n(\F)$ obtained by translating the tangent of the algebraic set $\{A \in \CAL{M}_n~:~ f(A\vecx) = f(\vecx)\}$ at $A = I_n$ and making it pass through origin.}
\begin{definition}[Lie algebra $\mathfrak{g}_f$ of a polynomial $f$]{\label{def:lie_algebra}}
The Lie algebra of an $n$-variate polynomial $f(\vecx)$ is denoted as $\mathfrak{g}_f$ and it consists of matrices $E = (e_{ij})_{i,j \in [n]} \in \CAL{M}_n$ that satisfy $ \sum_{i,j \in [n]}e_{ij} x_{j}\cdot\frac{\partial f}{\partial x_i} = 0$.
\end{definition}
Note that $\G_f$ is a vector space. It also follows that a basis of $\G_f$ can be computed in randomized polynomial-time from blackbox access to $f$ by solving a linear system (see \cite{Kayal12}). 
\begin{fact}\label{fact: lie algebra conjugacy}
If $f(\vecx) = g(A\vecx)$ for an $A \in \GL(n,\F)$, then $\G_f = A^{-1}\G_g A$.  
\end{fact}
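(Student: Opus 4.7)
The plan is to unfold the definition of $\G_f$ into a compact polynomial identity, transport that identity across the change of variables $\vecy = A\vecx$, and read off the conjugation.

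First I would rewrite the defining condition of $\G_f$ in a coordinate-free form. Viewing $\vecx$ as a column vector and $\nabla f = (\partial f/\partial x_1, \ldots, \partial f/\partial x_n)^T$, the sum $\sum_{i,j} e_{ij} x_j \, \partial f/\partial x_i$ is exactly $(\nabla f)^T (E\vecx)$. So $E \in \G_f$ if and only if $(\nabla f)^T (E\vecx) = 0$ as a polynomial in $\vecx$. The analogous statement holds for $\G_g$ in the variables $\vecy$.

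Next I would apply the chain rule to $f(\vecx) = g(A\vecx)$. Writing $\vecy = A\vecx$, we get $\partial f/\partial x_i = \sum_k A_{ki}\,(\partial g/\partial y_k)(A\vecx)$, which in vector form is $\nabla f(\vecx) = A^T \, (\nabla g)(A\vecx)$. Substituting this into the defining identity for $\G_f$ yields
\[
(\nabla f)^T (E\vecx) \;=\; \bigl((\nabla g)(A\vecx)\bigr)^T A\, E\, \vecx \;=\; \bigl((\nabla g)(A\vecx)\bigr)^T \bigl(A E A^{-1}\bigr) (A\vecx).
\]
Since $\vecy = A\vecx$ is an invertible linear change of variables, the polynomial on the left vanishes identically in $\vecx$ if and only if the polynomial $(\nabla g)(\vecy)^T \bigl(A E A^{-1}\bigr)\vecy$ vanishes identically in $\vecy$, which by the coordinate-free form of the definition is exactly the condition $A E A^{-1} \in \G_g$.

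Putting the equivalences together, $E \in \G_f \iff A E A^{-1} \in \G_g \iff E \in A^{-1} \G_g A$, which gives the claimed equality $\G_f = A^{-1} \G_g A$. There is no real obstacle here beyond bookkeeping the chain rule and the substitution carefully; the key conceptual point worth flagging in the write-up is that passing from the identity in $\vecx$ to the identity in $\vecy$ is legitimate precisely because $A$ is invertible, so no new relations between the $y_k$ are introduced.
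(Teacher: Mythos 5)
Your proof is correct: the paper states this as a standard fact without giving a proof (it is essentially imported from \cite{Kayal12}), and your argument — rewriting the defining condition as $(\nabla f)^T(E\vecx)=0$, applying the chain rule $\nabla f(\vecx)=A^T(\nabla g)(A\vecx)$, and using invertibility of $A$ to pass between identities in $\vecx$ and in $\vecy=A\vecx$ — is exactly the intended one. The observation that invertibility of $A$ is what licenses the change of variables is the right point to flag, and nothing further is needed.
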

%

\section{Symmetries and Lie algebra of $\IMM$}\label{section: lie algebra of imm}
The symmetries and the Lie algebra $\GIMM$ of $\IMM_{w,d}$ have been studied in \cite{Gesmundo15} over $\C$. Here, we work out the exact structure of the matrices in $\GIMM$ with respect to the variable ordering mentioned above, and use it to identify the $\GIMM$-invariant subspaces of $\F^n$ and the symmetries of $\IMM_{w,d}$ over $\F$. These facts about the Lie algebra and the symmetries will be used in the proofs of Theorems \ref{theorem: reduction from equivalence testing of trace to det}, \ref{theorem: reduction from fmai to emi} and \ref{theorem: reduction from equivalence testing for degree d to degree 3}. The missing proofs of this section are given in Appendix \ref{secappendix: lie algebra of imm}.

\begin{claim}{\label{claim: gtrimm is block diagonal}}
If $E \in \G_{\IMM}$ then $E$ is block-diagonal.
\end{claim}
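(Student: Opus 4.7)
My plan is to show that $E_{\alpha, \beta} = 0$ whenever $\alpha \in \vecx_k$ and $\beta \in \vecx_\ell$ with $k \neq \ell$. The strategy is to decouple the defining Lie-algebra identity into one polynomial identity per pair $(k, \ell)$ using the set-multilinearity of $\IMM_{w,d}$, and then extract each entry of the $(k,\ell)$-block of $E$ by coefficient comparison.

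First, I would compute $\partial \IMM_{w,d}/\partial x_{ij}^{(k)} = (P_k)_{ji}$, where $P_k := Q_{k+1} Q_{k+2} \cdots Q_{k-1}$ is the cyclic product omitting $Q_k$. Since every monomial of $\IMM_{w,d}$ contains exactly one variable from each $\vecx_r$, the polynomial $\beta \cdot \partial \IMM_{w,d}/\partial \alpha$, for $\alpha \in \vecx_k$ and $\beta \in \vecx_\ell$ with $k \neq \ell$, has degree zero in $\vecx_k$, degree two in $\vecx_\ell$, and degree one in every other $\vecx_r$. This degree pattern uniquely encodes $(k, \ell)$ and is distinct from the set-multilinear pattern obtained when $k = \ell$. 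Consequently, the equation $\sum_{\alpha, \beta} E_{\alpha \beta}\, \beta\, \partial \IMM_{w,d}/\partial \alpha = 0$ decouples across ordered pairs $(k, \ell)$ with $k \neq \ell$, giving
\[
T_{k,\ell} \;:=\; \sum_{i, j, p, q \in [w]} E_{(k,i,j),(\ell,p,q)}\, x_{pq}^{(\ell)}\, (P_k)_{ji} \;=\; 0.
\]

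Next, I would factor $P_k = A \cdot Q_\ell \cdot B$, with $A := Q_{k+1} \cdots Q_{\ell-1}$ and $B := Q_{\ell+1} \cdots Q_{k-1}$ (an empty product read as the identity). Since $d \geq 3$, at least one of $A, B$ is non-trivial, and the variables appearing in $A$ and $B$ lie in disjoint subsets of $\vecx \setminus (\vecx_k \cup \vecx_\ell)$. Substituting $(P_k)_{ji} = \sum_{s, t} A_{js}\, x_{st}^{(\ell)}\, B_{ti}$ into $T_{k,\ell}$ and comparing the coefficient of $(x_{pq}^{(\ell)})^2$ gives the polynomial identity
\[
\sum_{i, j \in [w]} E_{(k,i,j),(\ell,p,q)}\, A_{jp}\, B_{qi} \;=\; 0.
\]
When both $A$ and $B$ are non-trivial products of independent symbolic matrices, $\{A_{jp}\}_{j \in [w]}$ (at fixed $p$) and $\{B_{qi}\}_{i \in [w]}$ (at fixed $q$) are each linearly independent families (different $j$'s contribute $A_{jp}$'s starting with different variables, similarly for $B$) lying in disjoint variable sets, so the $w^2$ products $\{A_{jp}\, B_{qi}\}_{i, j}$ are linearly independent; this immediately forces $E_{(k,i,j),(\ell,p,q)} = 0$ for every $i, j$.

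The main obstacle is the edge case where exactly one of $A, B$ is the identity, say $A = I$ when $\ell = k + 1$; the case $B = I$ is symmetric, and $A = B = I$ would force $d = 2$, which is excluded. In that case the diagonal coefficient collapses to $\sum_i E_{(k, i, p),(\ell, p, q)}\, B_{qi} = 0$, yielding only the partial relation $E_{(k, i, p),(\ell, p, q)} = 0$ for all $i, p, q$. To recover the remaining entries, I would also compare the coefficient of $x_{pq}^{(\ell)} x_{st}^{(\ell)}$ for $(s, t) \neq (p, q)$, which (with $A = I$) reduces to $\sum_i \bigl[E_{(k,i,s),(\ell,p,q)}\, B_{ti} + E_{(k,i,p),(\ell,s,t)}\, B_{qi}\bigr] = 0$. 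Picking $s \neq p$ and $t \neq q$ (both feasible because $w \geq 2$), the $2w$ polynomials $\{B_{ti}\}_i \cup \{B_{qi}\}_i$ are linearly independent, forcing $E_{(k,i,s),(\ell,p,q)} = 0$ for every $i$ and every $s \neq p$. Combined with the diagonal relation, this annihilates the entire $(k, \ell)$-block of $E$, completing the proof.
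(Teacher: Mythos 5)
Your proof is correct and follows essentially the same route as the paper: split the Lie-algebra identity by multidegree into one identity per ordered pair of layers $(k,\ell)$ with $k\neq\ell$, then annihilate each off-diagonal block, treating the adjacent-layer cases $\ell=k\pm 1$ separately from the rest. The only difference is cosmetic: where the paper substitutes $0/1$ values along a chosen path monomial to force the offending linear forms to vanish, you extract the coefficients of degree-two monomials in $\vecx_\ell$ and use linear independence of the entries of the symbolic products $A$ and $B$; both mechanisms give the same conclusion.
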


Define the spaces $\CAL{B}_0, \ldots , \CAL{B}_{d-1}$ of block-diagonal matrices as follows: Every matrix in $\CAL{B}_k$ is a block-diagonal matrix whose non-zero entries are confined to the rows and columns indexed by $\vecx_k$ and $\vecx_{k+1}$ variables. For $k \in [0,d-2]$ and $B \in \CAL{B}_k$, let $[B]_k$ be the $2w^2 \times 2w^2$ sub-matrix of $B$ whose rows and columns are indexed by $\vecx_k$ and $\vecx_{k+1}$ variables. For $B \in \CAL{B}_{d-1}$, let $[B]_{d-1}$ be the $2w^2 \times 2w^2$ sub-matrix of $B$ whose rows and columns are indexed by $\vecx_{d-1}$ and $\vecx_0$ variables, i.e., we let the $\vecx_{d-1}$ variables index the rows and columns of $B_{d-1}$ before the $\vecx_0$ variables. If $d$ is even then
\begin{eqnarray} \label{equation: lie algebra matrices}
\CAL{B}_k &:=& \left\{ B \in \CAL{M}_n ~:~ [B]_k =  \begin{bmatrix} I_w \otimes M^T & \mathbf{0} \\ \mathbf{0}  & -I_w \otimes M \end{bmatrix}  ~\text{for}~ M \in \CAL{M}_w \right\} ~~\text{if $k$ is even}, \nonumber \\
&:=& \left\{ B \in \CAL{M}_n ~:~ [B]_k =  \begin{bmatrix} M^T \otimes I_w & \mathbf{0} \\ \mathbf{0}  & -M \otimes I_w \end{bmatrix}  ~\text{for}~ M \in \CAL{M}_w \right\} ~~\text{if $k$ is odd}.
\end{eqnarray}
If $d$ is odd, then the definition of $\CAL{B}_k$ remains the same except for $\CAL{B}_{d-1}$ which is defined as
\begin{equation*}
\CAL{B}_{d-1} := \left\{ B \in \CAL{M}_n ~:~ [B]_{d-1} =  \begin{bmatrix} I_w \otimes M^T & \mathbf{0} \\ \mathbf{0}  & -M \otimes I_w \end{bmatrix}  ~\text{for}~ M \in \CAL{M}_w \right\}.
\end{equation*}

\begin{lemma}\label{theorem: lie algebra of of trace}
The space $\CAL{B}_0 + \ldots + \CAL{B}_{d-1}$ is contained in $\GIMM$. 
\end{lemma}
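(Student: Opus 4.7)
The plan is to exhibit explicit one-parameter families of symmetries of $\IMM_{w,d}$ and identify the matrices in $\CAL{B}_0 + \ldots + \CAL{B}_{d-1}$ as the corresponding tangent vectors at the identity. By Definition \ref{def:lie_algebra}, a matrix $E$ lies in $\GIMM$ iff the directional derivative of $\IMM_{w,d}$ along $E\vecx$ vanishes, which will be automatic whenever $E$ is the tangent of a smooth curve of symmetries passing through $I_n$.

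The starting point is the standard ``conjugation'' symmetry of the trace: for any $(A_0,\ldots,A_{d-1}) \in \GL(w,\F)^d$,
\begin{equation*}
\IMM_{w,d}(Q_0,\ldots,Q_{d-1}) \;=\; \IMM_{w,d}\bigl(A_0 Q_0 A_1^{-1},\,A_1 Q_1 A_2^{-1},\,\ldots,\,A_{d-1} Q_{d-1} A_0^{-1}\bigr),
\end{equation*}
which follows from the cyclicity of trace (here indices are read modulo $d$). For each $k \in [0,d-1]$ and each $M \in \CAL{M}_w$, I would consider the one-parameter family $A_{k+1}(t) = I_w - tM$ and $A_j(t) = I_w$ for $j \ne k+1$. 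The induced transformation on $\vecx$ is a curve in $\GL(n,\F)$ through $I_n$ at $t=0$, so its tangent vector $B := \frac{d}{dt}\big|_{t=0}$ must satisfy $\IMM_{w,d}\bigl((I_n + tB)\vecx\bigr) = \IMM_{w,d}(\vecx) + O(t^2)$ and hence lies in $\GIMM$.

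Next I would compute $B$ explicitly. Expanding $A_{k+1}(t)^{-1} = I_w + tM + O(t^2)$, the infinitesimal change on the symbolic matrices is $Q_k \mapsto Q_k + t(Q_k M)$ and $Q_{k+1} \mapsto Q_{k+1} - t(M Q_{k+1})$, while all other $Q_\ell$ are fixed. Hence $B$ is block-diagonal with support confined to the rows and columns indexed by $\vecx_k$ and $\vecx_{k+1}$. To read off the $w^2 \times w^2$ diagonal blocks I would use the standard Kronecker identity $\text{vec}(PXR) = (R^T \otimes P)\,\text{vec}(X)$ together with the variable conventions that $\vecx_\ell = \text{vec}(Q_\ell^T)$ when $\ell$ is even and $\vecx_\ell = \text{vec}(Q_\ell)$ when $\ell$ is odd. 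For example, with $k$ even and $k+1 \le d-1$: $\text{vec}((Q_kM)^T) = \text{vec}(M^T Q_k^T) = (I_w \otimes M^T)\vecx_k$, giving the top-left block $I_w \otimes M^T$; and $-\text{vec}(MQ_{k+1}) = -(I_w \otimes M)\vecx_{k+1}$, giving the bottom-right block $-I_w \otimes M$. The case $k$ odd is analogous, producing $M^T \otimes I_w$ and $-M \otimes I_w$, and the wrap-around block $\CAL{B}_{d-1}$ is handled by combining the two formulas according to the parities of $d-1$ and $0$, which reproduces exactly the special form stated for $\CAL{B}_{d-1}$ in both parities of $d$.

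This shows that every generator matrix of $\CAL{B}_k$ (one for each $M \in \CAL{M}_w$) lies in $\GIMM$; since $\GIMM$ is a linear subspace of $\CAL{M}_n$ and $k$ was arbitrary, the sum $\CAL{B}_0 + \ldots + \CAL{B}_{d-1}$ is contained in $\GIMM$. The only real obstacle is bookkeeping: one must carefully track the row-major vs.\ column-major ordering on each $\vecx_\ell$ and handle the wrap-around index $\CAL{B}_{d-1}$ separately for even and odd $d$. All of this is routine once the Kronecker identity is applied.
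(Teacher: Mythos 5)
Your proposal is correct, and it reaches the lemma by a slightly different route than the paper. The paper's own proof is a direct verification: for a generic $B \in \CAL{B}_k$ it writes down the linear forms encoded by the two diagonal blocks, observes that they are exactly the entries of $Q'_k = Q_k M$ and $Q'_{k+1} = -M Q_{k+1}$, and then the defining equation of $\GIMM$ reduces to $\tr\bigl(Q_0 \cdots Q_{k-1}(Q'_k Q_{k+1} + Q_k Q'_{k+1})Q_{k+2}\cdots Q_{d-1}\bigr) = 0$, which holds because the two terms cancel. You obtain the same cancellation from the opposite direction: you start from the one-parameter family of symmetries $Q_k \mapsto Q_k (I_w - tM)^{-1}$, $Q_{k+1} \mapsto (I_w - tM) Q_{k+1}$ supplied by cyclicity of the trace, differentiate at $t = 0$, and then use the vec/Kronecker identity to check that the resulting matrix is precisely the generic element of $\CAL{B}_k$; your block computations, including the row-major versus column-major conventions and the wrap-around block $\CAL{B}_{d-1}$ for both parities of $d$, are correct and match the paper's displayed definition of the $\CAL{B}_k$. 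The one step you should make explicit is the differentiation itself: since $\GIMM$ is defined purely algebraically, treat $t$ as a formal parameter, expand $(I_w - tM)^{-1}$ in $\F[[t]]$, and note that the coefficient of $t$ in $f(A(t)\vecx) - f(\vecx)$ is exactly $\sum_{i,j} b_{ij} x_j \frac{\partial f}{\partial x_i}$, so vanishing of that coefficient is literally membership of $B$ in $\G_f$; with that one-line remark (harmless under the paper's characteristic assumptions) your argument is complete. What your packaging buys is conceptual economy — membership in $\GIMM$ comes for free from the known symmetry group, with all the work concentrated in the ordering bookkeeping — whereas the paper's direct verification never mentions the group or formal derivatives and checks the Euler-type equation outright.
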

\begin{lemma}\label{lemma:block-diagonal matrix with two non-zero blocks in GIMM is in B_k}
Suppose $B\in \GIMM$ and there is a $k\in [0,d-1]$ such that the non-zero entries of $B$ are confined to the rows and columns that are indexed by $\vecx_k$ and $\vecx_{k+1}$ variables. Then  $B\in \CAL{B}_k$.
\end{lemma}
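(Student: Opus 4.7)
The plan is to translate the defining condition of $\GIMM$ into a polynomial identity in the entries of $Q_0,\ldots,Q_{d-1}$, and then read off the Kronecker shape of $[B]_k$ by specializing these matrices. By Claim \ref{claim: gtrimm is block diagonal}, $B$ is block-diagonal, so the hypothesis forces $B = B_k + B_{k+1}$ where each $B_\ell$ has non-zero entries confined to the rows and columns indexed by $\vecx_\ell$. Using the prescribed row- or column-major identification of $\vecx_\ell$ with the entries of $Q_\ell$, the block $B_\ell$ corresponds to a linear map $\Lambda_\ell \colon \CAL{M}_w \to \CAL{M}_w$ acting on $Q_\ell$. Since $\partial \IMM_{w,d}/\partial (Q_\ell)_{ab} = (Q_{\ell+1}\cdots Q_{\ell-1})_{ba}$ (indices mod $d$), a short computation shows that the identity $\sum_{i,j} e_{ij}\,x_j\,\partial_i \IMM_{w,d} = 0$ from Definition \ref{def:lie_algebra} is equivalent to the polynomial identity
\begin{equation*}
\tr\bigl(Q_0\cdots Q_{k-1}\,\Lambda_k(Q_k)\,Q_{k+1}\cdots Q_{d-1}\bigr) \;+\; \tr\bigl(Q_0\cdots Q_k\,\Lambda_{k+1}(Q_{k+1})\,Q_{k+2}\cdots Q_{d-1}\bigr) \;=\; 0
\end{equation*}
in the entries of $Q_0,\ldots,Q_{d-1}$.

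I will then substitute $Q_k = I_w$ and set $M := \Lambda_k(I_w) \in \CAL{M}_w$, which collapses the identity to $\tr\bigl(Q_0\cdots Q_{k-1}\,[MQ_{k+1} + \Lambda_{k+1}(Q_{k+1})]\,Q_{k+2}\cdots Q_{d-1}\bigr) = 0$. Since $d \geq 3$, some index $\ell \in [0,d-1]\setminus\{k,k+1\}$ exists. Setting $Q_m = I_w$ for every $m \notin \{k,k+1,\ell\}$ and letting $Q_\ell$ range over the elementary matrices $E_{ij}$, the cyclic property of the trace reduces the identity to $[MQ_{k+1}+\Lambda_{k+1}(Q_{k+1})]_{ji} = 0$ for all $i,j$ and all $Q_{k+1}$, and hence $\Lambda_{k+1}(Q_{k+1}) = -MQ_{k+1}$. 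Substituting this back into the original identity and running the symmetric argument (specializing $Q_{k+1} = I_w$ and varying $Q_\ell$) yields $\Lambda_k(Q_k) = Q_k M$.

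To finish, I will convert the actions $Q_k \mapsto Q_k M$ and $Q_{k+1} \mapsto -MQ_{k+1}$ back to the $2w^2\times 2w^2$ block $[B]_k$ using the standard Kronecker-product identities for linear maps of the form $X \mapsto AXB$ on $w\times w$ matrices, with row- or column-major vectorization chosen according to the ordering of $\vecx_\ell$ fixed in Section \ref{subsec: Variable ordering, notations and definitions}. The main (and essentially only) obstacle is bookkeeping: the block $\CAL{B}_k$ in (\ref{equation: lie algebra matrices}) has four slightly different Kronecker shapes depending on the parity of $k$ and on whether the block wraps around the trace (the case $k = d-1$ with $d$ odd), so each case must be matched to the appropriate row- or column-major vectorization. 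The trace-identity core of the argument is uniform across all cases, so once this correspondence is fixed a short case analysis recovers the stated shape of $[B]_k$, giving $B \in \CAL{B}_k$.
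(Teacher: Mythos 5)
Your proposal is correct and takes essentially the same route as the paper's proof: both translate membership of $B$ in $\GIMM$ (together with block-diagonality from Claim \ref{claim: gtrimm is block diagonal}) into the trace identity $\tr(Q_0\cdots Q_{k-1}Q'_k Q_{k+1}\cdots Q_{d-1}) + \tr(Q_0\cdots Q_k Q'_{k+1}Q_{k+2}\cdots Q_{d-1}) = 0$, specialize the remaining matrices (using a third index, available since $d\geq 3$) to deduce $Q'_k = Q_k M$ and $Q'_{k+1} = -M Q_{k+1}$, and then convert these maps back into the Kronecker shape of $[B]_k$. The only difference is the order of specializations (the paper first extracts $Q'_kQ_{k+1} + Q_kQ'_{k+1} = 0$ from the distinct variables of $Q_{k+2}$, whereas you set $Q_k = I_w$ first and vary an elementary matrix at a third index), which is immaterial, and your concluding vectorization bookkeeping is left at the same level of detail as the paper's own ``easily observed'' final step.
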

In fact $\GIMM = \CAL{B}_0 + \ldots + \CAL{B}_{d-1}$, however we do not prove this stronger statement here. Let $e_i \in \F^n$ be the vector with $1$ in the entry indexed by $x_i \in \vecx$ and zero elsewhere. A subspace of $\F^n$ is a coordinate subspace if it is spanned by a set of $e_i$'s. Let $\CAL{U}_k = \spa_{\F}\{e_i \,\mid \, x_i \in \vecx_k\}$. 
\begin{claim}{\label{claim:invar_subspace_is_coordinate}}
Any non-zero $\GIMM$-invariant subspace is a coordinate subspace of $\F^n$.
\end{claim}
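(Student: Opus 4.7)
The plan is to prove a slightly stronger statement: every non-zero $\GIMM$-invariant subspace $\CAL{U}\subseteq \F^n$ is in fact a direct sum of whole coordinate blocks $\CAL{U}_k$, which is certainly a coordinate subspace. I would split the argument into (i) showing $\CAL{U}$ respects the block decomposition $\F^n=\CAL{U}_0\oplus\cdots\oplus\CAL{U}_{d-1}$, and (ii) showing that each intersection $\CAL{U}\cap \CAL{U}_k$ is either $\{0\}$ or all of $\CAL{U}_k$.

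For (i), for each $k\in[0,d-1]$ pick the element $B_k\in \CAL{B}_k$ corresponding to $M=I_w$; by Lemma \ref{theorem: lie algebra of of trace} this lies in $\GIMM$. Reading off the block structure in \eqref{equation: lie algebra matrices} (together with the special form of $\CAL{B}_{d-1}$ when $d$ is odd), $B_k$ acts as $+I$ on $\CAL{U}_k$, as $-I$ on $\CAL{U}_{(k+1)\bmod d}$, and as $0$ on every other $\CAL{U}_\ell$. Hence for $v=\sum_\ell v_\ell\in \CAL{U}$ with $v_\ell\in \CAL{U}_\ell$ we have $B_k v=v_k-v_{k+1}\in \CAL{U}$; telescoping gives $v_0-v_\ell\in \CAL{U}$ for every $\ell$, so summing over $\ell$ yields $d\,v_0-v\in \CAL{U}$, and hence $d\,v_0\in \CAL{U}$. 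Since $\char(\F)$ is $0$ or exceeds $(w^2 d)^5$, $d$ is invertible in $\F$, so $v_0\in \CAL{U}$; by cyclically shifting the starting index, $v_k\in \CAL{U}$ for every $k$, giving $\CAL{U}=\bigoplus_k(\CAL{U}\cap \CAL{U}_k)$.

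For (ii), identify $\CAL{U}_k\cong \F^w\otimes \F^w$ through the row- or column-major ordering on $\vecx_k$. A brief case analysis on the parity of $k$ (and on whether $k$ touches the distinguished block $\CAL{B}_{d-1}$ when $d$ is odd) using \eqref{equation: lie algebra matrices} shows that the pair $(\CAL{B}_{k-1},\CAL{B}_k)$ always supplies every operator of the form $M\otimes I_w$ and $I_w\otimes N$ on $\CAL{U}_k$, as $M,N$ range over $\CAL{M}_w$ (transposes and signs are harmless since $M,N$ are arbitrary). Because $\CAL{U}\cap \CAL{U}_k$ is stable under all products of these operators, and these products include every $M\otimes N=(M\otimes I_w)(I_w\otimes N)$, any nonzero $v=\sum c_{ij}\,e_i\otimes e_j\in \CAL{U}\cap \CAL{U}_k$ with $c_{i_0 j_0}\neq 0$ can be reduced to $c_{i_0 j_0}\,e_1\otimes e_1$ by applying $I_w\otimes E_{1,j_0}$ and then $E_{1,i_0}\otimes I_w$, after which each basis vector $e_a\otimes e_b$ is produced by $E_{a,1}\otimes E_{b,1}$; so $\CAL{U}\cap \CAL{U}_k=\CAL{U}_k$. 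Combining (i) and (ii) gives $\CAL{U}=\bigoplus_{k\in S}\CAL{U}_k$ for $S=\{k:\CAL{U}\cap \CAL{U}_k\neq \{0\}\}$, a coordinate subspace. The only non-routine step is the case analysis in (ii) confirming that both tensor-factor actions are simultaneously available on every $\CAL{U}_k$; once that is in hand, the standard irreducibility of the natural $\mathfrak{gl}(w)\oplus \mathfrak{gl}(w)$ action on $\F^w\otimes \F^w$ finishes the argument.
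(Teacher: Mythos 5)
Your argument is correct, but it takes a genuinely different route from the paper. The paper's proof is more local: it first exhibits a diagonal matrix in $\GIMM$ with $n$ distinct diagonal entries (Claim \ref{claim:diag_distinct}) and then applies a Vandermonde argument to the vectors $D^i\vecu$ to conclude that $e_j \in \CAL{U}$ for every $j$ in the support of any $\vecu \in \CAL{U}$; that yields exactly the coordinate-subspace statement, with the identification of the irreducible invariant subspaces deferred to Lemma \ref{lemma:irreducible_space_limm}. You instead prove the stronger statement that every invariant subspace is a direct sum of whole blocks $\CAL{U}_k$: part (i) uses the $M=I_w$ elements of each $\CAL{B}_k$ (Lemma \ref{theorem: lie algebra of of trace} and Equation \ref{equation: lie algebra matrices}) together with invertibility of $d$ in $\F$ (guaranteed by the paper's characteristic assumption) to split $\CAL{U}$ along the blocks, and part (ii) uses that on each $\CAL{U}_k \cong \F^w \otimes \F^w$ the restrictions of $\CAL{B}_{k-1}$ and $\CAL{B}_k$ supply all operators $M\otimes I_w$ and $I_w\otimes N$ (your parity and wrap-around case analysis, including the modified $\CAL{B}_{d-1}$ for odd $d$, checks out), so invariance under compositions forces all-or-nothing on each block. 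What each approach buys: the paper's is shorter and needs only a single well-chosen element of the Lie algebra, at the cost of a probabilistic/field-size existence argument and of proving only the coordinate property; yours avoids the diagonal element and the Vandermonde step entirely and, as a bonus, essentially subsumes Lemma \ref{lemma:irreducible_space_limm} (both the irreducibility of each $\CAL{U}_k$ and the fact that these are the only irreducible invariant subspaces follow directly), at the cost of the block-structure case analysis and the reliance on $\operatorname{char}(\F) \nmid d$.
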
 
\begin{lemma}\label{lemma:irreducible_space_limm}
The only irreducible $\GIMM$-invariant subspaces of $\F^n$ are $\mathcal{U}_0, \ldots, \mathcal{U}_{d-1}$. 
\end{lemma}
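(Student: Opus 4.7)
The plan is to establish the lemma in two directions. First I would verify that each $\CAL{U}_k$ is a $\GIMM$-invariant subspace that is irreducible in the sense of Definition \ref{def:irr_inv_sub_spaces}. Second I would show that no other non-zero irreducible invariant subspace exists.

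For the first direction, invariance of $\CAL{U}_k$ is immediate from Claim \ref{claim: gtrimm is block diagonal}: every $E \in \GIMM$ is block-diagonal, hence $E \cdot \CAL{U}_k \subseteq \CAL{U}_k$. To prove irreducibility, I would identify the associative algebra $\mathcal{A}_k \subseteq \CAL{M}_{w^2}$ generated by the operators that $\GIMM$ induces on $\CAL{U}_k \cong \F^{w^2}$. The only pieces of the spanning set $\CAL{B}_0 + \cdots + \CAL{B}_{d-1} \subseteq \GIMM$ (Lemma \ref{theorem: lie algebra of of trace}) that act non-trivially on $\CAL{U}_k$ are $\CAL{B}_k$ (through the ``first block'' of $[B]_k$, indexed by $\vecx_k$) and $\CAL{B}_{k-1}$, read cyclically (through the ``second block'' of $[B]_{k-1}$). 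Using the explicit tabulation in \eqref{equation: lie algebra matrices}, together with the row/column-major convention on $\vecx_k$, I would check case by case on the parities of $k$ and of $d$ that one of these two contributing blocks yields the full family $\{I_w \otimes N : N \in \CAL{M}_w\}$ on $\CAL{U}_k$ while the other yields $\{N \otimes I_w : N \in \CAL{M}_w\}$. The boundary indices $k=0$ and $k=d-1$ and the special form of $\CAL{B}_{d-1}$ when $d$ is odd must be treated as separate subcases, but the outcome is the same in every subcase. Since $\CAL{M}_w \otimes I_w$ and $I_w \otimes \CAL{M}_w$ generate the full matrix algebra $\CAL{M}_w \otimes \CAL{M}_w \cong \CAL{M}_{w^2}$, we obtain $\mathcal{A}_k = \CAL{M}_{w^2}$, and $\F^{w^2}$ is the (standard) simple $\CAL{M}_{w^2}$-module. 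Consequently the only $\GIMM$-invariant subspaces inside $\CAL{U}_k$ are $\{0\}$ and $\CAL{U}_k$; in particular $\CAL{U}_k$ is indecomposable, hence irreducible.

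For the converse, let $\mathcal{W} \subseteq \F^n$ be any non-zero irreducible $\GIMM$-invariant subspace. By Claim \ref{claim:invar_subspace_is_coordinate}, $\mathcal{W}$ is a coordinate subspace, so $\mathcal{W} = \bigoplus_{k=0}^{d-1} (\mathcal{W} \cap \CAL{U}_k)$. Because every element of $\GIMM$ is block-diagonal, each summand $\mathcal{W} \cap \CAL{U}_k$ is itself $\GIMM$-invariant. Irreducibility of $\mathcal{W}$ then forces exactly one summand to be non-zero, so $\mathcal{W} \subseteq \CAL{U}_k$ for some $k$, and the simplicity of $\CAL{U}_k$ established above yields $\mathcal{W} = \CAL{U}_k$.

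The main obstacle I anticipate is the bookkeeping in the case analysis on the parities of $k$ and $d$: one must carefully align which half of $[B]_k$ and of $[B]_{k-1}$ acts on $\CAL{U}_k$, and simultaneously track the row- versus column-major ordering of the $\vecx_k$ variables to decide whether a given block contributes an operator of type $I_w \otimes N$ or $N \otimes I_w$. Once that case analysis is laid out and the two families are produced on each $\CAL{U}_k$, the remainder is the standard simplicity argument for the defining module of a full matrix algebra.
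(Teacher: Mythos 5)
Your proposal is correct, and its key step takes a genuinely different route from the paper's. The paper also reduces the lemma to showing that any non-zero $\GIMM$-invariant subspace contained in $\CAL{U}_k$ equals $\CAL{U}_k$, but it proves this by a hands-on argument: it builds a specific element $E = B_{k-1}+B_k \in \GIMM$ whose blocks are formed from the all-ones matrix $1_w$, and, using Claim \ref{claim:invar_subspace_is_coordinate} inside $\CAL{U}_k$, it ``walks'' from a single coordinate vector $e_y$ with $y = x_{i,j}^{(k)}$ to all coordinate vectors in row $i$ and column $j$ (Observation \ref{obs:col_row_var_in_u}), and by iteration to all of $\CAL{U}_k$; the same walking observation is reused for the converse. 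You instead note that the operators induced on $\CAL{U}_k$ by $\CAL{B}_k$ and $\CAL{B}_{k-1}$ (available via Lemma \ref{theorem: lie algebra of of trace} and Equation \ref{equation: lie algebra matrices}) contain the families $I_w\otimes\CAL{M}_w$ and $\CAL{M}_w\otimes I_w$, which generate $\CAL{M}_{w^2}$ because $(A\otimes I_w)(I_w\otimes B)=A\otimes B$; simplicity of the standard $\CAL{M}_{w^2}$-module then yields that $0$ and $\CAL{U}_k$ are the only invariant subspaces of $\CAL{U}_k$, and your converse decomposes a coordinate subspace $\CAL{W}$ as $\bigoplus_k (\CAL{W}\cap\CAL{U}_k)$ and invokes indecomposability. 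Your route buys a simplicity proof for $\CAL{U}_k$ that does not need Claim \ref{claim:invar_subspace_is_coordinate} (that claim enters only in your converse), and it makes explicit the direct-sum step that the paper's final sentence (``Since $\CAL{U}$ is irreducible, $\CAL{U}=\CAL{U}_k$'') leaves implicit; the paper's approach is more elementary, staying with explicit matrices and avoiding any appeal to the structure of full matrix algebras. The parity bookkeeping you flag is real but benign: in every case (including $k=0$, $k=d-1$, and odd $d$) one adjacent block contributes $I_w\otimes\CAL{M}_w$ and the other $\CAL{M}_w\otimes I_w$, and note that possible elements of $\GIMM$ outside $\CAL{B}_0+\cdots+\CAL{B}_{d-1}$ can only shrink the lattice of invariant subspaces, so your argument does not depend on the unproved equality $\GIMM=\CAL{B}_0+\cdots+\CAL{B}_{d-1}$.
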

\begin{corollary}\label{corollary:irr_inv_sub_f}
If $f = \IMM_{w,d}(A\vecx)$, where $A \in \GL(n,\F)$, then the only irreducible $\G_f$-invariant subspaces of $\F^n$ are $A^{-1}\CAL{U}_0, \ldots, A^{-1}\CAL{U}_{d-1}$. 
\end{corollary}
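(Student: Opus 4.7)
The plan is to reduce the statement directly to Lemma \ref{lemma:irreducible_space_limm} using the conjugacy relation from Fact \ref{fact: lie algebra conjugacy}. Since $f = \IMM_{w,d}(A\vecx)$ with $A \in \GL(n,\F)$, Fact \ref{fact: lie algebra conjugacy} gives $\G_f = A^{-1}\GIMM A$. So every $E \in \G_f$ is of the form $A^{-1} M A$ for a unique $M \in \GIMM$, and the correspondence $E \leftrightarrow M$ is a bijection between $\G_f$ and $\GIMM$.

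The first step is to establish the following basic correspondence: a subspace $\CAL{V} \subseteq \F^n$ is $\G_f$-invariant if and only if $A\CAL{V}$ is $\GIMM$-invariant. Indeed, for $M \in \GIMM$ and the corresponding $E = A^{-1} M A \in \G_f$, we have $M (A\CAL{V}) = A E \CAL{V}$, which is contained in $A\CAL{V}$ exactly when $E\CAL{V} \subseteq \CAL{V}$. Since this equivalence holds for the matching pairs $E \leftrightarrow M$ that exhaust $\G_f$ and $\GIMM$ respectively, the two invariance properties are equivalent.

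The second step is to show that this correspondence preserves irreducibility in the sense of Definition \ref{def:irr_inv_sub_spaces}. Because $A$ is a linear isomorphism of $\F^n$, it preserves direct sum decompositions: $\CAL{V} = \CAL{V}_1 \oplus \CAL{V}_2$ if and only if $A\CAL{V} = A\CAL{V}_1 \oplus A\CAL{V}_2$. Combined with Step 1, any decomposition of $\CAL{V}$ into proper $\G_f$-invariant subspaces corresponds to a decomposition of $A\CAL{V}$ into proper $\GIMM$-invariant subspaces, and conversely. Hence $\CAL{V}$ is an irreducible $\G_f$-invariant subspace of $\F^n$ iff $A\CAL{V}$ is an irreducible $\GIMM$-invariant subspace of $\F^n$.

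Finally, Lemma \ref{lemma:irreducible_space_limm} asserts that the only irreducible $\GIMM$-invariant subspaces of $\F^n$ are $\CAL{U}_0, \ldots, \CAL{U}_{d-1}$. Pulling these back through $A^{-1}$, the only irreducible $\G_f$-invariant subspaces of $\F^n$ are $A^{-1}\CAL{U}_0, \ldots, A^{-1}\CAL{U}_{d-1}$, which is exactly the claim of the corollary. There is no real obstacle here; the only thing to be careful about is the direction of conjugation (so that $A^{-1}$ and not $A$ appears in front of the $\CAL{U}_k$'s) and the observation that invertible linear maps carry direct sums to direct sums, which is standard.
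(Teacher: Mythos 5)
Your proposal is correct and follows essentially the same route as the paper: the paper's proof is exactly the observation that $\CAL{U}$ is an irreducible $\GIMM$-invariant subspace if and only if $A^{-1}\CAL{U}$ is an irreducible $\G_f$-invariant subspace (via Fact \ref{fact: lie algebra conjugacy}), combined with Lemma \ref{lemma:irreducible_space_limm}. You have merely spelled out the conjugation and direct-sum-preservation details that the paper leaves implicit, and your handling of the direction of conjugation is right.
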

The above lemmas help us derive the group of symmetries of $\IMM_{w,d}$ over $\F$.

\begin{lemma}\label{fact: symmetries of IMM}
Let $\IMM_{w,d} = \tr(Q'_0\cdots Q'_{d-1})$, where $Q'_0\cdots Q'_{d-1} $ is a full-rank $(w,d,n)$-matrix product in $\vecx$ variables over $\F$. Then there are $C_0, \ldots, C_{d-1} \in \GL(w,\F)$ and $\ell\in [0,d-1]$ such that either $Q'_k = C_k\cdot Q_{\ell+k} \cdot C_{k+1}^{-1}$ for $k \in [0,d-1]$ ~or~ $Q'_k = C_k\cdot Q_{\ell -k}^T \cdot C_{k+1}^{-1}$ for $k \in [0,d-1]$.
\end{lemma}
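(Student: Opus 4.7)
The plan is to recast the hypothesis as a symmetry of $\IMM_{w,d}$ and then mine the detailed structure of $\GIMM$ built in Section~\ref{section: lie algebra of imm}. Because the matrix product $Q'_0\cdots Q'_{d-1}$ is full-rank, the $n = w^2d$ entries of the $Q'_k$'s form a basis of the space of linear forms in $\vecx$; bundling them as $\vecy_k :=$ entries of $Q'_k$ gives $\vecy = A\vecx$ for some $A \in \GL(n,\F)$. Reading the given identity in the $\vecy$ variables then shows $\IMM_{w,d}(A\vecx) = \IMM_{w,d}(\vecx)$, so by Fact~\ref{fact: lie algebra conjugacy} we have $A^{-1}\GIMM A = \GIMM$. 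The map $\CAL{V}\mapsto A\CAL{V}$ is therefore a lattice isomorphism on $\GIMM$-invariant subspaces, so by Lemma~\ref{lemma:irreducible_space_limm} it permutes $\CAL{U}_0,\dots,\CAL{U}_{d-1}$, producing a $\sigma \in S_d$ with $A\CAL{U}_k = \CAL{U}_{\sigma(k)}$. Equivalently, $A$ has block-permutation shape: its unique nonzero $w^2\times w^2$ block in column-group $k$ sits in row-group $\sigma(k)$ and is invertible.

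Next I would use the cyclic adjacency encoded in the $\CAL{B}_k$'s to pin down $\sigma$. For $B \in \CAL{B}_k$ the conjugate $A^{-1}BA$ has nonzero entries only in rows and columns indexed by $\vecx_{\sigma^{-1}(k)} \cup \vecx_{\sigma^{-1}(k+1)}$; since $A^{-1}BA \in \GIMM$, Lemma~\ref{lemma:block-diagonal matrix with two non-zero blocks in GIMM is in B_k} forces $\{\sigma^{-1}(k),\sigma^{-1}(k+1)\}$ to be a cyclically consecutive pair in $\Z/d\Z$. Requiring this for every $k$ pins $\sigma^{-1}$ (hence $\sigma$) down to one of two types: a cyclic shift $\sigma(k) = k+\ell$ or a cyclic reflection $\sigma(k) = \ell - k$ (both modulo $d$), for some $\ell \in [0,d-1]$. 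These are precisely the two alternatives in the conclusion.

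It remains to identify the invertible blocks of $A$ as matrix-conjugation maps. In the shift case, $Q'_k = \Phi_k(Q_{k+\ell})$ for an invertible linear $\Phi_k \colon \CAL{M}_w \to \CAL{M}_w$. Unpacking the explicit Kronecker-product form of $\CAL{B}_k$ shows that its element indexed by $M\in\CAL{M}_w$ acts infinitesimally as $Q_k \mapsto Q_kM$ and $Q_{k+1} \mapsto -MQ_{k+1}$; preservation of $\GIMM$ under conjugation by $A$ therefore forces, for every $M$, an $M' \in \CAL{M}_w$ with $\Phi_k(QM') = \Phi_k(Q)M$ for all $Q$. Setting $Q=I$ and writing $C = \Phi_k(I)$, a short computation shows $\Psi_k := C^{-1}\Phi_k$ satisfies $\Psi_k(QM) = \Psi_k(Q)\Psi_k(M)$ with $\Psi_k(I) = I$; thus $\Psi_k$ is a unital algebra automorphism of $\CAL{M}_w$, and by Skolem--Noether is inner, $\Psi_k(Q) = D_k^{-1} Q D_k$. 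Hence $\Phi_k(Q) = C_k Q D_k$ for some $C_k, D_k \in \GL(w,\F)$. Plugging back and using cyclic invariance of trace reduces $\IMM_{w,d}(A\vecx) = \IMM_{w,d}(\vecx)$ to an identity of the shape $\tr(\tilde E_0 Q_0 \tilde E_1 Q_1 \cdots \tilde E_{d-1} Q_{d-1}) = \tr(Q_0 Q_1 \cdots Q_{d-1})$ in which every $\tilde E_k$ is some $D_jC_{j+1}$; matching monomials forces each $\tilde E_k = \alpha_k I$ with $\prod_k \alpha_k = 1$. A cyclic rescaling of the $C_k$'s---solvable exactly because the scalars multiply to $1$ around the cycle---normalizes $D_k = C_{k+1}^{-1}$, giving $Q'_k = C_k Q_{k+\ell} C_{k+1}^{-1}$. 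The reflection case is parallel, except $\Psi_k$ turns out to be an \emph{anti-automorphism} of $\CAL{M}_w$, hence of the form $Q\mapsto D_k^{-1} Q^T D_k$ by Skolem--Noether, producing $\Phi_k(Q) = C_k Q^T D_k$ and, after the same rescaling, $Q'_k = C_k Q_{\ell-k}^T C_{k+1}^{-1}$. The main obstacle is this last step: cleanly extracting the Skolem--Noether form of $\Phi_k$ from $\CAL{B}_k$-preservation and book-keeping the scalar cycle, which is where the detailed Kronecker-product description of the $\CAL{B}_k$'s is genuinely used.
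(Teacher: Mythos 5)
Most of your outline tracks the paper's own argument: form $A$ from the entries of the $Q'_k$'s, use Fact~\ref{fact: lie algebra conjugacy} to get $A^{-1}\GIMM A=\GIMM$, use Lemma~\ref{lemma:irreducible_space_limm}/Corollary~\ref{corollary:irr_inv_sub_f} to see that $A$ is block-permuted via some $\sigma$, then analyze the diagonal blocks through conjugation of the $\CAL{B}_k$'s (your Skolem--Noether step is a slightly fancier packaging of the paper's direct computation that the $w\times w$ sub-blocks of $A_k$ are scalar multiples of one invertible matrix), and finally kill the scalar cycle by the trace/substitution argument. The genuine gap is in the step where you pin $\sigma$ down to a shift or reflection. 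You argue that $A^{-1}\CAL{B}_kA$ is a subspace of $\GIMM$ supported on the rows and columns indexed by $\vecx_{\sigma^{-1}(k)}$ and $\vecx_{\sigma^{-1}(k+1)}$, and then invoke Lemma~\ref{lemma:block-diagonal matrix with two non-zero blocks in GIMM is in B_k} to conclude that these two indices must be cyclically consecutive. That lemma cannot deliver this: consecutiveness is its \emph{hypothesis}, not its conclusion, and the statement you actually need --- that a non-zero element of $\GIMM$ cannot have its support confined to two non-adjacent blocks --- is false. Since $\IMM_{w,d}$ is homogeneous of degree one in each variable set, for any scalars $\beta_0,\ldots,\beta_{d-1}$ with $\sum_k\beta_k=0$ the diagonal matrix acting as $\beta_k$ on the coordinates indexed by $\vecx_k$ lies in $\GIMM$; taking $\beta_j=1$, $\beta_{j'}=-1$ and all other $\beta$'s zero gives a non-zero element of $\GIMM$ supported on an arbitrary, possibly non-adjacent, pair of blocks.

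To salvage your route you would need a quantitative statement, e.g.\ that the space of elements of $\GIMM$ supported on a fixed non-adjacent pair of blocks has dimension strictly less than $w^2$ (so it cannot contain the $w^2$-dimensional space $A^{-1}\CAL{B}_kA$). That in turn essentially requires the full identification $\GIMM=\CAL{B}_0+\cdots+\CAL{B}_{d-1}$, which the paper explicitly states without proof, or a non-adjacent analogue of Lemma~\ref{lemma:block-diagonal matrix with two non-zero blocks in GIMM is in B_k}; neither is available to you and you do not supply one. The paper sidesteps this entirely: in Observation~\ref{obs: existence of sigma for A} it pins down $\sigma$ using the evaluation-dimension invariant of Observation~\ref{observation: evaldim of trace of full-rank set-mult matrix product} (adjacent variable sets give $\textnormal{Evaldim}=w^2$, non-adjacent give $w^4$, and these numbers are intrinsic to the polynomial), and only \emph{after} reducing to a block-diagonal $A$ does it use conjugation of the $\CAL{B}_k$'s, where the two blocks involved are genuinely adjacent and the lemma applies. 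So either adopt the evaluation-dimension argument for this step, or add the missing dimension bound for non-adjacent pairs; as written the step does not follow.
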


%
\section{Reduction from $\EIMM$ to $\EDET$:~ Proof of Theorem \ref{theorem: reduction from equivalence testing of trace to det}}\label{sec: reduction from IMM to DET}
The reduction is given in Algorithm \ref{algorithm: equivalence testing for trimm}. The algorithm proceeds by assuming that the input polynomial $f$ is equivalent to $\IMM_{w,d}$ for some $w\geq 2$. A final polynomial identity test (PIT) takes care of the case when it is not. Algorithm \ref{algorithm: equivalence testing for trimm} has two main steps -- reduction from $\EIMM$ to $\IMMTI$ (Algorithm \ref{algorithm: reduction from equivalence to multilinear equivalence for trimm}), and reduction from $\IMMTI$ to $\EDET$ (Algorithm \ref{algorithm: multilinear equivalence test for trimm}).  Algorithm \ref{algorithm: reduction from equivalence to multilinear equivalence for trimm} is inspired by a similar reduction in \cite{KayalNST19} for the $\oIMM$ polynomial. Below we discuss the proof strategy of Algorithm \ref{algorithm: reduction from equivalence to multilinear equivalence for trimm}, and give the details in Appendix \ref{secappendix: reduction to multilinear equivalence testing}. Algorithm \ref{algorithm: multilinear equivalence test for trimm} is given in Section \ref{section: multilinear equivalence testing}. \\ 

\textbf{Reduction from $\EIMM$ to $\IMMTI$.} First, we compute bases of the irreducible $\G_f$-invariant subspaces of $\F^n$. By Corollary \ref{corollary:irr_inv_sub_f}, these are bases of the spaces $A^{-1}\CAL{U}_{\sigma(0)}, \ldots, A^{-1}\CAL{U}_{\sigma(d-1)}$, where $\sigma$ is an unknown permutation on $\{0, \ldots, d-1\}$. As $\dim_{\F}(\CAL{U}_k) = w^2$, we get $w$. Now, let $V_k$ be the $n \times w^2$ matrix consisting of the basis vectors of $A^{-1}\CAL{U}_{\sigma(k)}$. Form the $n \times n$ matrix $V = [V_0 \mid V_1 \mid \ldots \mid V_{d-1}]$. Observe that $V = A^{-1} \cdot E$, where $E$ is a "block-permuted" invertible matrix (by the definition of $\CAL{U}_k$). Thus, $h(\vecx) := f(V\vecx) = \IMM_{w,d}(E\vecx)$. We now make use of the evaluation dimension measure (Definition \ref{definition: evaldim}) on $h$ to essentially ensure that $E$ is a block-diagonal matrix.  \\

\begin{algorithm}
\caption{Reduction from $\EIMM$ to $\EDET$} \label{algorithm: equivalence testing for trimm}
\begin{algorithmic}
\State INPUT: Blackbox access to an $n$-variate, degree $d$ polynomial $f$ and oracle access to $\EDET$.
\State OUTPUT: If there is an $w \in \mathbb{N}$  such that $f$ is equivalent to $\IMM_{w,d}$ then output an $A \in \GLNF$ such that $f(\vecx) = \IMM_{w,d}(A\vecx)$. Otherwise output `No such $w$ exists'.
\end{algorithmic}
\begin{algorithmic}[1]
\Statex \begin{center}\textcolor{gray}{Reduction to $\IMMTI$} \end{center}
\State Use Algorithm~\ref{algorithm: reduction from equivalence to multilinear equivalence for trimm} with input $f$ to compute $A' \in \GLNF$ and a $w\in \N$ such that $h(\vecx) = f(A'\vecx)$ is a $d$-tensor in the variable sets $\vecx_0, \ldots, \vecx_{d-1}$ which is isomorphic to $\IMM_{w,d}$. If Algorithm \ref{algorithm: reduction from equivalence to multilinear equivalence for trimm} outputs 'No', output `No such $w$ exists'. 

~
\Statex \begin{center}\textcolor{gray}{Reduction from $\IMMTI$ to $\EDET$} \end{center}
\State Use Algorithm~\ref{algorithm: multilinear equivalence test for trimm} with input $h$, $w$ and oracle access to $\EDET$ to compute matrices $B_0, \ldots ,B_{d-1} \in \GL(w^2, \F)$ such that $h(\vecx) = \IMM_{w,d}(B_0\vecx_0, \ldots, B_{d-1}\vecx_{d-1})$. If Algorithm \ref{algorithm: multilinear equivalence test for trimm} outputs 'No' then output `No such $w$ exists'. 
\State Let $B\in \GL(n,\F)$ be the block-diagonal matrix whose $k$-th block is $B_k$, and let $A = B(A')^{-1}$. 

~
\Statex \begin{center} \textcolor{gray}{Final PIT} \end{center}
\State Pick a random point $\mathbf{a} \in S^n$ where $S \subseteq \F$ is of size $n^5$. If $f(\mathbf{a}) = \IMM_{w,d}(A\mathbf{a})$ then output $w$ and $A$, else output `No such $w$ exists'.
\end{algorithmic}
\end{algorithm}
\subsection{Reduction from $\IMMTI$ to $\EDET$}{\label{section: multilinear equivalence testing}}

We will use a few terminologies and notations about matrices, matrix products and ABP that are defined in Appendix \ref{subsection: abps and matrix products}. The following two claims (proved in Appendix \ref{secappendix: reduction to det}) help in the argument. 
\begin{claim}{\label{claim: uniqueness from transpose}}
Let $X$ be a $w\times w$ full-rank linear matrix and $Y=I_w\otimes X$. Then there does not exist non-zero matrices $T,S \in \CAL{M}_{w^2}(\F)$ such that $T\cdot Y = Y^T\cdot S$.
\end{claim}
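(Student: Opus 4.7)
The plan is to reduce the $w^2\times w^2$ identity $T\cdot Y = Y^T\cdot S$ to a $w\times w$ identity via the Kronecker structure of $Y=I_w\otimes X$, and then to extract the conclusion from the full-rank hypothesis on $X$. Concretely, I would write $T=(T_{ij})_{i,j\in[w]}$ and $S=(S_{ij})_{i,j\in[w]}$ as $w\times w$ block matrices whose blocks lie in $\CAL{M}_w(\F)$. Since $Y=I_w\otimes X$ is block-diagonal with $X$ on each diagonal block, and likewise $Y^T=I_w\otimes X^T$, the product $T\cdot Y$ has $(i,j)$-block equal to $T_{ij}X$ and $Y^T\cdot S$ has $(i,j)$-block equal to $X^T S_{ij}$. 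Thus $T\cdot Y = Y^T\cdot S$ decouples block-wise into
\[
T_{ij}\,X \;=\; X^T\,S_{ij} \quad\text{for every } i,j\in[w].
\]
It therefore suffices to prove the following sublemma: \emph{if $X$ is a full-rank $w\times w$ linear matrix and $T',S'\in\CAL{M}_w(\F)$ satisfy $T'X = X^T S'$, then $T' = S' = 0$.}

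For the sublemma I would exploit the full-rank hypothesis on $X$: its $w^2$ entries are linearly independent as linear forms in the underlying variables, so an invertible linear change of variables lets me assume without loss of generality that $X = (x_{ij})_{i,j\in[w]}$ is the generic $w\times w$ symbolic matrix with $w^2$ distinct variables. Equating the coefficients of each variable $x_{ij}$ on the two sides of $T'X = X^T S'$ turns the polynomial identity into the finite system
\[
T'\,E_{ij} \;=\; E_{ji}\,S' \quad\text{for every } i,j\in[w],
\]
where $E_{ij}\in\CAL{M}_w(\F)$ is the standard elementary matrix with a $1$ in position $(i,j)$ and zeros elsewhere.

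Finally, I would read off entries. For fixed $i,j$, the matrix $T'E_{ij}$ has the $i$-th column of $T'$ sitting in its $j$-th column and zeros elsewhere, whereas $E_{ji}S'$ has the $i$-th row of $S'$ sitting in its $j$-th row and zeros elsewhere. Comparing position $(a,b)$ therefore yields $T'_{ai}\,\delta_{bj} = \delta_{aj}\,S'_{ib}$. Choosing $b=j$ and $a\ne j$ forces $T'_{ai}=0$; the only subtle point is that such a $j$ exists precisely because $w\ge 2$, i.e.\ the hypothesis $w\ge 2$ is essential (for $w=1$ the claim visibly fails, as $T=S\in\F$ is unconstrained). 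Letting $a$ vary over $[w]$ and, for each $a$, choosing some $j\ne a$ shows $T'=0$; the symmetric choice $a=j$ with $b\ne j$ shows $S'=0$. Unwinding the block decomposition, every $T_{ij}$ and $S_{ij}$ is zero, so $T=S=0$, which proves the claim.
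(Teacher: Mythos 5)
Your proof is correct and follows essentially the same route as the paper's: reduce via an invertible change of variables to the generic symbolic matrix, split $T$ and $S$ into $w\times w$ blocks so that the equation decouples into $T_{ij}X = X^T S_{ij}$, and then show every block vanishes. The only cosmetic difference is that you equate coefficients of each variable to get the elementary-matrix identities $T'E_{ij}=E_{ji}S'$, whereas the paper compares specific entries and uses variable-disjointness; your observation that $w\ge 2$ is essential is consistent with the paper's standing assumption $w\ge 2$.
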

\begin{claim}{\label{claim: Uniqueness of commuting matrices}}
Let $X$ be a $w \times w$ full-rank linear matrix and $Y = I_w\otimes X$, and suppose $T,S \in \CAL{M}_{w^2}(\F)$ such that $T\cdot Y = Y \cdot S$. Then $T = S = M \otimes I_w$ for some $M \in \CAL{M}_w(\F)$.
\end{claim}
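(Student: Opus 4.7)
The plan is to exploit the block-diagonal Kronecker structure of $Y = I_w \otimes X$ to decouple the equation $T\cdot Y = Y\cdot S$ into a family of pointwise identities in $w \times w$ blocks of $T$ and $S$, and then use that $X$ is a full-rank linear matrix (so its entries are $\F$-linearly independent linear forms) to pin down each block.

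First, I would write $T = (T_{ij})_{i,j \in [w]}$ and $S = (S_{ij})_{i,j \in [w]}$ as $w\times w$ arrays of $w\times w$ blocks $T_{ij}, S_{ij} \in \CAL{M}_w(\F)$. Since $Y = \diag(X,\ldots,X)$, the $(i,j)$-th block of $T\cdot Y$ is $T_{ij}\cdot X$ while the $(i,j)$-th block of $Y\cdot S$ is $X\cdot S_{ij}$. Hence the original matrix equation unravels into the family
\[
T_{ij}\cdot X \;=\; X\cdot S_{ij} \qquad \text{for every } i,j \in [w].
\]

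Next, I would argue that each such identity forces $T_{ij} = S_{ij} = c_{ij}\cdot I_w$ for some $c_{ij}\in\F$. After an $\F$-linear change of variables (harmless because $X$'s entries are linearly independent linear forms), we may take $X = (x_{kl})_{k,l\in[w]}$ with distinct indeterminate entries. Comparing, in the $(p,l)$-entry of $T_{ij}\cdot X = X\cdot S_{ij}$, the coefficient of an arbitrary variable $x_{ab}$ on either side gives
\[
[l=b]\,(T_{ij})_{pa} \;=\; [p=a]\,(S_{ij})_{bl}, \qquad a,b,p,l\in[w].
\]
Fixing $l=b$ with $a\neq p$ yields $(T_{ij})_{pa} = 0$, so $T_{ij}$ is diagonal; fixing $a=p$ with $b\neq l$ yields $(S_{ij})_{bl} = 0$, so $S_{ij}$ is diagonal; and finally $a=p,\,b=l$ gives $(T_{ij})_{pp} = (S_{ij})_{ll}$ for all $p,l$, forcing both diagonals to be a common scalar $c_{ij}$.

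Assembling the $c_{ij}$ into $M := (c_{ij})\in\CAL{M}_w(\F)$ then yields $T = S = M\otimes I_w$, as claimed. The argument is essentially routine linear algebra once the block decomposition is written down, and I do not anticipate a substantive obstacle; the only care needed is the passage from an arbitrary full-rank linear $X$ to the generic $X=(x_{kl})$ case, which is justified by the linear independence of the entries of $X$.
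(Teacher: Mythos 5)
Your proof is correct and follows essentially the same route as the paper: reduce to the case where $X$ is a symbolic matrix with distinct variable entries, decompose $T,S$ into $w\times w$ blocks so that the equation becomes $T_{ij}\cdot X = X\cdot S_{ij}$, and force each block to be a scalar multiple of $I_w$. The only (harmless) variation is that the paper first specializes $X$ to $I_w$ to conclude $T=S$ and then argues that a matrix commuting with a fully symbolic $X$ is scalar, whereas you obtain both facts simultaneously by direct coefficient comparison.
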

The correctness of Algorithm \ref{algorithm: multilinear equivalence test for trimm} is argued below by tracing its steps. \vspace{0.1in} 

\begin{algorithm}[ht!]
\caption{Reduction from $\IMMTI$ to $\EDET$} \label{algorithm: multilinear equivalence test for trimm}
\begin{algorithmic}
\State INPUT: A $w\in \N$, blackbox access to $d$-tensor $h(\vecx_0, \ldots, \vecx_{d-1})$ that is isomorphic to $\IMM_{w,d}$, and oracle access to $\EDET$. 
\State OUTPUT: Matrices $B_0, \ldots , B_{d-1} \in \GL(w^2,\F)$ such that $h(\vecx) = \IMM_{w,d}(B_0\vecx_0, \ldots , B_{d-1}\vecx_{d-1})$.
\end{algorithmic}
\begin{algorithmic}[1]
\State Use the set-multilinear ABP reconstruction algorithm (which follows from \cite{KlivansS03}) to construct a $(w^2,d,n)$ set-multilinear ABP $Y'_0\ldots Y'_{d-1}$ in $\vecx_0, \ldots ,\vecx_{d-1}$ variables that computes $h$. 
\State For $k\in [1,d-2]$, use the factorization algorithm in \cite{KaltofenT90} to compute blackbox access to a degree-$w$ polynomial $g_k$ such that $\det(Y'_k) = \alpha_k g_k(\vecx_k)^w$, where $\alpha_k\in \F^{\times}$.
\State For $k\in [1,d-2]$, use the $\EDET$ oracle on input $g_k$ to compute $X'_k$ such that $\det(X'_k) =g_k$. If $\EDET$ returns $g_k$ is not equivalent to $\Det_w$, then output `No'.

~
\State For $k \in [1,d-2]$, let $Z_k = I_w \otimes X'_k$.
\State For $k \in [1,d-2]$, compute $T'_{k-1},~ S'_{k} \in \GL(w^2, \F)$ such that either $T'_{k-1} \cdot Y'_{k} = Z_k \cdot S'_k$ or $T'_{k-1} \cdot Y'_{k} = Z_k^{T} \cdot S'_k$. If both equalities are satisfied, output `No' (see Observation \ref{observation:exactly_one_succeds}).

~
\State Let $\widehat{Y}_0 = Y'_0\cdot (T'_0)^{-1}$,~ $\widehat{Y}_k = (T'_{k-1})\cdot Y'_k\cdot (T'_k)^{-1}$ \ for $k\in [1,d-3]$,~ $\widehat{Y}_{d-2} = (T'_{d-3})\cdot Y'_{d-2}\cdot (S'_{d-2})^{-1}$,~  and $\widehat{Y}_{d-1} = S'_{d-2}\cdot Y'_{d-1}$. 
\State Let $\widehat{X}_{d-2}$ be such that $\widehat{Y}_{d-2} = I_w \otimes \widehat{X}_{d-2}$, and for $k\in [1,d-3]$ construct $\widehat{M}_k \in \GL(w, \F)$ and $\widehat{X}_k$ such that $\widehat{Y}_{k} = (\widehat{M}_k\otimes I_w) \cdot (I_w\otimes \widehat{X}_k)$.~~~ (See Observation \ref{obs:structure of widehatY ABP}.)
\State Let $\overline{Y}_{d-1} = (\prod_{k=1}^{d-3}(\widehat{M}_k\otimes I_w))\cdot \widehat{Y}_{d-1}$. Construct $\widehat{X}_{d-1}$ such that its $(i,j)$-th entry is the $((j-1)w + i)$-th entry of $\overline{Y}_{d-1}$, and $\widehat{X}_0$ such that its $(i,j)$-th entry is the $((i-1)w + j)$-th entry of $\widehat{Y}_{0}$.  

~
\State Obtain the transformations $B_0,\ldots ,B_{d-1} \in \GL(w^2,\F)$ from (the entries of) $\widehat{X}_0, \ldots,  \widehat{X}_{d-1}$ respectively. Return $B_0,\ldots ,B_{d-1}$. 
\end{algorithmic}
\end{algorithm}
\textbf{Steps 1--3}: Assume that $h$ is isomorphic to $\IMM_{w,d}$. Hence, there is a full-rank $(w,d,n)$ set-multilinear matrix product $X_0\ldots X_{d-1}$ in $\vecx_0, \ldots, \vecx_{d-1}$ variables such that $h = \tr(X_0\ldots X_{d-1})$. From Fact \ref{fact: ABP computing IMM}, $h$ is computed by the $(w^2,d,n)$-set-multilinear ABP $Y_0\ldots Y_{d-1}$, where 
\begin{align*}
Y_0 & = (X_0(1,1), \ldots, X_0(1,w), X_0(2,1),\ldots, X_0(2,w), \ldots, X_0(w,1), \ldots, X_0(w,w)) \\
Y_{k} & = I_w\otimes X_k  ~~~~~\textnormal{for } k\in [1,d-2] \\
Y_{d-1} & = (X_{d-1}(1,1), \ldots , X_{d-1}(w,1), X_{d-1}(1,2),\ldots , X_{d-1}(w,2), \ldots, X_{d-1}(1,w), \ldots, X_{d-1}(w,w))^T ~.
\end{align*}
Using the randomized polynomial-time set-multilinear ABP reconstruction algorithm in \cite{KlivansS03}, a $(w^2,d,n)$ set-multilinear ABP $Y'_0\ldots Y'_{d-1}$ computing $h$ is constructed in Step $1$. It follows from the properties of this algorithm and the ABP $Y_0 \ldots Y_{d-1}$ that there are $T_{0}, \ldots ,T_{d-2}\in \GL(w^2,\F)$ so that
$$Y'_0 = Y_0 \cdot T_0~,~~~~~~~Y'_k = T_{k-1}^{-1}\cdot Y_k\cdot T_{k}  ~~~\textnormal{for } k\in [1,d-2], \text{ and }~~~~~~~Y'_{d-1} = T_{d-2}^{-1}\cdot Y_{d-1}~.\footnote{See Appendix \ref{subsection: abps and matrix products}: Set-multilinear ABP reconstruction, for an explanation.}$$
Hence, for all $k \in [1,d-2]$, $\det(Y'_k) =  c_k(\det(X_k))^w$, where $c_k \in \F^{\times}$. As the determinant polynomial is irreducible, at Step $2$, we have $g_k = \beta_k \det(X_k) = \det(\diag(\beta_k, 1, \ldots,1) \cdot X_k)$ for some $\beta_k \in \F^{\times}$ which implies $g_k$ is equivalent to $\Det_w$.
At step 3, $\EDET$ on input $g_k$ returns $X'_k$ such that  
\begin{equation*}
X_k = C_k\cdot X'_k \cdot D_k~~~~\text{or}~~~~X_k = C_k\cdot (X'_k)^T \cdot D_k ~~~~~~\text{where $C_k, D_k \in \GL(w,\F)$.}
\end{equation*}
 The above follows from the group of symmetries of $\Det_w$ (see Fact 1 in \cite{KayalNS19}). \vspace{0.1in}

\textbf{Steps 4--5}: At Step 4, for $k\in [1,d-2]$, the matrix $Z_k= I_w \otimes X'_k$ satisfies  
$$Y_k = (I_w\otimes C_k)\cdot Z_k\cdot (I_w\otimes D_k)~~~~\text{or}~~~~Y_k = (I_w\otimes C_k)\cdot Z_k^T\cdot (I_w\otimes D_k).$$
Hence, at Step 5 there are $T'_{k-1} := (I_w\otimes C_k^{-1})\cdot T_{k-1}$ and $S'_k := (I_w\otimes D_k)\cdot T_k$ in $\GL(w^2, \F)$ such that 
$$T'_{k-1} \cdot Y'_{k} = Z_k \cdot S'_k  ~~~~\text{or}~~~~T'_{k-1} \cdot Y'_{k} = Z_k^{T} \cdot S'_k.$$  
Observation \ref{observation:exactly_one_succeds} uses Claim \ref{claim: uniqueness from transpose} to show that at Step 5 we can identify between the above two cases, as only one of them is true (proof in Appendix \ref{secappendix: reduction to det}).
\begin{observation}{\label{observation:exactly_one_succeds}}
If $h(\vecx_0, \ldots, \vecx_{d-1})$ is isomorphic to $\IMM_{w,d}$ then for matrices $Y'_k$ and $Z_k$ as computed in Algorithm \ref{algorithm: multilinear equivalence test for trimm}, where $k \in [1,d-2]$, there are no matrices $T'_{k-1}, S'_k\in \GL(w^2,\F)$ such that both $T'_{k-1} \cdot Y'_{k} = Z_k \cdot S'_k$ and $T'_{k-1} \cdot Y'_{k} = Z_k^{T} \cdot S'_k$ are simultaneously true.
\end{observation}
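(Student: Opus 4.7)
The plan is to suppose, for contradiction, that both equations admit invertible solutions and to reduce the situation to a witness that is forbidden by Claim \ref{claim: uniqueness from transpose}. Concretely, I would take $(T'_{k-1}, S'_k), (T''_{k-1}, S''_k) \in \GL(w^2,\F)^2$ satisfying $T'_{k-1} \cdot Y'_k = Z_k \cdot S'_k$ and $T''_{k-1} \cdot Y'_k = Z_k^T \cdot S''_k$ respectively. Solving the first equation for $Y'_k$ and substituting into the second immediately gives
\[
A \cdot Z_k \;=\; Z_k^T \cdot B, \qquad \text{where } A := T''_{k-1}(T'_{k-1})^{-1},\ B := S''_k (S'_k)^{-1},
\]
with $A, B \in \GL(w^2,\F)$ and in particular non-zero. (In the most literal reading of the observation, in which the same pair $(T'_{k-1}, S'_k)$ is meant to serve both equations, this is the degenerate case $A = B = I_{w^2}$, after cancelling $Y'_k$ and $S'_k$.)

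To apply Claim \ref{claim: uniqueness from transpose} I need $X'_k$ to be a $w \times w$ full-rank linear matrix. Since $h$ is assumed isomorphic to $\IMM_{w,d}$, the setup preceding Step 1 of Algorithm \ref{algorithm: multilinear equivalence test for trimm} gives $h = \tr(X_0 \cdots X_{d-1})$ for a full-rank $(w,d,n)$ set-multilinear matrix product, so each $X_k$ is itself a full-rank $w \times w$ linear matrix in $\vecx_k$. The $\EDET$ oracle invoked at Step 3 returns $X'_k$ with either $X_k = C_k X'_k D_k$ or $X_k = C_k (X'_k)^T D_k$ for some $C_k, D_k \in \GL(w,\F)$, and in either case $X'_k$ arises from the full-rank $X_k$ by invertible left/right multiplication and possibly transposition. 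Hence $X'_k$ is a full-rank linear matrix, and $Z_k = I_w \otimes X'_k$ satisfies the hypothesis of Claim \ref{claim: uniqueness from transpose}.

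Applying Claim \ref{claim: uniqueness from transpose} with $Y = Z_k$, $T = A$, $S = B$ then rules out the invertible (hence non-zero) witness constructed above, yielding the desired contradiction and completing the proof. The only step with real content is the reduction to Claim \ref{claim: uniqueness from transpose}; checking that $X'_k$ is full-rank linear is essentially forced by the isomorphism of $h$ with $\IMM_{w,d}$ together with the symmetry group of $\Det_w$ used at Step 3. I do not anticipate a serious obstacle here: the key conceptual point is recognizing that two invertible certificates, one for each transpose pattern, can be combined via elementary linear algebra into a witness of precisely the form that Claim \ref{claim: uniqueness from transpose} forbids.
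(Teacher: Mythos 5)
Your proof is correct, and it rests on the same key lemma as the paper's proof (Claim \ref{claim: uniqueness from transpose}), but the reduction to that lemma is genuinely different. The paper first recalls the structural facts from Steps 1--3, namely $Y'_k = T_{k-1}^{-1}\cdot Y_k\cdot T_k$ with $Y_k = (I_w\otimes C_k)\cdot Z_k\cdot(I_w\otimes D_k)$ or $Y_k = (I_w\otimes C_k)\cdot Z_k^T\cdot(I_w\otimes D_k)$, splits into these two cases, and substitutes the structural expression for $Y'_k$ into the hypothesized equation of the ``wrong'' transpose pattern; the resulting identity $T\cdot Z_k = Z_k^T\cdot S$ with non-zero $T,S$ then contradicts Claim \ref{claim: uniqueness from transpose}. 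You instead eliminate $Y'_k$ between the two hypothesized identities themselves, obtaining $A\cdot Z_k = Z_k^T\cdot B$ with $A,B$ invertible, so no case analysis and no explicit use of $T_{k-1},T_k,C_k,D_k$ is needed beyond certifying, exactly as you do via the determinant symmetries and the full-rankness of $X_k$, that $X'_k$ is a full-rank linear matrix. As for what each route buys: your elimination argument proves the stronger statement that the two linear systems cannot both admit invertible solutions even via different witness pairs, which is what Step 5 of Algorithm \ref{algorithm: multilinear equivalence test for trimm} actually needs in order to tell the two cases apart (the literal one-pair statement is, as you note, the degenerate case $A=B=I_{w^2}$); the paper's substitution, on the other hand, sets up the bookkeeping of $T'_{k-1}$ and $S'_k$ in terms of $T_{k-1},T_k,C_k,D_k$ that is reused immediately afterwards in Observation \ref{observation:uniquess_ti_si}, so its case analysis is not wasted work in context. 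Either way, your argument is sound as written.
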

At step 5 the matrices $T'_{k-1}$ and $S'_k$ are computed by solving linear equations. Choosing a solution at random from the solution space ensures that the computed matrices $T'_{k-1}$ and $S'_k$ are invertible with high probability. Henceforth, we assume that $T'_{k-1} \cdot Y'_{k} = Z_k \cdot S'_k$. The proof for $T'_{k-1} \cdot Y'_{k} = Z_k^{T} \cdot S'_k$ is similar. In Observation \ref{observation:uniquess_ti_si} we show that $T'_{k-1}$ and $S'_k$ are related to $T_{k-1}$ and $T_k$ respectively for $k\in [1,d-2]$. The proof of Observation \ref{observation:uniquess_ti_si}, which uses Claim \ref{claim: Uniqueness of commuting matrices}, is in Appendix \ref{secappendix: reduction to det}.
\begin{observation}[Uniqueness of $T'_{k-1}$ and $S'_k$]{\label{observation:uniquess_ti_si}} The matrices $T'_{k-1}$ and $S'_k$ computed at Step 5 of Algorithm \ref{algorithm: multilinear equivalence test for trimm}, where $k\in [1,d-2]$, satisfy the following: $(T'_{k-1})^{-1} = T_{k-1}^{-1} \cdot (I_w \otimes C_k)  \cdot (M_k^{-1} \otimes I_w)$ and $S'_k = (M_k \otimes I_w) \cdot (I_w \otimes D_k) \cdot T_k$, where $M_k \in \GL(w, \F)$. 
\end{observation}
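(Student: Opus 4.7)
The plan is to compare the computed pair $(T'_{k-1},S'_k)$ against a natural ``canonical'' solution of the linear system $T\cdot Y'_k = Z_k\cdot S$ and argue that any two invertible solutions differ by left-multiplication by a matrix of the form $M_k\otimes I_w$, which will force $M_k\in\GL(w,\F)$ and immediately give the stated formulas.

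First I would write down the canonical solution suggested by the narrative preceding the observation, namely
\[
\widetilde{T}_{k-1}\;:=\;(I_w\otimes C_k^{-1})\cdot T_{k-1},\qquad \widetilde{S}_k\;:=\;(I_w\otimes D_k)\cdot T_k,
\]
and verify that $\widetilde{T}_{k-1}\cdot Y'_k=Z_k\cdot \widetilde{S}_k$. This is a direct substitution using $Y'_k = T_{k-1}^{-1}Y_kT_k$, $Y_k=I_w\otimes X_k$ and the factorisation $X_k=C_k X'_k D_k$ from Step~3; together they give $Y_k=(I_w\otimes C_k)\,Z_k\,(I_w\otimes D_k)$, whence the claim. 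Both $\widetilde{T}_{k-1}$ and $\widetilde{S}_k$ are invertible, since $C_k,D_k,T_{k-1},T_k$ are.

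Next, given the computed invertible pair $(T'_{k-1},S'_k)$ from Step~5, I would set $U:=T'_{k-1}\,\widetilde{T}_{k-1}^{-1}\in\GL(w^2,\F)$ and $V:=S'_k\,\widetilde{S}_k^{-1}\in\GL(w^2,\F)$. The identities $T'_{k-1}Y'_k=Z_kS'_k$ and $\widetilde{T}_{k-1}Y'_k=Z_k\widetilde{S}_k$, combined with the invertibility of $Y'_k$ (a consequence of the matrix product being full-rank), immediately yield the commutation relation $U\cdot Z_k = Z_k\cdot V$. The key step is now to apply Claim~\ref{claim: Uniqueness of commuting matrices} to $Z_k = I_w\otimes X'_k$: since $X'_k$ is a $w\times w$ full-rank linear matrix (it equals $C_k^{-1}X_kD_k^{-1}$ up to transposition, and $X_k$ is full-rank by hypothesis on $h$), the claim gives $U=V=M_k\otimes I_w$ for some $M_k\in\CAL{M}_w(\F)$. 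Invertibility of $U$ forces $M_k\in\GL(w,\F)$. Unpacking $U=T'_{k-1}\widetilde{T}_{k-1}^{-1}$ and $V=S'_k\widetilde{S}_k^{-1}$ gives
\[
T'_{k-1}=(M_k\otimes I_w)(I_w\otimes C_k^{-1})T_{k-1},\qquad S'_k=(M_k\otimes I_w)(I_w\otimes D_k)T_k,
\]
which, after inverting the first equation and using $(M_k\otimes I_w)^{-1}=M_k^{-1}\otimes I_w$ together with the commutativity of $(I_w\otimes C_k)$ and $(M_k^{-1}\otimes I_w)$, is exactly the statement of the observation.

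The only subtlety, and therefore the point I would be most careful about in writing it up, is justifying that $X'_k$ satisfies the hypotheses of Claim~\ref{claim: Uniqueness of commuting matrices} as a \emph{full-rank linear matrix}. This is where the earlier steps of Algorithm~\ref{algorithm: multilinear equivalence test for trimm} are used: the factor $g_k$ of $\det(Y'_k)$ is equivalent to $\Det_w$ (as argued before Step~3), so the $\EDET$ oracle returns a linear matrix $X'_k$ with $\det(X'_k)=g_k$, and its full-rankness follows from the non-vanishing of $\det(X'_k)$ on the $\vecx_k$-variables. Once this is in place, the rest of the proof is a mechanical application of Claim~\ref{claim: Uniqueness of commuting matrices} as outlined.
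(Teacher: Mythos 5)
Your proposal is correct and is essentially the paper's argument: after unpacking, your $U=T'_{k-1}\widetilde{T}_{k-1}^{-1}=T'_{k-1}T_{k-1}^{-1}(I_w\otimes C_k)$ and $V=S'_k\widetilde{S}_k^{-1}=S'_kT_k^{-1}(I_w\otimes D_k^{-1})$ are exactly the matrices the paper obtains by substituting $Y'_k=T_{k-1}^{-1}(I_w\otimes C_k)Z_k(I_w\otimes D_k)T_k$ into $T'_{k-1}Y'_k=Z_kS'_k$, and both proofs conclude via the commutation relation $U\cdot Z_k=Z_k\cdot V$ and Claim~\ref{claim: Uniqueness of commuting matrices}, so the ``canonical solution'' framing is only cosmetic. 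One small polish: full-rankness of $X'_k$ should be justified by the invertibility of the transformation returned by the $\EDET$ oracle (non-vanishing of $\det(X'_k)$ alone does not give linear independence of its entries), and the invertibility of $Y'_k$ you invoke is not actually needed.
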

\textbf{Steps 6--8}: Observation \ref{obs:structure of widehatY ABP} proved in Appendix \ref{secappendix: reduction to det} describes the structure of the matrices $\widehat{Y}_0,\ldots, \widehat{Y}_{d-1}$ computed at Step 6. Clearly, $\widehat{Y}_0\ldots \widehat{Y}_{d-1} = Y'_0\ldots Y'_{d-1}$ is a set-multilinear ABP computing $h$. 
\begin{observation}{\label{obs:structure of widehatY ABP}} Let $M_{1}, \ldots, M_{d-2}$ be the matrices as defined in Observation \ref{observation:uniquess_ti_si}. Then 
\begin{enumerate}
\item  $\widehat{Y}_k = (M_k M_{k+1}^{-1} \otimes I_w)\cdot(I_w \otimes (C_k^{-1} \cdot X_k \cdot C_{k+1}))$ ~~for $k \in [1,d-3]$, 
\item  $\widehat{Y}_{d-2} = I_w \otimes (C_{d-2}^{-1} \cdot X_{d-2} \cdot D_{d-2}^{-1})$, 
\item  $\widehat{Y}_0 = Y_0\cdot (I_w \otimes C_1)  \cdot (M_1^{-1} \otimes I_w)$, and $\widehat{Y}_{d-1} = (M_{d-2} \otimes I_w) \cdot (I_w \otimes D_{d-2})\cdot Y_{d-1}$.
\end{enumerate}
\end{observation}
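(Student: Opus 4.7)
The plan is to verify each of the three claims in Observation~\ref{obs:structure of widehatY ABP} by direct substitution, using three ingredients: (i) the explicit formulas for $T'_{k-1}$ and $S'_k$ from Observation~\ref{observation:uniquess_ti_si}, (ii) the relations $Y'_0 = Y_0 \cdot T_0$, $Y'_k = T_{k-1}^{-1} \cdot Y_k \cdot T_k$ for $k\in[1,d-2]$, and $Y'_{d-1} = T_{d-2}^{-1} \cdot Y_{d-1}$ (as discussed in Steps~1--3 of the analysis), and (iii) the standard Kronecker identity $(A\otimes I_w)(I_w\otimes B) = (I_w\otimes B)(A\otimes I_w) = A\otimes B$. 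The indexing subtlety to keep in mind is that when Step~5 is executed at iteration $k$, it produces the pair $(T'_{k-1}, S'_k)$ both tied to the \emph{same} matrix $M_k$; hence $T'_k$ is the output of iteration $k+1$ and involves $M_{k+1}$, not $M_k$.

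For part~1, I would fix $k\in [1,d-3]$ and expand $\widehat{Y}_k = T'_{k-1}\cdot Y'_k\cdot (T'_k)^{-1}$. Substituting $T'_{k-1} = (M_k\otimes I_w)\cdot (I_w\otimes C_k^{-1})\cdot T_{k-1}$ (from Observation~\ref{observation:uniquess_ti_si} applied with index $k$) and $(T'_k)^{-1} = T_k^{-1}\cdot (I_w\otimes C_{k+1})\cdot (M_{k+1}^{-1}\otimes I_w)$ (from the same observation applied with index $k+1$), and then plugging in $Y'_k = T_{k-1}^{-1}\cdot Y_k\cdot T_k$ with $Y_k = I_w\otimes X_k$, all the $T_{k-1}$'s and $T_k$'s cancel. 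The remainder collapses via the commutation identity to $(M_k M_{k+1}^{-1}\otimes I_w)\cdot (I_w\otimes C_k^{-1} X_k C_{k+1})$.

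For part~2, the analogous expansion of $\widehat{Y}_{d-2} = T'_{d-3}\cdot Y'_{d-2}\cdot (S'_{d-2})^{-1}$ uses $T'_{d-3}$ and $S'_{d-2}$, both from iteration $k=d-2$, so both carry the matrix $M_{d-2}$. After cancelling the $T$'s and using $Y_{d-2} = I_w\otimes X_{d-2}$, the outer factors $(M_{d-2}\otimes I_w)$ and $(M_{d-2}^{-1}\otimes I_w)$ multiply to the identity, leaving $I_w\otimes (C_{d-2}^{-1} X_{d-2} D_{d-2}^{-1})$. For part~3, the computations are shorter: $\widehat{Y}_0 = Y_0\cdot T_0\cdot (T'_0)^{-1}$ immediately simplifies to $Y_0\cdot (I_w\otimes C_1)\cdot (M_1^{-1}\otimes I_w)$ after substituting $(T'_0)^{-1} = T_0^{-1}\cdot (I_w\otimes C_1)\cdot (M_1^{-1}\otimes I_w)$, and $\widehat{Y}_{d-1} = S'_{d-2}\cdot T_{d-2}^{-1}\cdot Y_{d-1}$ simplifies to $(M_{d-2}\otimes I_w)\cdot (I_w\otimes D_{d-2})\cdot Y_{d-1}$.

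There is no real obstacle here—the entire observation is a bookkeeping exercise once Observation~\ref{observation:uniquess_ti_si} is in hand. The only point that requires care is matching the iteration index of Step~5 to the correct $M_\cdot$ when $T'_{k-1}$ and $T'_k$ appear in the same expression (parts~1 and~2); a cleanly stated reminder of this indexing convention at the start of the proof suffices to avoid errors. Everything else is algebra driven by the commutation of $A\otimes I_w$ with $I_w\otimes B$.
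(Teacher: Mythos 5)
Your proposal is correct and follows essentially the same route as the paper's proof: substitute the expressions for $T'_{k-1}$, $(T'_k)^{-1}$, $S'_{d-2}$ from Observation \ref{observation:uniquess_ti_si} (with the correct index matching you note) into the definitions of $\widehat{Y}_k$, cancel the $T_k$'s, and collapse via the commutation $(A\otimes I_w)(I_w\otimes B)=(I_w\otimes B)(A\otimes I_w)$. No gaps.
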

By the above observation, at Step $7$, $\widehat{X}_{d-2} = C_{d-2}^{-1} \cdot X_{d-2} \cdot D_{d-2}^{-1}$. Moreover, the structure of $\widehat{Y}_k$ (as stated in the observation) enables the algorithm to factor it in Step $7$ and obtain $\widehat{X}_k, \widehat{M}_k$ such that 
$$\widehat{X}_k = a_k(C_k^{-1} \cdot X_k \cdot C_{k+1})~~~\text{and }\widehat{M}_k = a_k^{-1} (M_k\cdot M_{k+1}^{-1})~~~~~\text{for some } a_k\in \F^{\times}.$$ 
Let $a = \prod_{k=1}^{d-3}a_k$. Then at step 8,~ $\overline{Y}_{d-1} = a^{-1}\cdot (M_1\otimes I_w)\cdot (I_w\otimes D_{d-2})\cdot Y_{d-1}$. Now, it is a simple exercise to verify that at step 8
$$\widehat{X}_0 = (M_1^T)^{-1}\cdot X_0\cdot C_1 ~~~~\text{and}~~~~ \widehat{X}_{d-1} = a^{-1}(D_{d-2}\cdot X_{d-1}\cdot M_1^T). $$
\textbf{Step 9}: Therefore, $h = \tr(\widehat{X}_{0}\ldots \widehat{X}_{d-1})$. The transformation $B_k\in \GL(w^2,\F)$ is such that its rows are the coefficient vectors of the linear forms in $\widehat{X}_k$. Hence, $h = \IMM_{w,d}(B_0\vecx_0, \ldots, B_{d-1}\vecx_{d-1})$.
{}

\section{Reduction from $\FMAI$ to $\MMTI$ :~ Proof of Theorem \ref{theorem: reduction from fmai to emi}}\label{section: reduction from fmai to imm}
\subsection{Characterization of $\IMM$ by its Lie algebra}\label{subsec: lie algebra characterization of imm}
The following lemma gives a characterization of $\IMM_{w,d}$ by its Lie algebra. The spaces $\CAL{B}_0, \ldots, \CAL{B}_{d-1}$ are as defined in Section \ref{section: lie algebra of imm}. The missing proofs are in Appendix \ref{secappendix: reduction from fmai to imm} . 
\begin{lemma}\label{lemma: characterization of trace by its lie algebra}
Let $f$ be a non-zero $d$-tensor in the variable sets $\vecx_0, \ldots ,\vecx_{d-1}$ such that for all $k \in [0,d-1]$ $\CAL{B}_k \subseteq \G_f$. Then there is an $\alpha \in \F^{\times}$ such that $f(\vecx) = \alpha \cdot \IMM_{w,d}(\vecx)$.
\end{lemma}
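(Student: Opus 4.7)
The plan is to extract explicit linear constraints on the coefficients of $f$ from the hypothesis $\CAL{B}_k\subseteq\G_f$ for every $k\in[0,d-1]$, and show directly that these force $f$ to be a scalar multiple of $\IMM_{w,d}$.

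Since $f$ is a $d$-tensor in $\vecx_0,\ldots,\vecx_{d-1}$, we may write
\[
f \;=\; \sum_{\mathbf{p},\mathbf{q}\in[w]^d} c(\mathbf{p},\mathbf{q})\,\prod_{r=0}^{d-1}(Q_r)_{p_r\,q_r}.
\]
The first technical step is to translate the block description of $\CAL{B}_k$ from Section~\ref{section: lie algebra of imm} into a differential operator. Checking the four cases ``$k$ even'', ``$k$ odd'', ``$k=d-1$ with $d$ even'' and ``$k=d-1$ with $d$ odd'' separately, and carefully unpacking the row- vs.\ column-major orderings on the $\vecx_r$'s, one verifies that the element of $\CAL{B}_k$ parametrized by $M\in\CAL{M}_w$ corresponds \emph{uniformly} to the derivation
\[
D_B \;=\; \sum_{i,j}(Q_kM)_{ij}\,\partial_{(Q_k)_{ij}} \;-\; \sum_{i,j}(MQ_{k+1})_{ij}\,\partial_{(Q_{k+1})_{ij}},
\]
with indices read modulo $d$. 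This uniformity is precisely the reason the two halves of $[B]_k$ use opposite $\otimes$-orderings, and verifying it will be the main place where care is needed.

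Substituting $M=E_{ab}$ (the elementary matrix with a single $1$ at $(a,b)$) and reading off the coefficient of the monomial $\prod_r(Q_r)_{p_r q_r}$ in $D_Bf=0$ yields, for every $\mathbf{p},\mathbf{q}\in[w]^d$, every $a,b\in[w]$ and every $k\in[0,d-1]$, the relation
\[
[a=q_k]\cdot c\bigl(\mathbf{p};\,\mathbf{q}_{k\mapsto b}\bigr) \;=\; [b=p_{k+1}]\cdot c\bigl(\mathbf{p}_{k+1\mapsto a};\,\mathbf{q}\bigr),
\]
where $\mathbf{q}_{k\mapsto b}$ denotes $\mathbf{q}$ with its $k$-th entry replaced by $b$ (and analogously $\mathbf{p}_{k+1\mapsto a}$). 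Specializing to $a=q_k$ but $b\neq p_{k+1}$ (possible since $w\geq 2$) forces $c(\mathbf{p};\mathbf{q}_{k\mapsto b})=0$. Ranging $k$ over $[0,d-1]$ (so that $\CAL{B}_{d-1}$ pairs $q_{d-1}$ with $p_0$) and renaming, we conclude that $c(\mathbf{p},\mathbf{q})=0$ unless $q_k=p_{k+1}$ for every $k$; hence
\[
f \;=\; \sum_{\mathbf{p}\in[w]^d} c(\mathbf{p})\,\prod_{k=0}^{d-1}(Q_k)_{p_k\,p_{k+1}}, \qquad c(\mathbf{p}) \;:=\; c\bigl(\mathbf{p};\,p_1,p_2,\ldots,p_{d-1},p_0\bigr),
\]
and the surviving monomials are exactly the monomials of $\IMM_{w,d}$.

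Finally, applying the same relation with $a=q_k$ and $b=p_{k+1}$ and restricting to these surviving indices, a short bookkeeping check reduces the equality to $c(\mathbf{p})=c(\mathbf{p}_{k+1\mapsto q_k})$ for arbitrary $q_k\in[w]$; that is, altering coordinate $k+1$ of $\mathbf{p}$ does not change $c(\mathbf{p})$. Letting $k$ range over $[0,d-1]$ shows every coordinate of $\mathbf{p}$ can be freely changed, so $c(\cdot)$ is a single scalar $\alpha\in\F$. Consequently $f=\alpha\cdot\IMM_{w,d}$, and $\alpha\in\F^\times$ because $f\neq 0$. The main obstacle is the uniform derivation of $D_B$ across the four sub-cases of the definition of $\CAL{B}_k$; once that is in hand, the coefficient chase above is entirely routine.
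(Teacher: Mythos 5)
Your proposal is correct and follows essentially the same route as the paper: you interpret the $\CAL{B}_k$-elements as the derivations $Q_k\mapsto Q_kM$, $Q_{k+1}\mapsto -MQ_{k+1}$ (which is exactly what the paper's case analysis of $[B]_k$ establishes), specialize $M$ to elementary matrices, and chase coefficients to first kill the non-path monomials and then equate all path coefficients by changing one index at a time. The only difference is cosmetic — you package the paper's three claims (vanishing of non-path coefficients, equality of adjacent path coefficients, chaining) into a single coefficient identity coming from $M=E_{ab}$, and that identity, as well as the uniform derivation form across the parity cases, checks out.
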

\begin{corollary}\label{corollary: lie algebra characterization for equivalent polynomials}
Let $B \in \GL(n,\F)$ be a block-diagonal matrix with individual blocks $B_0, \ldots, B_{d-1}$ and  $f$ be a non-zero $d$-tensor in the variable sets $\vecx_0, \ldots ,\vecx_{d-1}$ such that for all $k \in [0,d-1]$, $B^{-1}\cdot \CAL{B}_k \cdot B \subseteq \G_f$. Then there is an $\alpha \in \F^{\times}$ such that $f(\vecx) = \alpha \cdot \IMM_{w,d}(B_0\vecx_0,\ldots,$  $B_{d-1}\vecx_{d-1})$.
\end{corollary}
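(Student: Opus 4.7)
The plan is to reduce the corollary to Lemma \ref{lemma: characterization of trace by its lie algebra} by a change of variables that removes the conjugating matrix $B$. Specifically, I would define $h(\vecx) := f(B^{-1}\vecx)$, so that $f(\vecx) = h(B\vecx)$. Since $B$ is block-diagonal with blocks $B_0, \ldots, B_{d-1}$ indexed exactly by the variable partition $\vecx_0, \ldots, \vecx_{d-1}$, the substitution $\vecx \mapsto B^{-1}\vecx$ maps each variable set $\vecx_k$ into its own linear span; hence $h$ is again a $d$-tensor in the same variable sets $\vecx_0, \ldots, \vecx_{d-1}$, and $h$ is non-zero because $B$ is invertible and $f \neq 0$.

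Next, I would translate the Lie algebra containment from $f$ to $h$ using Fact \ref{fact: lie algebra conjugacy}. Since $f(\vecx) = h(B\vecx)$, Fact \ref{fact: lie algebra conjugacy} yields $\G_f = B^{-1}\G_h B$, or equivalently $\G_h = B\,\G_f\,B^{-1}$. The hypothesis $B^{-1}\CAL{B}_k B \subseteq \G_f$ for every $k \in [0,d-1]$ then gives
\begin{equation*}
\CAL{B}_k \;=\; B\bigl(B^{-1}\CAL{B}_k B\bigr)B^{-1} \;\subseteq\; B\,\G_f\,B^{-1} \;=\; \G_h
\end{equation*}
for all $k$. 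Thus $h$ satisfies the hypotheses of Lemma \ref{lemma: characterization of trace by its lie algebra}, and so there exists $\alpha \in \F^{\times}$ such that $h(\vecx) = \alpha \cdot \IMM_{w,d}(\vecx)$.

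Finally, I would conclude by substituting back: $f(\vecx) = h(B\vecx) = \alpha \cdot \IMM_{w,d}(B\vecx)$, and because $B$ is block-diagonal with blocks $B_0, \ldots, B_{d-1}$ acting on the respective variable sets, $\IMM_{w,d}(B\vecx) = \IMM_{w,d}(B_0\vecx_0, \ldots, B_{d-1}\vecx_{d-1})$. This yields the claimed equality. There is no real obstacle here: the only things to check carefully are that $h$ remains set-multilinear in the same partition (immediate from block-diagonality of $B$) and that Fact \ref{fact: lie algebra conjugacy} is applied in the correct direction. Both are routine.
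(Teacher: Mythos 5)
Your proposal is correct and is essentially identical to the paper's proof: the paper also sets $g(\vecx) = f(B_0^{-1}\vecx_0, \ldots, B_{d-1}^{-1}\vecx_{d-1})$, uses Fact \ref{fact: lie algebra conjugacy} to transfer $B^{-1}\CAL{B}_k B \subseteq \G_f$ into $\CAL{B}_k \subseteq \G_g$, and then invokes Lemma \ref{lemma: characterization of trace by its lie algebra} before substituting back. Nothing further is needed.
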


\subsection{Proof of Theorem \ref{theorem: reduction from fmai to emi}}\label{subsec: proof of reduction from fmai to imm}
Algorithm \ref{algorithm: reduction from fmai to trace equivalence} takes as input a basis  $\{ E_1,E_2, \ldots ,E_r \}$ of an algebra $\CAL{A} \subseteq \CAL{M}_m(\F)$, and if $\CAL{A} \cong \CAL{M}_w$ for some $w\in \N$, then it computes a $4$-tensor $f$ in the variable sets $\vecx_0, \vecx_{1}, \vecx_{2}, \vecx_3$ in deterministic polynomial time such that $f$ is isomorphic to $\IMM_{w,4}$. It then uses Algorithm \ref{algorithm: reduction from degree d to degree 3} in Theorem \ref{theorem: reduction from equivalence testing for degree d to degree 3} (see Appendix \ref{sec: reduction from degree d to degree 3}) to find an isomorphism from $f$ to $\IMM_{w,4}$ using oracle access to $\MMTI$ in randomized polynomial time. An easy check at the end of the algorithm ensures that if the algorithm outputs an isomorphism then it is correct. Thus, we need to prove that if $\CAL{A}$ is isomorphic to $\CAL{M}_w$ for some $w \in \N$ then the algorithm outputs an isomorphism. This is argued by tracing the steps of the algorithm assuming $\CAL{A}$ is isomorphic to $\CAL{M}_w$ for some $w \in \N$.
\begin{algorithm}[ht!]
\caption{Reduction from $\FMAI$ to $\MMTI$} \label{algorithm: reduction from fmai to trace equivalence}
\begin{algorithmic}
\State INPUT: A basis $\{ E_1,E_2, \ldots ,E_r \}$ of an algebra $\CAL{A} \subseteq \CAL{M}_m(\F)$, and oracle access to $\MMTI$.
\State OUTPUT: If $\CAL{A} \cong \CAL{M}_{w}(\F)$ for some $w \in \N$ then output an algebra isomorphism $\phi: \CAL{A} \rightarrow \CAL{M}_w$, otherwise output `No $w \in \N$ such that $\CAL{A} \cong \CAL{M}_w$'.
\end{algorithmic}
\begin{algorithmic}[1]
\State If $r \neq w^2$ for any $w \in \N$, then output `No $w \in \N$ such that $\CAL{A} \cong \CAL{M}_w$'.

\State Rename and order the basis elements as $E_{1,1}, \ldots ,E_{1,w}, \ldots, E_{w,1}, \ldots, E_{w,w}$. Compute matrices $L_{1,1}, \ldots ,L_{w,w}$, whose rows and columns are indexed by the above basis elements in order, as follows: $L_{i,j}$ is the matrix corresponding to the left multiplication of $E_{i,j}$ on $E_{1,1}, \ldots E_{w,w}$. In particular, $E_{i,j}\cdot E_{i_2,j_2} = \sum_{i_1,j_1\in [w]} L_{i,j}((i_1,j_1),(i_2,j_2)) E_{i_1,j_1}$.

~
\State Compute a basis of the space spanned by matrices in $\CAL{M}_{w^2}$ that commute with $\{ L_{1,1}^T, \ldots , L_{w,w}^T\}$. If the dimension of this space is not $w^2$, then output 'No $w \in \N$ such that $\CAL{A} \cong \CAL{M}_w$'. Otherwise, let the computed basis be $\{N_{1,1}, \ldots ,N_{w,w}\}$.

~
\State Compute a non-zero $4$-tensor $f$ in $\vecx_0, \ldots ,\vecx_{3}$ variables whose coefficients satisfy the following equations: a) for all $k \in [0,3]$, $k$ even, and for all $L \in \{L_{1,1}, \ldots L_{w,w}\}$
\begin{equation}\label{equation: for k odd}
    \sum_{i_1,j_1,i_2,j_2 \in [w^2]} L^T((i_1,j_1)(i_2,j_2)) x_{i_2,j_2}^{(k)} \frac{\partial f}{x_{i_1,j_1}^{(k)}} ~~~- \sum_{i_1,j_1,i_2,j_2 \in [w^2]} L((i_1,j_1)(i_2,j_2)) x_{j_2,i_2}^{(k+1)} \frac{\partial f}{x_{j_1,i_1}^{(k+1)}} =~~ 0 .
\end{equation}
b) for all $k \in [0,3]$, $k$ odd, and for all $N \in \{N_{1,1}, \ldots N_{w,w}\}$
\begin{equation}\label{equation: for k even}
    \sum_{i_1,j_1,i_2,j_2 \in [w^2]} N^T((i_1,j_1)(i_2,j_2)) x_{j_2,i_2}^{(k)} \frac{\partial f}{x_{j_1,i_1}^{(k)}} ~~~- \sum_{i_1,j_1,i_2,j_2 \in [w^2]} N((i_1,j_1)(i_2,j_2)) x_{i_2,j_2}^{(k+1)} \frac{\partial f}{x_{i_1,j_1}^{(k+1)}} = 0.
\end{equation}

~
\State Use Algorithm \ref{algorithm: reduction from degree d to degree 3} on input $f$ and with oracle access to $\MMTI$. If the algorithm outputs 'No' then output 'No $w \in \N$ such that $\CAL{A} \cong \CAL{M}_w$'. Otherwise, let $B_0, B_1, B_2, B_{3}$ be the output of the algorithm such that $f = \IMM_{w,4}(B_0\vecx_0, B_1\vecx_1, B_2\vecx_2, B_3\vecx_3)$.

~
\State Check if there exist matrices $F_{1,1}, \ldots ,F_{w,w} \in \CAL{M}_w$ such that $B_0\cdot L_{i,j}^T \cdot B_0^{-1} = I_w \otimes F_{i,j}^T$ and $B_1\cdot L_{i,j} \cdot B_1^{-1} = I_w \otimes F_{i,j}$ for all $i,j\in [w]$. If such matrices do not exist then output 'No $w \in \N$ such that $\CAL{A} \cong \CAL{M}_w$', otherwise output $\phi: \CAL{A} \rightarrow \CAL{M}_w$, where $\phi(E_{i,j}) = F_{i,j}$ for all $i,j \in [w]$ (extended linearly to the whole of $\CAL{A}$) as the algebra isomorphism from $\CAL{A}$ to $\CAL{M}_w$. 
\end{algorithmic}
\end{algorithm}
\vspace{0.1in}

\noindent \textbf{Steps 1--2}: At Step 2 there is a $K \in \GL(w^2, \F)$ and a basis $\{C_{1,1}, \ldots , C_{w,w}\}$ of $\CAL{M}_w$  such that $L_{i,j} = K^{-1}\cdot (I_w \otimes C_{i,j}) \cdot K$ for all $i,j \in [w]$ (by the Skolem-Noether theorem, see next claim).
\begin{claim}\label{claim: left multiplication algebra of the given algebra}
Suppose $\CAL{A} \cong \CAL{M}_w$ for some $w \in \N$. Then there exists a $K \in \GL(w^2, \F)$ and linearly independent matrices $\{C_{1,1}, \ldots , C_{w,w}\}$ in $\CAL{M}_w$  such that $L_{i,j} = K^{-1}\cdot (I_w \otimes C_{i,j}) \cdot K$~ for all $i,j \in [w]$.
\end{claim}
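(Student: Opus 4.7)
The plan is to exploit the algebra isomorphism $\phi : \CAL{A} \to \CAL{M}_w$ (which exists by assumption) to transport the left regular representation of $\CAL{A}$ to the left regular representation of $\CAL{M}_w$ on itself, and then use the standard fact that the latter representation is $w$-fold multiplicity of the natural column action of $\CAL{M}_w$ on $\F^w$.

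First I would set $C_{i,j} := \phi(E_{i,j})$ for all $i,j \in [w]$. Since $\phi$ is a linear bijection and $\{E_{i,j}\}$ is a basis of $\CAL{A}$, the matrices $\{C_{i,j}\}$ automatically form a basis of $\CAL{M}_w$, which gives the required linear independence. Next, I would compute the matrix of left multiplication $\ell_{i,j} : M \mapsto C_{i,j} M$ on $\CAL{M}_w$ in a convenient basis. Writing $M = [\vecm_1 \mid \vecm_2 \mid \cdots \mid \vecm_w]$ in terms of its columns, one has $C_{i,j} M = [C_{i,j}\vecm_1 \mid \cdots \mid C_{i,j}\vecm_w]$, so if $\CAL{M}_w$ is identified with $\F^{w^2}$ by column-major vectorization, the matrix of $\ell_{i,j}$ is exactly $I_w \otimes C_{i,j}$.

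On the $\CAL{A}$ side, $L_{i,j}$ is by definition the matrix of left multiplication by $E_{i,j}$ on $\CAL{A}$ in the (ordered) basis $\{E_{i_1,j_1}\}$. Because $\phi$ is an algebra homomorphism, it intertwines the two left regular representations: for every $a \in \CAL{A}$, $\phi(E_{i,j} \cdot a) = C_{i,j}\,\phi(a)$. Hence $L_{i,j}$ and $I_w \otimes C_{i,j}$ are matrix representations of the \emph{same} linear operator in two different bases. I would then let $K \in \GL(w^2,\F)$ be the change-of-basis matrix taking the coordinate vector of an element of $\CAL{A}$ in the basis $\{E_{i_1,j_1}\}$ (equivalently, the coordinates of its $\phi$-image in the basis $\{C_{i_1,j_1}\}$ of $\CAL{M}_w$) to its column-major vectorization; concretely, the column of $K$ indexed by $(i_1,j_1)$ is $\mathrm{vec}(C_{i_1,j_1})$. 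The intertwining relation then yields $K\,L_{i,j} = (I_w \otimes C_{i,j})\,K$, and therefore $L_{i,j} = K^{-1}(I_w \otimes C_{i,j})K$, as required.

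The main thing to be careful about, and really the only subtlety, is a bookkeeping point: the ordering convention on $\{E_{i_1,j_1}\}$ used to define $L_{i,j}$ must be matched with the column-major vectorization convention on $\CAL{M}_w$ in order for the Kronecker product to appear as $I_w \otimes C_{i,j}$ rather than $C_{i,j} \otimes I_w$. Since the paper fixes column-major indexing for $\vecx_k$ variables when $k$ is odd (Section~\ref{subsec: Variable ordering, notations and definitions}), this is exactly the convention consistent with Step~$2$ of Algorithm~\ref{algorithm: reduction from fmai to trace equivalence} and with the structure of $\CAL{B}_k$ used later, so the form stated in the claim is obtained directly.
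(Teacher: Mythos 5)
Your proof is correct, but it takes a different route from the paper. The paper's own argument is shorter and less explicit: it observes that the algebra $\CAL{L}$ generated by $L_{1,1},\ldots,L_{w,w}$ inside $\CAL{M}_{w^2}$ is a unital subalgebra isomorphic to $\CAL{M}_w$, and then invokes the Skolem--Noether theorem (as in \cite{GargGKS19}) to conclude that this embedding is conjugate to $M \mapsto I_w \otimes M$, which yields $K$ and the $C_{i,j}$ in one stroke. You instead bypass any structure theorem: you transport the left regular representation of $\CAL{A}$ through the (assumed) isomorphism $\phi$, note that left multiplication by $C_{i,j} = \phi(E_{i,j})$ on $\CAL{M}_w$ is exactly $I_w \otimes C_{i,j}$ under column-major vectorization, and read off $K$ explicitly as the matrix whose column indexed by $(i_1,j_1)$ is $\mathrm{vec}(C_{i_1,j_1})$; the intertwining relation $K L_{i,j} = (I_w \otimes C_{i,j})K$ then gives the claim. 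This is a perfectly valid existence argument (using $\phi$ non-constructively is fine here, just as the paper's proof implicitly uses $\CAL{A}\cong\CAL{M}_w$), and it buys self-containedness plus an explicit description of $K$, at the cost of some bookkeeping; the paper's citation of Skolem--Noether buys brevity and avoids fixing a vectorization convention. One small remark: the ordering of the basis $\{E_{i_1,j_1}\}$ of $\CAL{A}$ is actually immaterial, since any fixed ordering is absorbed into $K$ (whose columns you define in the same order); the only convention that genuinely matters for obtaining $I_w \otimes C_{i,j}$ rather than $C_{i,j} \otimes I_w$ is the column-major vectorization on the $\CAL{M}_w$ side, which you identified correctly.
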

\noindent \textbf{Step 3}: The space spanned by $\{ L_{1,1}^T, \ldots , L_{w,w}^{T}\}$ is $K^T\cdot (I_{w} \otimes \CAL{M}_w)\cdot (K^T)^{-1}$. 
\begin{observation}\label{claim: commutator space of tensor product}
The space of matrices in $\CAL{M}_{w^2}$ that commute with every matrix in $K^T\cdot (I_{w} \otimes \CAL{M}_w)\cdot (K^T)^{-1}$ is $K^T\cdot (\CAL{M}_w \otimes I_w)\cdot (K^T)^{-1}$. So, $\{N_{1,1}, \ldots, N_{w,w}\}$ is a basis of $K^T\cdot (\CAL{M}_w \otimes I_w)\cdot (K^T)^{-1}$.
\end{observation}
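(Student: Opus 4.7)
The plan is to reduce the statement to the standard computation of the centralizer of $I_w \otimes \CAL{M}_w$ inside $\CAL{M}_{w^2}$. First, by Claim \ref{claim: left multiplication algebra of the given algebra}, $L_{i,j}^T = K^T \cdot (I_w \otimes C_{i,j}^T) \cdot (K^T)^{-1}$, and since $\{C_{i,j}^T\}_{i,j \in [w]}$ is again a basis of $\CAL{M}_w$, the span of $\{L_{1,1}^T, \ldots, L_{w,w}^T\}$ equals $K^T \cdot (I_w \otimes \CAL{M}_w) \cdot (K^T)^{-1}$. A matrix commutes with every element of a set iff it commutes with every element of its linear span, so the commutator space in question coincides with the centralizer of $K^T \cdot (I_w \otimes \CAL{M}_w) \cdot (K^T)^{-1}$ in $\CAL{M}_{w^2}$. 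Next, since conjugation by $K^T$ is an algebra automorphism of $\CAL{M}_{w^2}$ that preserves commutation, this centralizer equals $K^T \cdot \CAL{C} \cdot (K^T)^{-1}$, where $\CAL{C}$ denotes the centralizer of $I_w \otimes \CAL{M}_w$ in $\CAL{M}_{w^2}$.

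The heart of the proof is then to show $\CAL{C} = \CAL{M}_w \otimes I_w$. I would write an arbitrary $Y \in \CAL{M}_{w^2}$ in $w \times w$ block form, $Y = (Y_{pq})_{p,q \in [w]}$ with $Y_{pq} \in \CAL{M}_w$. The identity $Y \cdot (I_w \otimes A) = (I_w \otimes A) \cdot Y$ for every $A \in \CAL{M}_w$ translates, block by block, into $Y_{pq} A = A Y_{pq}$ for all $p, q \in [w]$ and all $A \in \CAL{M}_w$. The center of $\CAL{M}_w$ is $\F \cdot I_w$ (a standard fact, provable by testing against the elementary matrices $e_{ij}$), so each block must satisfy $Y_{pq} = \lambda_{pq} I_w$ for some $\lambda_{pq} \in \F$. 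Packing the scalars into $M = (\lambda_{pq}) \in \CAL{M}_w$ gives $Y = M \otimes I_w$, proving $\CAL{C} \subseteq \CAL{M}_w \otimes I_w$. The reverse inclusion is immediate from the mixed-product identity $(M \otimes I_w)(I_w \otimes A) = M \otimes A = (I_w \otimes A)(M \otimes I_w)$.

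Finally, for the second assertion of the observation, the computation above shows that the centralizer space has dimension exactly $\dim_{\F}(\CAL{M}_w \otimes I_w) = w^2$. Since Step 3 of the algorithm computes a basis of this space and has just verified that its dimension equals $w^2$, the resulting $w^2$ linearly independent matrices $N_{1,1}, \ldots, N_{w,w}$ form a basis of $K^T \cdot (\CAL{M}_w \otimes I_w) \cdot (K^T)^{-1}$, as claimed. I do not foresee any serious obstacle: the only non-trivial ingredient is the block-decomposition argument identifying the centralizer of $I_w \otimes \CAL{M}_w$, which is entirely routine.
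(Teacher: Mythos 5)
Your proof is correct and follows essentially the same route as the paper: conjugating by $K^T$ to reduce to the centralizer of $I_w \otimes \CAL{M}_w$ in $\CAL{M}_{w^2}$, and then identifying that centralizer as $\CAL{M}_w \otimes I_w$ by a blockwise computation (the paper does this by testing against elementary matrices, which is the same calculation you package via the center of $\CAL{M}_w$). Your closing dimension count, showing the computed $N_{1,1},\ldots,N_{w,w}$ form a basis, matches what the paper leaves implicit in Step 3 of Algorithm \ref{algorithm: reduction from fmai to trace equivalence}.
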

\noindent \textbf{Step 4}: Let $n = 4w^2$. For $k\in [0,3]$, let $\CAL{B}'_k$ be the following spaces: Every matrix in $\CAL{B}'_k$ is a $n\times n$ block-diagonal matrix (with rows and columns indexed by $\vecx_0, \ldots, \vecx_3$) and its non-zero entries are confined to the rows and columns indexed by $\vecx_k$ and $\vecx_{k+1}$. For $B \in \CAL{B}_k$, let $[B]_k$ be the $2w^2 \times 2w^2$ sub-matrix of $B$ as defined in Equation \ref{equation: lie algebra matrices} (Section \ref{section: lie algebra of imm}). Then
\begin{eqnarray*}
\CAL{B}'_k &:=& \left\{ B \in \CAL{M}_n ~:~ [B]_k =  \begin{bmatrix} K^T\cdot (I_w \otimes M^T) (K^T)^{-1} & \mathbf{0} \\ \mathbf{0}  & K^{-1}\cdot (-I_w \otimes M) \cdot K \end{bmatrix}  ~\text{for}~ M \in \CAL{M}_w \right\} ~~\text{if $k$ is even}, \nonumber \\
&:=& \left\{ B \in \CAL{M}_n ~:~ [B]_k =  \begin{bmatrix} K^{-1} \cdot (M^T \otimes I_w) \cdot K & \mathbf{0} \\ \mathbf{0}  & K^{T}\cdot (-M \otimes I_w) \cdot (K^T)^{-1} \end{bmatrix}  ~\text{for}~ M \in \CAL{M}_w \right\} ~~\text{if $k$ is odd}.
\end{eqnarray*}

The following observation follows from Lemma \ref{theorem: lie algebra of of trace} and Fact \ref{fact: lie algebra conjugacy}.
\begin{observation}\label{observation: Lie algebra of the polynomial computed in the reduction}
The Lie algebra of $\IMM_{w,4}((K^T)^{-1}\vecx_0, K\vecx_1, (K^T)^{-1}\vecx_2, K\vecx_3)$ contains $\CAL{B}'_0, \CAL{B}'_1, \CAL{B}'_2, \CAL{B}'_3$. 
\end{observation}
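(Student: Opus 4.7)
The plan is to deduce the observation from the two ingredients cited right before its statement, namely Lemma~\ref{theorem: lie algebra of of trace} (which gives $\CAL{B}_0 + \cdots + \CAL{B}_{d-1} \subseteq \GIMM$) and Fact~\ref{fact: lie algebra conjugacy} (conjugacy behaviour of Lie algebras under change of variables). Let $A \in \GL(n,\F)$ be the block-diagonal matrix
\[
A \;=\; \diag\!\bigl((K^T)^{-1},\,K,\,(K^T)^{-1},\,K\bigr),
\]
whose $k$-th diagonal block acts on $\vecx_k$. By definition of $f$,
\[
f(\vecx) \;=\; \IMM_{w,4}(A\vecx),
\]
so Fact~\ref{fact: lie algebra conjugacy} yields $\G_f = A^{-1}\cdot \G_{\IMM_{w,4}}\cdot A$. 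Combining with Lemma~\ref{theorem: lie algebra of of trace} applied for $d=4$, it suffices to prove the purely matrix-theoretic identity
\[
A^{-1}\cdot \CAL{B}_k \cdot A \;=\; \CAL{B}'_k \qquad \text{for each } k\in\{0,1,2,3\}.
\]

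The verification is routine because the matrices in $\CAL{B}_k$ have their non-zero entries confined to the two blocks of rows/columns indexed by $\vecx_k$ and $\vecx_{k+1}$, and $A$ is block-diagonal with respect to exactly the same block decomposition. Consequently, conjugating $B \in \CAL{B}_k$ by $A$ acts block-wise: the submatrix $[B]_k$ (in the definition from Section~\ref{section: lie algebra of imm}) is replaced by
\[
\begin{bmatrix} A_k^{-1} & \mathbf{0} \\ \mathbf{0} & A_{k+1}^{-1} \end{bmatrix} [B]_k \begin{bmatrix} A_k & \mathbf{0} \\ \mathbf{0} & A_{k+1} \end{bmatrix},
\]
where $A_k, A_{k+1}$ are the relevant diagonal blocks of $A$, while all other entries remain zero. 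Plugging in the prescribed values of $A_k$ and $A_{k+1}$ and the two shapes of $[B]_k$ (even-$k$ versus odd-$k$) gives exactly the description of $[B']_k$ in the definition of $\CAL{B}'_k$; for instance at $k=0$ the even branch transforms $I_w\otimes M^T$ and $-I_w\otimes M$ into $K^T(I_w\otimes M^T)(K^T)^{-1}$ and $K^{-1}(-I_w\otimes M)K$ respectively, matching the first displayed line in the definition of $\CAL{B}'_k$.

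The only point that needs a moment of care is the cyclic block at $k=d-1=3$: since $d=4$ is even, the definition of $\CAL{B}_{d-1}$ follows the odd-$k$ template with the $\vecx_{d-1}$ block placed before the $\vecx_0$ block. Thus the conjugation at this index is carried out with $A_3^{-1}=K^{-1}$ and $A_0^{-1}=K^T$, which again reproduces precisely the odd-$k$ template in the definition of $\CAL{B}'_3$. I expect no genuine obstacle here; the proof is simply the observation that a block-diagonal change of variables of the indicated form intertwines the templates defining $\CAL{B}_k$ with those defining $\CAL{B}'_k$.
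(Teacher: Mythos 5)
Your proposal is correct and is essentially the paper's own argument: the paper disposes of this observation in one line by citing Lemma~\ref{theorem: lie algebra of of trace} and Fact~\ref{fact: lie algebra conjugacy}, which is exactly the reduction you make, and your block-wise conjugation check (including the $k=3$ wrap-around block with $A_3^{-1}=K^{-1}$, $A_0^{-1}=K^T$) correctly verifies $A^{-1}\CAL{B}_k A = \CAL{B}'_k$ for all $k\in[0,3]$. No gaps.
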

At Step 4, Algorithm \ref{algorithm: reduction from fmai to trace equivalence} computes a non-zero $4$-tensor $f$ such that $\CAL{B}'_k \subseteq \G_f$ for all $k \in [0,3]$. Equation \ref{equation: for k odd} ensures $\CAL{B}'_0$, $\CAL{B}'_{2} \in \G_f$, and Equation \ref{equation: for k even} ensures $\CAL{B}'_1, \CAL{B}'_3 \in \G_f$. 
That the algorithm is able to compute a non-zero $f$ (by solving a linear system) follows from Observation \ref{observation: Lie algebra of the polynomial computed in the reduction}. Since the number of monomials in $f$ is at most $w^8$, this step runs in polynomial time.
 \vspace{0.1in}

\noindent \textbf{Step 5}: From Corollary \ref{corollary: lie algebra characterization for equivalent polynomials} it follows that $f(\vecx) = \alpha\cdot \IMM_{w,4}((K^T)^{-1}\vecx_0,$ $ K\vecx_1,$ $(K^T)^{-1}\vecx_2, K\vecx_3)$ for some $\alpha \in \F^{\times}$. Hence, at step 5 with high probability Algorithm \ref{algorithm: reduction from degree d to degree 3} outputs four matrices $B_0,B_1,B_2, B_3 \in \GL(w^2,\F)$ such that $f(\vecx) = \IMM_{w,4}$ $(B_0\vecx_0,$ $B_1\vecx_1, B_2\vecx_2, B_3\vecx_3)$.\vspace{0.1in}

\noindent \textbf{Step 6}: Let $B$ be the block-diagonal matrix whose $k$-th block is $B_k$, for $k \in [0,3]$. Since $\CAL{B}'_0 \subseteq \G_f$ and $\G_f = B^{-1}\cdot \GIMM \cdot B$ (from Fact \ref{fact: lie algebra conjugacy}), $B\cdot \CAL{B}'_0 \cdot B^{-1} \subseteq  \GIMM$. Observe that every matrix in $B\cdot \CAL{B}'_0 \cdot B^{-1}$ is block-diagonal with its non-zero entries confined to the first two blocks. Hence, from Lemma \ref{lemma:block-diagonal matrix with two non-zero blocks in GIMM is in B_k}, and the fact that both the spaces $B\cdot \CAL{B}'_0 \cdot B^{-1}$ and $\CAL{B}_0$ have dimension $w^2$, we have $B\cdot \CAL{B}'_0 \cdot B^{-1} = \CAL{B}_0$. In particular, for every $i,j \in [w]$ there is an $F_{i,j} \in \CAL{M}_w$ such that $B_0\cdot L^T_{i,j}\cdot B_{0}^{-1} = I_w\otimes F_{i,j}^T$ and $B_1\cdot L_{i,j}\cdot B_{1}^{-1} = I_w\otimes F_{i,j}$. Finally, verify that $\phi(E_{i,j}) = F_{i,j}$ is an algebra isomorphism.  \\

\emph{Comparison with \cite{GargGKS19}}: In \cite{GargGKS19}, $\FMAI$ is reduced to $\EDET$ by using the fact that $\Det_w$ is characterized by its Lie algebra (see Lemma 7.1 in \cite{GargGKS19}). If the input algebra $\CAL{A}$ is isomorphic to $\CAL{M}_w$ then the algorithm in \cite{GargGKS19} computes a \emph{degree-$w$} polynomial $f$ in $w^2$ variables such that $\G_f$ contains the Lie algebra of a polynomial equivalent to $\Det_w$. Hence, the time complexity of their algorithm is $w^{O(w)}$. Algorithm \ref{algorithm: reduction from fmai to trace equivalence} follows the same approach, but computes a \emph{degree four} polynomial $f$ such that $\G_f$ contains the Lie algebra of a polynomial equivalent to $\IMM_{w,4}$. So, the complexity of this algorithm is $w^{O(1)}$.

\section*{Acknowledgments}
We are thankful to Avi Wigderson for his suggestion on designing an equivalence testing algorithm for $\IMM$ at the end of VN's presentation at CCC 2017. We would also like to thank Christian Ikenmeyer for his question on equivalence testing for $\IMM$ which encouraged us to work on this problem. Thanks also to Neeraj Kayal and Ankit Garg for helpful discussions, and particularly to Neeraj for pointing us to \cite{GrowchowQ19}.
VN is thankful to be funded by the European Union’s Horizon 2020 research and innovation programme under grant agreement No 682203 -ERC-[ Inf-Speed-Tradeoff].

\bibliographystyle{alpha}
\bibliography{references}

\newcommand{\etalchar}[1]{$^{#1}$}
\begin{thebibliography}{CFO{\etalchar{+}}15}

\bibitem[Ara11]{Araujo11}
Manuel Ara\'{u}jo.
\newblock {Classification of Quadratic Forms}.
\newblock \url{https://www.math.tecnico.ulisboa.pt/~ggranja/manuel.pdf}, 2011.

\bibitem[AS05]{AgrawalS05}
Manindra Agrawal and Nitin Saxena.
\newblock Automorphisms of finite rings and applications to complexity of
  problems.
\newblock In {\em 23rd Annual Symposium on Theoretical Aspects of Computer
  Science, {STACS} 2005}, pages 1--17, 2005.

\bibitem[AS06]{AgrawalS06}
Manindra Agrawal and Nitin Saxena.
\newblock Equivalence of f-algebras and cubic forms.
\newblock In {\em 23rd Annual Symposium on Theoretical Aspects of Computer
  Science, {STACS} 2006}, pages 115--126, 2006.

\bibitem[BIM{\etalchar{+}}20]{BlaserIMPS20}
Markus Bl{\"{a}}ser, Christian Ikenmeyer, Meena Mahajan, Anurag Pandey, and
  Nitin Saurabh.
\newblock Algebraic branching programs, border complexity, and tangent spaces.
\newblock {\em Electronic Colloquium on Computational Complexity {(ECCC)}},
  27:31, 2020.

\bibitem[BIP16]{BurgisserIP16}
Peter B{\"{u}}rgisser, Christian Ikenmeyer, and Greta Panova.
\newblock No occurrence obstructions in geometric complexity theory.
\newblock In {\em 57th Symposium on Foundations of Computer Science, {FOCS}},
  pages 386--395, 2016.

\bibitem[BR90]{BabaiR90}
L{\'{a}}szl{\'{o}} Babai and Lajos R{\'{o}}nyai.
\newblock Computing irreducible representations of finite groups.
\newblock {\em Mathematics of Computation}, 55(192):705--722, 1990.

\bibitem[BW15]{BW15}
Peter~A Brooksbank and James~B Wilson.
\newblock The module isomorphism problem reconsidered.
\newblock {\em Journal of Algebra}, 421:541–559, 2015.

\bibitem[CFO{\etalchar{+}}15]{CremonaFOSS15}
J.~E. Cremona, T.~A. Fisher, C.~O'Neil, D.~Simon, and M.~Stoll.
\newblock Explicit n-descent on elliptic curves {III.} algorithms.
\newblock {\em Math. Comput.}, 84(292):895--922, 2015.

\bibitem[CKW11]{ChenKW11}
Xi~Chen, Neeraj Kayal, and Avi Wigderson.
\newblock Partial derivatives in arithmetic complexity and beyond.
\newblock {\em Foundations and Trends in Theoretical Computer Science},
  6(1-2):1--138, 2011.

\bibitem[CLS19]{ChillaraLO19}
Suryajith Chillara, Nutan Limaye, and Srikanth Srinivasan.
\newblock Small-depth multilinear formula lower bounds for iterated matrix
  multiplication with applications.
\newblock {\em {SIAM} J. Comput.}, 48(1):70--92, 2019.
\newblock Conference version appeared in the proceedings of STACS 2018.

\bibitem[Ebe89]{Eberly89}
W.~M. Eberly.
\newblock {\em Computations for algebras and group representations}.
\newblock PhD thesis, Department of Computer Science, University of Toronto,
  1989.

\bibitem[FGS19]{FGS19}
Vyacheslav Futorny, Joshua~A. Grochow, and Vladimir~V. Sergeichuk.
\newblock Wildness for tensors.
\newblock {\em Linear Algebra and its Applications}, 566:212–244, 2019.

\bibitem[FLMS15]{FournierLMS15}
Herv{\'{e}} Fournier, Nutan Limaye, Guillaume Malod, and Srikanth Srinivasan.
\newblock Lower bounds for depth-4 formulas computing iterated matrix
  multiplication.
\newblock {\em {SIAM} J. Comput.}, 44(5):1173--1201, 2015.
\newblock Conference version appeared in the proceedings of STOC 2014.

\bibitem[FP06]{FaugereP06}
Jean{-}Charles Faug{\`{e}}re and Ludovic Perret.
\newblock {Polynomial Equivalence Problems: Algorithmic and Theoretical
  Aspects}.
\newblock In Serge Vaudenay, editor, {\em International Conference on the
  Theory and Applications of Cryptographic Techniques, Advances in Cryptology -
  {EUROCRYPT}}, pages 30--47, 2006.

\bibitem[Fro97]{Frobenius97}
Georg Frobenius.
\newblock Ueber die darstellung der endlichen gruppen durch linearc
  substitutionen.
\newblock {\em Sitzungber. der Berliner Akademie}, 7:994–1015, 1897.

\bibitem[FS13]{ForbesS13}
Michael~A. Forbes and Amir Shpilka.
\newblock Quasipolynomial-time identity testing of non-commutative and
  read-once oblivious algebraic branching programs.
\newblock In {\em 54th Symposium on Foundations of Computer Science, {FOCS}
  2013}, pages 243--252, 2013.

\bibitem[Ges16]{Gesmundo15}
Fulvio Gesmundo.
\newblock Geometric aspects of iterated matrix multiplication.
\newblock {\em Journal of Algebra}, 461:42--64, 2016.

\bibitem[GGKS19]{GargGKS19}
Ankit Garg, Nikhil Gupta, Neeraj Kayal, and Chandan Saha.
\newblock Determinant equivalence test over finite fields and over {Q}.
\newblock In {\em 46th International Colloquium on Automata, Languages, and
  Programming, {ICALP} 2019}, pages 62:1--62:15, 2019.

\bibitem[GIP17]{GIP17}
Fulvio Gesmundo, Christian Ikenmeyer, and Greta Panova.
\newblock Geometric complexity theory and matrix powering.
\newblock {\em Differential Geometry and its Applications}, 55:106--127, 2017.

\bibitem[GQ19]{GrowchowQ19}
Joshua~A. Grochow and Youming Qiao.
\newblock Isomorphism problems for tensors, groups, and cubic forms:
  completeness and reductions.
\newblock {\em CoRR}, abs/1907.00309, 2019.

\bibitem[Gro12]{Grochow12}
Joshua~A. Grochow.
\newblock {\em Symmetry and equivalence relations in classical and geometric
  complexity theory}.
\newblock PhD thesis, The University of Chicago, 2012.
\newblock Available from
  \url{https://www.cs.colorado.edu/~jgrochow/grochow-thesis.pdf}.

\bibitem[GS86]{GoldwasserS86}
Shafi Goldwasser and Michael Sipser.
\newblock Private coins versus public coins in interactive proof systems.
\newblock In {\em Proceedings of the 18th Symposium on the Theory of Computing,
  {STOC} 1986}, pages 59--68, 1986.

\bibitem[IP16]{IkenmeyerP16}
Christian Ikenmeyer and Greta Panova.
\newblock Rectangular kronecker coefficients and plethysms in geometric
  complexity theory.
\newblock In {\em 57th Symposium on Foundations of Computer Science, {FOCS}},
  pages 396--405, 2016.

\bibitem[IQ19]{IvanyosQ19}
G{\'{a}}bor Ivanyos and Youming Qiao.
\newblock {Algorithms Based on *-Algebras, and Their Applications to
  Isomorphism of Polynomials with One Secret, Group Isomorphism, and Polynomial
  Identity Testing}.
\newblock {\em {SIAM} J. Comput.}, 48(3):926--963, 2019.
\newblock Conference version appeared in the proceedings of SODA 2018.

\bibitem[IRS12]{IvanyosRS12}
G{\'{a}}bor Ivanyos, Lajos R{\'{o}}nyai, and Joseph Schicho.
\newblock Splitting full matrix algebras over algebraic number fields.
\newblock {\em Jounral of Algebra}, 354:211--223, 2012.

\bibitem[Kay11]{Kayal11}
Neeraj Kayal.
\newblock Efficient algorithms for some special cases of the polynomial
  equivalence problem.
\newblock In {\em Proceedings of the 22nd Symposium on Discrete Algorithms,
  {SODA} 2011}, pages 1409--1421, 2011.

\bibitem[Kay12]{Kayal12}
Neeraj Kayal.
\newblock Affine projections of polynomials: extended abstract.
\newblock In {\em Proceedings of the 44th Symposium on Theory of Computing,
  {STOC} 2012}, pages 643--662, 2012.
\newblock Full text available from
  \url{https://www.microsoft.com/en-us/research/wp-content/uploads/2016/02/Projection.pdf}.

\bibitem[KNS19]{KayalNS19}
Neeraj Kayal, Vineet Nair, and Chandan Saha.
\newblock Average-case linear matrix factorization and reconstruction of low
  width algebraic branching programs.
\newblock {\em Computational Complexity}, 28(4):749--828, 2019.

\bibitem[KNS20]{KayalNS20}
Neeraj Kayal, Vineet Nair, and Chandan Saha.
\newblock Separation between read-once oblivious algebraic branching programs
  (roabps) and multilinear depth-three circuits.
\newblock {\em ACM Trans. Comput. Theory}, 12(1), 2020.
\newblock Conference version appeared in the proceedings of STACS 2016.

\bibitem[KNST19]{KayalNST19}
Neeraj Kayal, Vineet Nair, Chandan Saha, and S{\'{e}}bastien Tavenas.
\newblock Reconstruction of full rank algebraic branching programs.
\newblock {\em {TOCT}}, 11(1):2:1--2:56, 2019.
\newblock Conference version appeared in the proceedings of CCC 2017.

\bibitem[KS03]{KlivansS03}
Adam Klivans and Amir Shpilka.
\newblock Learning arithmetic circuits via partial derivatives.
\newblock In {\em Proceedings of the 16th Conference on Learning Theory, {COLT}
  2003}, pages 463--476, 2003.

\bibitem[KS15]{KayalS15a}
Neeraj Kayal and Chandan Saha.
\newblock Lower bounds for sums of products of low arity polynomials.
\newblock {\em Electronic Colloquium on Computational Complexity {(ECCC)}},
  22:73, 2015.

\bibitem[KS17]{KumarS17}
Mrinal Kumar and Shubhangi Saraf.
\newblock {On the Power of Homogeneous Depth 4 Arithmetic Circuits}.
\newblock {\em {SIAM} J. Comput.}, 46(1):336--387, 2017.
\newblock Conference version appeared in the proceedings of FOCS 2014.

\bibitem[KST18]{KayalST18}
Neeraj Kayal, Chandan Saha, and S{\'{e}}bastien Tavenas.
\newblock On the size of homogeneous and of depth-four formulas with low
  individual degree.
\newblock {\em Theory of Computing}, 14(1):1--46, 2018.
\newblock Conference version appeared in the proceedings of STOC 2016.

\bibitem[KT90]{KaltofenT90}
Erich Kaltofen and Barry~M. Trager.
\newblock {Computing with Polynomials Given By Black Boxes for Their
  Evaluations: Greatest Common Divisors, Factorization, Separation of
  Numerators and Denominators}.
\newblock {\em J. Symb. Comput.}, 9(3):301--320, 1990.
\newblock Conference version appeared in the proceedings of FOCS 1998.

\bibitem[Lan15]{Lan15}
J.~M Landsberg.
\newblock Geometric complexity theory: an introduction for geometers.
\newblock {\em {ANNALI DELL'UNIVERSITA' DI FERRARA}}, 61(1):65--117, 2015.

\bibitem[Lor08]{Lorenz08}
Falko Lorenz.
\newblock {\em Algebra Volumne 2: Fields with structures}.
\newblock Algebras and advanced topics. Springer, 2008.

\bibitem[MS01]{MulmuleyS01}
Ketan Mulmuley and Milind~A. Sohoni.
\newblock Geometric complexity theory {I:} an approach to the {P} vs. {NP} and
  related problems.
\newblock {\em {SIAM} J. Comput.}, 31(2):496--526, 2001.

\bibitem[MS08]{MulmuleyS08}
Ketan Mulmuley and Milind~A. Sohoni.
\newblock Geometric complexity theory {II:} towards explicit obstructions for
  embeddings among class varieties.
\newblock {\em {SIAM} J. Comput.}, 38(3):1175--1206, 2008.

\bibitem[MV97]{MahajanV97}
Meena Mahajan and V.~Vinay.
\newblock Determinant: Combinatorics, algorithms, and complexity.
\newblock {\em Chicago J. Theor. Comput. Sci.}, 1997, 1997.

\bibitem[Nis91]{Nisan91}
Noam Nisan.
\newblock Lower bounds for non-commutative computation (extended abstract).
\newblock In {\em Proceedings of the 23rd Symposium on Theory of Computing,
  {STOC} 1991}, pages 410--418, 1991.

\bibitem[NW97]{NisanW97}
Noam Nisan and Avi Wigderson.
\newblock {Lower Bounds on Arithmetic Circuits Via Partial Derivatives}.
\newblock {\em Computational Complexity}, 6(3):217--234, 1997.

\bibitem[Pat96]{Patarin96}
Jacques Patarin.
\newblock Hidden fields equations {(HFE)} and isomorphisms of polynomials
  {(IP):} two new families of asymmetric algorithms.
\newblock In {\em International Conference on the Theory and Application of
  Cryptographic Techniques, Advances in Cryptology - {EUROCRYPT}}, pages
  33--48, 1996.

\bibitem[PGC98]{PatarinGC98}
Jacques Patarin, Louis Goubin, and Nicolas Courtois.
\newblock Improved algorithms for isomorphisms of polynomials.
\newblock In {\em International Conference on the Theory and Application of
  Cryptographic Techniques, Advances in Cryptology - {EUROCRYPT}}, pages
  184--200, 1998.

\bibitem[Raz09]{Raz09}
Ran Raz.
\newblock Multi-linear formulas for permanent and determinant are of
  super-polynomial size.
\newblock {\em J. {ACM}}, 56(2):8:1--8:17, 2009.
\newblock Conference version appeared in the proceedings of STOC 2004.

\bibitem[R{\'{o}}n87]{Ronyai87}
Lajos R{\'{o}}nyai.
\newblock Simple algebras are difficult.
\newblock In {\em Proceedings of the 19th Symposium on Theory of Computing,
  {STOC} 1987}, pages 398--408, 1987.

\bibitem[R{\'{o}}n90]{Ronyai90}
Lajos R{\'{o}}nyai.
\newblock Computing the structure of finite algebras.
\newblock {\em J. Symb. Comput.}, 9(3):355--373, 1990.

\bibitem[R{\'{o}}n92]{Ronyai92}
Lajos R{\'{o}}nyai.
\newblock Algorithmic properties of maximal orders in simple algebras over
  $\mathbb{Q}$.
\newblock {\em Computational Complexity}, 2:225–243, 1992.

\bibitem[Sap15]{S15survey}
Ramprasad Saptharishi.
\newblock A survey of lower bounds in arithmetic circuit complexity.
\newblock {\em Github survey}, 2015.

\bibitem[Sax06]{Saxenaphd}
Nitin Saxena.
\newblock {\em Morphisms of rings and applications to complexity}.
\newblock PhD thesis, Indian Institute of Technology, Kanpur, 2006.

\bibitem[Ser73]{Serre73}
Jean-Pierre Serre.
\newblock {\em {A Course in Arithmetic}}.
\newblock Springer-Verlag New York, 1973.

\bibitem[SY10]{ShpilkaY10}
Amir Shpilka and Amir Yehudayoff.
\newblock Arithmetic circuits: A survey of recent results and open questions.
\newblock {\em Foundations and Trends in Theoretical Computer Science},
  5(3-4):207--388, 2010.

\bibitem[Thi98]{Thierauf98}
Thomas Thierauf.
\newblock The isomorphism problem for read-once branching programs and
  arithmetic circuits.
\newblock {\em Chicago J. Theor. Comput. Sci.}, 1998, 1998.

\bibitem[Val79]{Valiant79a}
Leslie~G. Valiant.
\newblock Completeness classes in algebra.
\newblock In {\em Proceedings of the 11th Symposium on Theory of Computing,
  {STOC} 1979}, pages 249--261, 1979.

\bibitem[Wal13]{Walter13}
Lars~Ambrosius Wallenborn.
\newblock Computing the hilbert symbol, quadratic form equivalence and integer
  factoring.
\newblock Diploma thesis, 2013.

\end{thebibliography}

\appendix
\section{Preliminaries on algebraic branching programs and matrix products}{\label{subsection: abps and matrix products}}
\textbf{Set-multilinear polynomial}: A \emph{set-multilinear monomial} in $\vecx_0, \ldots, \vecx_{d-1}$ variables has exactly one variable from $\vecx_k$, for all $k \in [0,d-1]$. The coefficient of a non set-multilinear monomial is zero in a \emph{set-multilinear polynomial} in $\vecx_0, \ldots, \vecx_{d-1}$ variables. \vspace{0.1in}

The following definition is motivated from the the fact that monomials in $\IMM_{w,d}$ correspond to a path in the $d$-layer graph capturing the matrix product $Q_0\ldots Q_{d-1}$.
\begin{definition}[Path monomial]{\label{definition :path monomial}}
A set-multilinear monomial in $\vecx_0, \ldots, \vecx_{d-1}$ variables is called as a \emph{path monomial} if it has a non-zero coefficient in the $\IMM_{w,d}$ polynomial, and a set-multilinear monomial that is not a path monomial is called a \emph{non-path monomial}.
\end{definition}

\textbf{Linear matrices}: A matrix with entries as linear forms in $\vecx$ variables over $\F$ is called a linear matrix in $\vecx$ variables over $\F$. If $\vecx$, $\F$ are clear from the context, then it is simply called a linear matrix. If the linear forms in a linear matrix are linearly independent, then we say it is a \emph{full-rank} linear matrix. \vspace{0.1in}

\textbf{Algebraic branching program (ABP)}: A $(w,d,n)$-ABP is a matrix product $Y_0 \cdot Y_1 \ldots Y_{d-1}$, where $Y_0$ and $Y_{d-1}$ are row and column linear matrices of size $w$, and $Y_{k}$ is a $w\times w$ linear matrix in $\vecx$ variables for $k\in [1,d-2]$. The polynomial computed by the ABP is the entry in the resulting $1\times 1$ matrix. Note that in the general definition of an ABP the intermediate widths of matrices can vary, but throughout this article we work with uniform width ABPs unless stated otherwise. A \emph{full-rank} ABP is a $(w,d,n)$-ABP where the $w^2(d-2)+2w$ linear forms in its matrices are linearly independent. A \emph{set-multilinear} ABP in $\vecx_0, \ldots, \vecx_{d-1}$ variables is a $(w,d,n)$-ABP where the linear forms in $Y_k$ are in $\vecx_k$ variables. The following fact is easily inferred.
\begin{fact}\label{fact: ABP computing IMM}
The $\IMM_{w,d}$ polynomial is computed by a $(w^2,d,n)$-set-multilinear ABP $Y_0\ldots Y_{d-1}$ in $\vecx_0,\ldots ,$ $\vecx_{d-1}$ variables, where $Y_0 = (Q_0(1,1), Q_0(1,2), \ldots Q_0(1,w), Q_0(2,1), \ldots , Q_0(w,w))$, $Y_k = I_w\otimes Q_k$ for $k \in [1,d-2]$, and $Y_{d-1} = (Q_{d-1}(1,1), Q_{d-1}(2,1), \ldots Q_{d-1}(w,1), Q_{d-1}(1,2), \ldots , Q_{d-1}(w,w))^T$.
\end{fact}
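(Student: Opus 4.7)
The plan is to reduce the fact to a direct verification of the polynomial identity $Y_0 \cdot Y_1 \cdots Y_{d-1} = \tr(Q_0 \cdot Q_1 \cdots Q_{d-1})$. Once this identity is in hand, the remaining assertions in the fact are immediate from the definitions: each $Y_k$ has entries that are either $0$ or a single variable from $\vecx_k$, hence $Y_k$ is a linear matrix in $\vecx_k$ only; the block sizes are $1 \times w^2$, $w^2 \times w^2$, and $w^2 \times 1$ as required for a $(w^2,d,n)$-ABP; and the variable sets $\vecx_0, \ldots, \vecx_{d-1}$ are disjoint, so the ABP is set-multilinear in these sets.

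The identity rests on two elementary Kronecker-product lemmas that I would record first. Write $\rho(P) \in \F^{w^2}$ for the row-major vectorization of a $w \times w$ matrix $P$, so that $\rho(P)_{(i-1)w+j} = P(i,j)$, and $\gamma(Q) \in \F^{w^2}$ for the column-major vectorization of $Q$, so that $\gamma(Q)_{(j-1)w+i} = Q(i,j)$. The first lemma is
\[
\rho(P) \cdot (I_w \otimes Q) \;=\; \rho(P \cdot Q) ,
\]
which follows by an entrywise check: the $(i-1)w+j$ entry on the left is $\sum_{k=1}^{w} \rho(P)_{(i-1)w+k} \cdot Q(k,j) = \sum_{k=1}^{w} P(i,k) Q(k,j) = (PQ)(i,j)$, matching the right. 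The second lemma is
\[
\rho(P) \cdot \gamma(Q) \;=\; \tr(P \cdot Q) ,
\]
since $\rho(P) \cdot \gamma(Q) = \sum_{i,j} P(i,j) Q(j,i) = \sum_{i} (PQ)(i,i)$.

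Combining these, the proof finishes in one pass. By construction $Y_0 = \rho(Q_0)$, $Y_k = I_w \otimes Q_k$ for $k \in [1, d-2]$, and $Y_{d-1} = \gamma(Q_{d-1})$. Iterated application of the first lemma gives $Y_0 \cdot Y_1 \cdots Y_{d-2} = \rho(Q_0 \cdot Q_1 \cdots Q_{d-2})$, and then the second lemma yields $Y_0 \cdot Y_1 \cdots Y_{d-1} = \tr(Q_0 \cdot Q_1 \cdots Q_{d-2} \cdot Q_{d-1}) = \IMM_{w,d}$, as required. There is no genuine obstacle here; the fact is a bookkeeping statement about encoding the trace of a matrix product through Kronecker products, and the only care needed is to keep the row-major convention at the left end and the column-major convention at the right end consistent so that the trace contracts the correct pairs of indices.
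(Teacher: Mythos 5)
Your proof is correct: the two vectorization lemmas ($\rho(P)\cdot(I_w\otimes Q)=\rho(PQ)$ and $\rho(P)\cdot\gamma(Q)=\tr(PQ)$, with row-major on the left and column-major on the right) are exactly the bookkeeping needed, and iterating them gives $Y_0\cdots Y_{d-1}=\tr(Q_0\cdots Q_{d-1})=\IMM_{w,d}$. The paper states this Fact without proof ("easily inferred"), and your argument is precisely the intended verification, so there is nothing to reconcile.
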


\textbf{Matrix Product}: A matrix product $X_0\ldots X_{d-1}$, where $X_0,\ldots, X_{d-1}$ are $w\times w$ linear matrices is denoted as a $(w,d,n)$-matrix product. If the $w^2d$ linear forms in the matrices of a $(w,d,n)$-matrix product are linearly independent then we say it is a \emph{full-rank} $(w,d,n)$-matrix product. Additionally, if $X_k$ has linear forms in only $\vecx_k$ variables for $k\in [0,d-1]$ then we call it a $(w,d,n)$ set-multilinear matrix product in $\vecx_0, \ldots, \vecx_{d-1}$ variables. \vspace{0.1in}

\textbf{Set-multilinear ABP reconstruction}: Here, we note the main properties of the set-multilinear ABP reconstruction algorithm in \cite{KlivansS03} for set-multilinear ABPs with varying intermediate widths. A set-multilinear ABP $Y_0\ldots Y_{d-1}$ in $\vecx_0, \ldots , \vecx_{d-1}$ variables has width-sequence $(w_0, \ldots , w_{d-2})$, if $Y_0$ is a row linear matrix of size $w_0$ in $\vecx_0$ variables, $Y_k$ is a $w_{k-1}\times w_{k}$ linear matrix in $\vecx_k$ variables for $k\in [1,d-2]$, and $Y_{d-1}$ is a column linear matrix of size $w_{d-2}$ in $\vecx_{d-1}$ variables. The next observation is proved using evaluation dimension (see Definition \ref{definition: evaldim}), its proof is omitted here.
\begin{observation}\label{observation: min width ABP}
Suppose $f$ is a set-multilinear polynomial in $\vecx_0, \ldots ,\vecx_{d-1}$ variables. Then, there is a set-multilinear ABP in $\vecx_0, \ldots , \vecx_{d-1}$ variables of width sequence $(w_0, \ldots , w_{d-2})$ computing $f$, such that any other set-multilinear ABP in $\vecx_0, \ldots , \vecx_{d-1}$ variables of width-sequence $(w'_0, \ldots ,$ $w'_{d-2})$ computing $f$ satisfies $w_k \leq w'_k$ for $k\in [0,d-2]$. Such a set-multilinear ABP  in $\vecx_0, \ldots , \vecx_{d-1}$ variables of width-sequence $(w_0, \ldots , w_{d-2})$ computing $f$ is called a min-width set-multilinear ABP for $f$.
\end{observation}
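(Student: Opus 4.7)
The plan is to prove both parts of the statement using the \emph{evaluation dimension} measure at each cut position. For each $k\in[0,d-2]$, let $U_k\subseteq \F[\vecx_0,\ldots,\vecx_k]$ denote the linear span of all partial evaluations
\[
\{\,f(\vecx_0,\ldots,\vecx_k,\veca_{k+1},\ldots,\veca_{d-1}) \,:\, \veca_{k+1}\in\F^{|\vecx_{k+1}|},\ldots,\veca_{d-1}\in\F^{|\vecx_{d-1}|}\,\},
\]
and define $w_k := \dim_\F U_k$. Since $f$ is set-multilinear, every element of $U_k$ is set-multilinear in $\vecx_0,\ldots,\vecx_k$. I will show that this choice of $w_k$ is simultaneously a lower bound on the width at position $k$ of \emph{any} set-multilinear ABP computing $f$, and is achieved by an explicit construction.

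For the lower bound, suppose $Y'_0 Y'_1\cdots Y'_{d-1}$ is a set-multilinear ABP in $\vecx_0,\ldots,\vecx_{d-1}$ of width-sequence $(w'_0,\ldots,w'_{d-2})$ computing $f$. Split the product at position $k$ and write $\vecu := Y'_0\cdots Y'_k$ (a row vector of $w'_k$ set-multilinear polynomials in $\vecx_0,\ldots,\vecx_k$) and $\vecv := Y'_{k+1}\cdots Y'_{d-1}$ (a column vector of $w'_k$ set-multilinear polynomials in $\vecx_{k+1},\ldots,\vecx_{d-1}$), so that $f = \sum_{j=1}^{w'_k} u_j\cdot v_j$. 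For any evaluation point $\veca$ of the later variable sets, the partial evaluation $f(\vecx_0,\ldots,\vecx_k,\veca)=\sum_j v_j(\veca)\, u_j$ lies in $\spa_\F\{u_1,\ldots,u_{w'_k}\}$, and hence $U_k\subseteq \spa_\F\{u_1,\ldots,u_{w'_k}\}$. Therefore $w_k=\dim U_k\le w'_k$.

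For the upper bound, I will construct a set-multilinear ABP whose width-sequence is exactly $(w_0,\ldots,w_{d-2})$. Choose a basis $\{p_{k,1},\ldots,p_{k,w_k}\}$ of each $U_k$; every $p_{k,j}$ is set-multilinear in $\vecx_0,\ldots,\vecx_k$. The key observation is that for every variable $x\in\vecx_k$ the coefficient of $x$ in $p_{k,j}$, viewed as a set-multilinear polynomial in $\vecx_0,\ldots,\vecx_{k-1}$, is itself a partial evaluation of $p_{k,j}$ (obtained by setting $x=1$ and the other variables of $\vecx_k$ to $0$) and hence lies in $U_{k-1}$. Consequently, expanding in the variables of $\vecx_k$ and re-expressing each coefficient in the basis $\{p_{k-1,\ell}\}$ gives
\[
p_{k,j}\;=\;\sum_{\ell=1}^{w_{k-1}} p_{k-1,\ell}\cdot L^{(k)}_{\ell,j}(\vecx_k)
\]
for some linear forms $L^{(k)}_{\ell,j}\in\F[\vecx_k]$. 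Let $Y_k$ be the $w_{k-1}\times w_k$ linear matrix with entries $L^{(k)}_{\ell,j}$, so that the row vector $\vec p_k$ equals $\vec p_{k-1}\cdot Y_k$. Take $Y_0 := (p_{0,1},\ldots,p_{0,w_0})$ (a row of linear forms in $\vecx_0$, since each $p_{0,j}\in U_0$ is a linear form in $\vecx_0$), and obtain $Y_{d-1}$ by the analogous expansion $f=\sum_{\ell=1}^{w_{d-2}} p_{d-2,\ell}\cdot L^{(d-1)}_\ell(\vecx_{d-1})$, which exists because $f(\vecx_0,\ldots,\vecx_{d-2},\veca_{d-1})\in U_{d-2}$ for every $\veca_{d-1}$. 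Then $Y_0 Y_1\cdots Y_{d-1}=f$ is a set-multilinear ABP of width-sequence $(w_0,\ldots,w_{d-2})$.

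The only substantive point is the step that each partial evaluation of a basis element $p_{k,j}\in U_k$ at a coordinate of $\vecx_k$ again lies in $U_{k-1}$; everything else is routine linear algebra. I expect this to be the main place where set-multilinearity is used in an essential way, since without it the coefficient-of-$x$ operation would produce polynomials outside the span defined by evaluations of $f$.
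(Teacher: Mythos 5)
Your proof is correct and takes exactly the route the paper indicates (the paper omits the argument but points to evaluation dimension as the tool): the lower bound via splitting any ABP at the $k$-th cut, and the matching construction from bases of the partial-evaluation spaces using that, by set-multilinearity, coefficient extraction at a variable of $\vecx_k$ is a partial evaluation and hence stays in $U_{k-1}$, is the standard Nisan-style characterization. Nothing further is needed.
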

Given blackbox access to a set-multilinear polynomial $f(\vecx_0, \ldots ,\vecx_{d-1})$, the set-multilinear ABP reconstruction algorithm in \cite{KlivansS03} reconstructs a min-width set-multilinear ABP in $\vecx_0, \ldots ,$ $ \vecx_{d-1}$ variables computing $f$ in randomized polynomial-time. Finally, the following observation regarding the relation between two min-width ABPs computing $f$ is easy to prove and its proof is omitted.
\begin{observation}\label{observation: min width ABP}
Suppose $f$ is a set-multilinear polynomial in $\vecx_0, \ldots ,\vecx_{d-1}$ variables, and $Y_0\ldots Y_{d-1}$ and $Y'_0\ldots Y'_{d-1}$ are two min-width set-multilinear ABPs in $\vecx_0, \ldots, \vecx_{d-1}$ variables of width-sequence $(w_0, \ldots$ $ ,w_{d-2})$ computing $f$. Then there are matrices $T_k  \in \GL(w_k,\F)$ $k\in [0,d-2]$, such that $Y'_0 = Y_0\cdot T_0$, $Y'_k = T^{-1}_{k-1}\cdot Y_k \cdot T_k$ for $k\in [1,d-2]$, and $Y'_{d-1} = T_{d-2}^{-1}\cdot Y_{d-1}$. 
\end{observation}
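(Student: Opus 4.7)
The plan is to show that the prefix-product row vectors $P_k := Y_0 Y_1 \cdots Y_k$ and $P'_k := Y'_0 Y'_1 \cdots Y'_k$, each of length $w_k$, are related by a unique invertible change of basis $P'_k = P_k \cdot T_k$, and then to propagate this relation to peel off the individual matrices $Y'_k$.

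First I would establish the key structural lemma: in any min-width set-multilinear ABP computing $f$, the entries of $P_k$ are $\F$-linearly independent for every $k \in [0, d-2]$. This is essentially the content of the evaluation-dimension characterization already used in the preceding observation. Let $V_k(f) \subseteq \F[\vecx_0,\ldots,\vecx_k]$ be the span of the coefficients of all set-multilinear monomials in $\vecx_{k+1},\ldots,\vecx_{d-1}$ appearing in $f$ (viewed as elements of $\F[\vecx_0,\ldots,\vecx_k]$). A standard evaluation-dimension argument shows $\dim_{\F} V_k(f) = w_k$ (this is the lower bound that makes the ABP min-width). The entries of $P_k$ clearly lie in $V_k(f)$, since each such entry, multiplied by a suitable continuation $Y_{k+1}\cdots Y_{d-1}$ evaluated at a monomial, is exactly such a coefficient. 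By dimension count, the $w_k$ entries of $P_k$ must form a basis of $V_k(f)$; the same holds for $P'_k$.

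Since both $\{(P_k)_j\}_{j\in[w_k]}$ and $\{(P'_k)_j\}_{j\in[w_k]}$ are $\F$-bases of the same space $V_k(f)$, there is a unique $T_k \in \GL(w_k,\F)$ with $P'_k = P_k \cdot T_k$. For $k=0$ this is exactly $Y'_0 = Y_0 \cdot T_0$. For $k\in[1,d-2]$, combining $P'_k = P'_{k-1}\cdot Y'_k = P_{k-1}\cdot T_{k-1} \cdot Y'_k$ with $P'_k = P_k \cdot T_k = P_{k-1}\cdot Y_k \cdot T_k$ gives
\begin{equation*}
P_{k-1} \cdot \bigl(T_{k-1}\, Y'_k - Y_k\, T_k\bigr) = 0.
\end{equation*}
The entries of $P_{k-1}$ lie in $\F[\vecx_0,\ldots,\vecx_{k-1}]$ and are $\F$-linearly independent, whereas the entries of $T_{k-1} Y'_k - Y_k T_k$ lie in the disjoint variable ring $\F[\vecx_k]$. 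Since $\F[\vecx_0,\ldots,\vecx_{k-1}]\otimes_{\F} \F[\vecx_k]$ is a free module with basis $\{p \otimes m\}$ where $p$ and $m$ run over monomial bases of the respective rings, the displayed equation forces $T_{k-1} Y'_k = Y_k T_k$, i.e., $Y'_k = T_{k-1}^{-1} Y_k T_k$. The boundary case $k=d-1$ is analogous: $P_{d-2}\, Y_{d-1} = f = P'_{d-2}\, Y'_{d-1} = P_{d-2}\, T_{d-2}\, Y'_{d-1}$, and linear independence of the entries of $P_{d-2}$ together with variable-set disjointness yields $Y'_{d-1} = T_{d-2}^{-1}\, Y_{d-1}$.

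The only nontrivial step is the structural lemma that the entries of $P_k$ form a basis of the canonical space $V_k(f)$; everything thereafter is linear algebra exploiting the disjointness of the variable sets $\vecx_0,\ldots,\vecx_{d-1}$. This lemma is the real substance of the observation, and it is precisely the same evaluation-dimension characterization that justified the existence of a min-width ABP in the previous observation, so no new technique is needed beyond what the authors have already invoked.
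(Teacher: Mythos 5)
Your proof is correct: the key lemma that in a min-width ABP the entries of the prefix product $P_k = Y_0\cdots Y_k$ are linearly independent (indeed a basis of the coefficient space $V_k(f)$, whose dimension equals $w_k$ by the same evaluation-dimension characterization underlying the existence of min-width ABPs) gives the unique $T_k \in \GL(w_k,\F)$ with $P'_k = P_k T_k$, and the variable-disjointness argument correctly peels off $Y'_k = T_{k-1}^{-1} Y_k T_k$ and $Y'_{d-1} = T_{d-2}^{-1} Y_{d-1}$. The paper omits the proof of this observation as "easy," and your argument is exactly the standard route it alludes to, so there is nothing to add.
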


\section{Proofs from Section \ref{section: lie algebra of imm}}\label{secappendix: lie algebra of imm}
\emph{Claim \ref{claim: gtrimm is block diagonal} (restated): If $E \in \G_{\IMM}$ then $E$ is block-diagonal.}
\begin{proof}
Since $E \in \GIMM$, the entries of $E = (e_{i,j})_{i,j\in [n]}$ satisfy the following equation, 
\begin{equation}{\label{equation:Lie Algebra equation for Tr-IMM}}
    \sum_{i,j \in [n]}e_{ij}\cdot x_{j}\cdot\frac{\partial \IMM}{\partial x_i} = 0\ \ .
\end{equation}
Equation \ref{equation:Lie Algebra equation for Tr-IMM}  can be rewritten as follows
\begin{equation}{\label{equation:main eqn split}}
\underbrace{\sum_{\substack{x_i,x_j \in \vecx_k \\ k\in[0,d-1]} }e_{ij} \cdot x_j \cdot \frac{\partial \IMM}{\partial x_i}}_{(a)}~~ + ~~~\underbrace{\sum_{\substack{x_i \in \vecx_{\ell}\ x_j \in \vecx_k \\ \ell,k\in [0,d-1], \ell \neq k}} e_{ij} \cdot x_j \cdot \frac{\partial  \IMM}{\partial x_i}}_{(b)}  = 0 .
\end{equation} 
In Equation \ref{equation:main eqn split}, term (a) corresponds to the \textit{block-diagonal} entries of $E$ and term (b) corresponds to the \textit{non block-diagonal} entries of $E$. Observe that the terms are monomial disjoint: monomials in term (a) have variables from each variable set $\vecx_{0}, \ldots, \vecx_{d-1}$, whereas monomials in term (b) have two variables from $\vecx_{k}$ and no variable from $\vecx_{\ell}$ for $\ell,k\in [0,d-1]$ and $\ell\neq k$. This implies terms (a) and (b) are individually equal to zero, 
\begin{equation}{\label{equation: split term a are zero}}
\sum_{\substack{x_i,x_j \in \vecx_k \\ k\in[0,d-1]} }e_{ij} \cdot x_j \cdot \frac{\partial \IMM}{\partial x_i} = 0
\end{equation}
\begin{equation}{\label{equation: split term b are zero}}
\sum_{\substack{x_i \in \vecx_{\ell}\ x_j \in \vecx_k \\ \ell,k\in [0,d-1], \ell \neq k}} e_{ij} \cdot x_j \cdot \frac{\partial  \IMM}{\partial x_i} = 0.
\end{equation}
Additionally in Equation \ref{equation: split term b are zero}, for $x_i\in \vecx_{\ell}, x_j \in \vecx_{k}$, $x_{i'}\in \vecx_{\ell'}, x_{j'} \in \vecx_{k'}$ and $(\ell,k) \neq (\ell',k')$ the terms $x_{j} \frac{\partial \IMM}{\partial x_i}$ and $x_{j'} \frac{\partial \IMM}{\partial x_{i'}}$ are monomial disjoint. Thus for every pair $(\ell,k)$ such that $\ell\neq k$
\begin{align}{\label{equation: split terms for non-block diagonal}}
\sum_{\substack{x_i \in \vecx_{\ell}\ x_j \in \vecx_k }} e_{ij} \cdot x_j \cdot \frac{\partial  \IMM}{\partial x_i} &= 0.
\end{align}
In Equation \ref{equation: split terms for non-block diagonal}, group the coefficients of the term $\frac{\partial \IMM}{\partial x_i}$ together and rewrite it as
\begin{equation}{\label{equation: non block diagonal space linear form}}
     \sum_{x_i \in \vecx_{\ell}} L_{x_i}^{(\ell,k)}\frac{\partial \IMM}{\partial x_i} = 0,
\end{equation}
where $L_{x_i}^{(\ell,k)}$ is a linear form in the $\vecx_k$ variables. Now we show that $L_{x_i}^{(\ell,k)} = 0$. \vspace{0.1in}

Let $x_i = x_{p,q}^{(\ell)}$ be the $(p,q)$-th entry of $Q_{\ell}$, where $p,q \in [w]$. Also let $Q'_{\ell}$ be a $w \times w$ matrix whose $(p,q)$-th entry is the linear form $L_{x_i}^{(\ell,k)}$. Then from Equation \ref{equation: non block diagonal space linear form},
\begin{equation}\label{equation: non block diagonal space linear form with trace}
    \sum_{x_i \in \vecx_{\ell}}L_{x_i}^{(\ell,k)}\frac{\partial \IMM}{\partial x_i}~ =~ \text{tr}(Q_0 \ldots Q'_{\ell} \ldots Q_{d-1}) ~=~ 0~.
\end{equation} 
Now suppose for contradiction $L_{x_i}^{(\ell,k)}\neq 0$. Then there is a $x_{u,v}^{(k)} \in \vecx_k$ such that the coefficient of $x_{u,v}^{(k)}$ in $L_{x_i}^{(\ell,k)}$ is non-zero. We argue for the cases $k \notin \{\ell-1, \ell+1\}$ and $k \in \{\ell-1, \ell+1\}$ separately. If $k \notin \{\ell-1, \ell+1\}$ then a path monomial $\mu$ can be chosen such that $\mu$ contains the variable $x_i = x_{p,q}^{(\ell)}$ and $x_{u,v}^{(k)}$. In Equation \ref{equation: non block diagonal space linear form with trace} set all the variables to zero except the variables appearing in $\mu$. Under this assignment the polynomial computed by $\text{tr}(Q_0 \ldots Q'_{\ell} \ldots Q_{d-1})$ is non-zero as the linear form $L_{x_i}^{(\ell,k)} \neq 0$, which is a contradiction. Now suppose $k = \ell - 1$. Then choose a path monomial $\mu$ containing the variables $x_i = x_{p,q}^{(\ell)}$ and $x_{u',p}^{(k)}$ where $u\neq u'$, and in Equation \ref{equation: non block diagonal space linear form with trace} set all the variables to zero except the variables appearing in $\mu$ and the variable $x_{u,v}^{(k)}$. Again under this assignment the polynomial computed by $\text{tr}(Q_0 \ldots Q'_{\ell} \ldots Q_{d-1})$ is non-zero as the linear form $L_{x_i}^{(\ell,k)} \neq 0$, which is a contradiction. For $k = \ell+1$, choosing a path monomial $\mu$ containing the variables $x_i = x_{p,q}^{(\ell)}$ and $x_{q,v'}^{(k)}$ where $v\neq v'$ suffices.
\end{proof}
\emph{Lemma \ref{theorem: lie algebra of of trace} (restated): The space $\CAL{B}_0 + \ldots + \CAL{B}_{d-1}$ is contained in $\GIMM$.}
\begin{proof}
It is sufficient to prove that for every $k \in [0,d-1]$, $\CAL{B}_k \subseteq \G_{\IMM}$. Let $k \in [0,d-2]$, $k$ even, and $B \in \CAL{B}_k$. Then there is an $M \in \CAL{M}_w$ such that 
$$ [B]_k = \begin{bmatrix}
     I_w \otimes M^T & \mathbf{0} \\
     \mathbf{0}  & -I_w \otimes M
    \end{bmatrix}~~.
$$
Let $M = (m_{i,j})_{i,j \in [w]}$, and $\ell_{i,j}^{(k)} = \sum_{v\in [w]} m_{v,j} x_{i,v}^{(k)}$ and $\ell_{i,j}^{(k+1)} = \sum_{v\in [w]} - m_{i,v} x_{v,j}^{(k+1)}$ for all $i,j \in [w]$. Further, let $Q'_k = (\ell_{i,j}^{(k)})_{i,j \in [k]}$, and $Q'_{k+1} = (\ell_{i,j}^{(k+1)})_{i,j\in [w]}$. 
\begin{observation}\label{observation: containment of B}
The matrix $B \in \GIMM$ if and only if the following holds:
\begin{align*}\label{equation: lie algebra equation for block spaces in Lie algebra lemma }
    \sum_{i,j\in [w]} \ell_{i,j}^{(k)} \frac{\partial \IMM}{x_{i,j}^{(k)}} ~~~+ \sum_{i,j\in [w]} \ell_{i,j}^{(k+1)} \frac{\partial \IMM}{x_{i,j}^{(k+1)}} ~~~ &= ~~~  \tr(Q_0\cdots Q_{k-1}( Q'_k\cdot Q_{k+1} +  Q_k\cdot Q'_{k+1})Q_{k+2} \cdots Q_{d-1}) \\
    & = ~~~~0~.
\end{align*}
\end{observation}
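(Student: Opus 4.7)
The plan is to verify the observation by direct computation, exploiting the definition of the Lie algebra together with the explicit structure of $B \in \CAL{B}_k$. First I would note that since $B$ is block-diagonal with non-zero entries only in the two diagonal blocks indexed by $\vecx_k$ and $\vecx_{k+1}$ (and no ``cross'' entries between these two blocks, because $[B]_k$ is itself block-diagonal), the defining equation $\sum_{i,j \in [n]} e_{ij}\, x_j \frac{\partial \IMM}{\partial x_i} = 0$ reduces to the sum of two independent pieces: one supported on pairs $(x_i,x_j) \in \vecx_k \times \vecx_k$ and one on pairs in $\vecx_{k+1}\times \vecx_{k+1}$. Hence $B \in \GIMM$ iff these two pieces sum to zero.

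Next I would compute each piece explicitly. For the $\vecx_k$ block, use $[I_w \otimes M^T]$ together with the row-major ordering of $\vecx_k$ (which is the relevant convention since $k$ is even): indexing an entry by $(u,v)$ corresponding to $x_{u,v}^{(k)}$, the $((u,v),(u',v'))$-entry of $I_w \otimes M^T$ is $\delta_{u,u'}\, m_{v',v}$. Collecting the coefficient of $\frac{\partial \IMM}{\partial x_{u,v}^{(k)}}$ and comparing with $\ell_{i,j}^{(k)} = \sum_{v} m_{v,j}\, x_{i,v}^{(k)}$ shows that the $\vecx_k$ contribution is exactly $\sum_{i,j} \ell_{i,j}^{(k)}\,\frac{\partial \IMM}{\partial x_{i,j}^{(k)}}$. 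Perform the analogous calculation for the $\vecx_{k+1}$ block, now using $-I_w \otimes M$ together with the column-major ordering of $\vecx_{k+1}$ (since $k+1$ is odd); the minus sign and the switch in ordering get absorbed into the definition $\ell_{i,j}^{(k+1)} = -\sum_{v} m_{i,v}\, x_{v,j}^{(k+1)}$, yielding $\sum_{i,j} \ell_{i,j}^{(k+1)}\,\frac{\partial \IMM}{\partial x_{i,j}^{(k+1)}}$. This establishes the first equality of the observation and the ``iff'' characterization.

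The second equality is a routine application of the cyclic invariance of trace. Writing $P := Q_{k+1}\cdots Q_{d-1} Q_0 \cdots Q_{k-1}$, cyclicity gives $\IMM = \tr(Q_k\cdot P)$, so $\frac{\partial \IMM}{\partial x_{i,j}^{(k)}} = P_{j,i}$ (since $P$ contains no variable from $\vecx_k$). Therefore
\[
\sum_{i,j} \ell_{i,j}^{(k)}\,\tfrac{\partial \IMM}{\partial x_{i,j}^{(k)}} \;=\; \sum_{i,j} (Q'_k)_{i,j}\, P_{j,i} \;=\; \tr(Q'_k \cdot P) \;=\; \tr(Q_0 \cdots Q_{k-1}\, Q'_k\, Q_{k+1} \cdots Q_{d-1}),
\]
and similarly for the $(k+1)$-st summand. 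Adding the two and pulling out $Q_0\cdots Q_{k-1}$ on the left and $Q_{k+2}\cdots Q_{d-1}$ on the right gives the claimed trace expression, and this expression equals zero precisely when $B \in \GIMM$ by the preceding paragraph.

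The only point that demands attention is bookkeeping: keeping the row-major versus column-major orderings straight across the two adjacent blocks, and verifying that the sign and the transposes built into the definitions of $\CAL{B}_k$ and of $\ell^{(k+1)}$ are chosen precisely so that the two contributions combine into the symmetric expression $Q'_k Q_{k+1} + Q_k Q'_{k+1}$ under the trace. Everything else is routine linear algebra and trace manipulation.
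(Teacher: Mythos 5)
Your computation is correct, and it is exactly the argument the paper intends: the paper states this observation without proof (treating it as immediate from Definition \ref{def:lie_algebra} and the block structure of $B$), and your verification — restricting the Lie-algebra equation to the two diagonal blocks, matching the rows of $I_w\otimes M^T$ and $-I_w\otimes M$ under the row-major/column-major conventions to the linear forms $\ell_{i,j}^{(k)}$, $\ell_{i,j}^{(k+1)}$, and then using cyclicity of trace to rewrite the derivative sums as $\tr\bigl(Q_0\cdots Q_{k-1}(Q'_kQ_{k+1}+Q_kQ'_{k+1})Q_{k+2}\cdots Q_{d-1}\bigr)$ — is precisely the routine check being elided. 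No gaps.
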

\begin{observation}\label{observation: structure of Q'k}
The matrices $Q'_k$ and $Q'_{k+1}$ are such that $Q'_k = Q_k\cdot M$ and $Q'_{k+1} = -M\cdot Q_{k+1}$.
\end{observation}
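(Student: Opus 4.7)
The plan is to verify both identities entry by entry, using only the definitions of $\ell_{i,j}^{(k)}$ and $\ell_{i,j}^{(k+1)}$ together with the standard formula for matrix multiplication. Since $(Q_k)_{i,v} = x_{i,v}^{(k)}$, the $(i,j)$-entry of $Q_k \cdot M$ is $\sum_{v \in [w]} x_{i,v}^{(k)} m_{v,j}$, which is exactly $\ell_{i,j}^{(k)}$; and the $(i,j)$-entry of $-M \cdot Q_{k+1}$ is $\sum_{v \in [w]} -m_{i,v} x_{v,j}^{(k+1)}$, which is exactly $\ell_{i,j}^{(k+1)}$. So once the linear forms are written down correctly, the verification is essentially a one-line computation.

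The only real content lies in justifying those definitions of $\ell_{i,j}^{(k)}$ and $\ell_{i,j}^{(k+1)}$ from the structure of $[B]_k$, since the observation is implicitly asserting that the $M$ in $I_w \otimes M^T$ and the $M$ in $-I_w \otimes M$ play the roles described. I would briefly unpack how the ordering conventions feed in: the rows and columns indexed by $\vecx_k$ are arranged row-major (because $k$ is even), so the entry $e_{(i_1,j_1),(i_2,j_2)}$ of the $I_w \otimes M^T$ block equals $\delta_{i_1,i_2} \cdot (M^T)_{j_1,j_2} = \delta_{i_1,i_2}\, m_{j_2,j_1}$; contributing to the Lie algebra sum $\sum e_{ij} x_j \partial_{x_i} f$ this gives $\sum_{i,j} \ell_{i,j}^{(k)} \partial f/\partial x_{i,j}^{(k)}$ with $\ell_{i,j}^{(k)} = \sum_v m_{v,j} x_{i,v}^{(k)}$. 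Similarly, the rows/columns indexed by $\vecx_{k+1}$ are column-major (as $k+1$ is odd), so the block $-I_w \otimes M$ yields $e_{(i_1,j_1),(i_2,j_2)} = -\delta_{j_1,j_2}\, m_{i_1,i_2}$, which produces $\ell_{i,j}^{(k+1)} = \sum_v -m_{i,v} x_{v,j}^{(k+1)}$.

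Having matched each $\ell_{i,j}^{(k)}$ with $(Q_k \cdot M)_{i,j}$ and each $\ell_{i,j}^{(k+1)}$ with $(-M \cdot Q_{k+1})_{i,j}$ by direct inspection, the two claimed matrix identities $Q'_k = Q_k \cdot M$ and $Q'_{k+1} = -M \cdot Q_{k+1}$ follow immediately from the definitions $Q'_k = (\ell_{i,j}^{(k)})_{i,j}$ and $Q'_{k+1} = (\ell_{i,j}^{(k+1)})_{i,j}$. There is no genuine obstacle; the only place one can slip is bookkeeping the row-major vs.\ column-major convention (an analogous proof for odd $k$ would swap the roles, transposing $M$ where appropriate), but that is purely notational. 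Because the result is a direct consequence of the Kronecker product structure prescribed in the definition of $\CAL{B}_k$, a formal proof amounts to displaying these two entrywise computations.
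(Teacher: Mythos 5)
Your proposal is correct and coincides with the paper's (implicit) argument: the paper states this observation without proof precisely because, with $\ell_{i,j}^{(k)} = \sum_{v} m_{v,j} x_{i,v}^{(k)}$ and $\ell_{i,j}^{(k+1)} = -\sum_{v} m_{i,v} x_{v,j}^{(k+1)}$ as defined, the identities $Q'_k = Q_k\cdot M$ and $Q'_{k+1} = -M\cdot Q_{k+1}$ are exactly the entrywise matrix-product computation you display. Your additional check that these linear forms are the ones encoded by the blocks $I_w\otimes M^T$ and $-I_w\otimes M$ under the row-major/column-major indexing is also accurate, though in the paper those $\ell$'s are simply taken as the definition at that point in the proof of the lemma.
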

Thus, $Q'_k\cdot Q_{k+1} = -Q_{k}\cdot Q'_{k+1}$, and hence $B \in \G_{\IMM}$. The proofs for the remaining two cases: a) $k \in [0,d-1]$, $d$ even and $k$ odd, and b) $k = d-1$ and $d$ odd are similar.
\end{proof}
\emph{Lemma \ref{lemma:block-diagonal matrix with two non-zero blocks in GIMM is in B_k} (restated): Suppose $B\in \GIMM$ and there is a $k\in [0,d-1]$ such that the non-zero entries of $B$ are confined to the rows and columns that are indexed by $\vecx_k$ and $\vecx_{k+1}$ variables. Then  $B\in \CAL{B}_k$.}
\begin{proof}
Let $\ell_{i,j}^{(k)}$ and $\ell_{i,j}^{(k+1)}$ be the linear forms whose coefficients are given by the row vectors indexed by $x_{i,j}^{(k)}$  and $x_{i,j}^{(k+1)}$ variables in $B$ respectively. From the structure of $B$ it follows that $\vecx_k$ and $\vecx_{k+1}$ are the only variables with non-zero coefficients in $\ell_{i,j}^{(k)}$ and $\ell_{i,j}^{(k+1)}$ respectively. Let $Q'_k = (\ell_{i,j}^{(k)})_{i,j\in [w]}$, and $Q'_{k+1} = (\ell_{i,j}^{(k+1)})_{i,j\in [w]}$. Since $B\in \GIMM$,
\begin{equation}\label{equation: lie algebra equation for B}
\text{tr}(Q_0\ldots Q'_k\cdot Q_{k+1}\ldots Q_{d-1} + Q_0\ldots Q_k\cdot Q'_{k+1}\ldots Q_{d-1}) = 0 ~~.
\end{equation}
\begin{observation}\label{observation: Q'kQk=QkQ'k}
Equation \ref{equation: lie algebra equation for B} implies  $Q'_k\cdot Q_{k+1} + Q_k\cdot Q'_{k+1} = 0$
\end{observation}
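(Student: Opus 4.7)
The plan is to view Equation \ref{equation: lie algebra equation for B} as a polynomial identity in all the $\vecx$ variables and recover each entry of $P := Q'_k\cdot Q_{k+1} + Q_k\cdot Q'_{k+1}$ by reading off the coefficient of a carefully chosen monomial in the ``untouched'' variable sets. Recall that the entries of $Q'_k$ (respectively $Q'_{k+1}$) are linear forms in $\vecx_k$ (respectively $\vecx_{k+1}$) only, so the entries of $P$ lie in $\F[\vecx_k\cup\vecx_{k+1}]$.

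First, I would use the cyclic invariance of the trace to rewrite the left-hand side of Equation \ref{equation: lie algebra equation for B} as $\tr(R\cdot P)$, where $R := Q_{k+2}\,Q_{k+3}\cdots Q_{k-1}$ is the cyclic product of the remaining matrices (indices read in $\Z/d\Z$). Since $d\geq 3$, $R$ is a non-trivial product, and its entries lie in $\F[\vecx_{k+2}\cup\cdots\cup\vecx_{k-1}]$, which is variable-disjoint from the ring $\F[\vecx_k\cup\vecx_{k+1}]$ where the entries of $P$ live. Expanding $\tr(R\cdot P)=\sum_{i,j\in[w]} R_{i,j}\cdot P_{j,i}$ then cleanly separates the $P$-dependence from the $R$-dependence.

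Next, I would extract the entries of $P$ one at a time. The key observation is that $R_{i,j}$, being the $(i,j)$-entry of a product of $d-2$ symbolic matrices with pairwise disjoint variable sets, equals the sum of path-monomials $x^{(k+2)}_{i,s_2}\cdot x^{(k+3)}_{s_2,s_3}\cdots x^{(k-1)}_{s_{d-2},j}$ over all intermediate indices $(s_2,\ldots,s_{d-2})\in[w]^{d-3}$, each appearing with coefficient $1$. Any such monomial already reveals both endpoints $i$ and $j$ from its first and last factor, so the monomial supports of $R_{i,j}$ and of $R_{i',j'}$ are disjoint whenever $(i,j)\neq (i',j')$. For each fixed $(i,j)$ I would then pick a concrete witness monomial, e.g.\ $\mu_{i,j} := x^{(k+2)}_{i,1}\cdot x^{(k+3)}_{1,1}\cdots x^{(k-2)}_{1,1}\cdot x^{(k-1)}_{1,j}$: viewed as a polynomial in the $R$-variables with coefficients in $\F[\vecx_k\cup\vecx_{k+1}]$, the coefficient of $\mu_{i,j}$ in $\tr(R\cdot P)$ is exactly $P_{j,i}$. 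Hence the vanishing of $\tr(R\cdot P)$ forces $P_{j,i}=0$ for every $i,j\in[w]$, so $P=0$.

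The only delicate step is the initial cyclic reindexing, particularly at the boundary values $k=0$ and $k=d-1$ where $R$ wraps around $d-1\mapsto 0$; but since the paper already reads indices in $\Z/d\Z$, this is uniform and routine. Once the identity is put in the form $\tr(R\cdot P)=0$ with the two sides living over disjoint variable sets, the coefficient-extraction step is essentially forced, and I do not anticipate any real obstacle.
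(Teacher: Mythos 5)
Your proposal is correct and follows essentially the same route as the paper: both use cyclic invariance of the trace to rewrite the hypothesis as $\tr\bigl((Q'_k Q_{k+1}+Q_k Q'_{k+1})\cdot Q_{k+2}\cdots Q_{k-1}\bigr)=0$ and then exploit that the remaining matrices carry fresh variables disjoint from those of $Q'_k,Q_{k+1},Q_k,Q'_{k+1}$. The only mechanical difference is in the last step: the paper specializes $Q_{k+3},\ldots,Q_{k-1}$ to identity matrices and uses that the entries of $Q_{k+2}$ are distinct variables, whereas you read off each entry $P_{j,i}$ as the coefficient of a witness path monomial in the full product — two equivalent ways of doing the same extraction.
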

\begin{proof}
The third line in the following sequence of equations follows from the fact that trace remains invariant under rotations.
\begin{align*}
 \text{tr}(Q_0\ldots Q'_k\cdot Q_{k+1}\ldots Q_{d-1} + Q_0\ldots Q_k\cdot Q'_{k+1}\ldots Q_{d-1})\\
  = \text{tr}(Q_0\ldots Q_{k-1}(Q'_k\cdot Q_{k+1} +  Q_k\cdot Q'_{k+1}) Q_{k+2}\ldots Q_{d-1}) \\
 = \text{tr}((Q'_k\cdot Q_{k+1} +  Q_k\cdot Q'_{k+1})Q_{k+2}\ldots Q_{d-1}\cdot Q_0 \ldots Q_{k-1}) = 0 ~.
\end{align*}
Assign $0/1$ values to the variables in $\vecx_{k+3}, \ldots, \vecx_{d-1}, \vecx_{0}, \ldots, \vecx_{k-1}$ such that $Q_{k+3}, \ldots, Q_{d-1}, Q_{0}, \ldots, Q_{k-1}$ become identity matrices under this assignment. As the entries of $Q_{k+2}$ are distinct $\vecx_{k+2}$ variables, we have $Q'_k\cdot Q_{k+1} +  Q_k\cdot Q'_{k+1} = 0$.
\end{proof}
By setting the $\vecx_{k+1}$ variables to $0/1$ so that $Q_{k+1}$ becomes identity in the equation $Q'_k\cdot Q_{k+1} +  Q_k\cdot Q'_{k+1} = 0$, we get $Q'_{k} = Q_1 M$ for some $M \in \CAL{M}_w$. Similarly, $Q'_{k+1} = N Q_{k+1}$. Thus, $Q_kM Q_{k+1} +  Q_kN Q_{k+1} = 0$ which implies $N = -M$. At this point, the structure of $B$ can be determined using $M$ and then it is easily observed that $B\in \CAL{B}_k$.
\end{proof}
\begin{claim}{\label{claim:diag_distinct}}
There is a diagonal matrix in $\GIMM$ with distinct diagonal entries.
\end{claim}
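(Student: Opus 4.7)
My strategy is to produce the desired diagonal element of $\GIMM$ as a sum of diagonal elements drawn from the subspaces $\CAL{B}_0,\ldots,\CAL{B}_{d-1}$, all of which lie in $\GIMM$ by Lemma~\ref{theorem: lie algebra of of trace}. For each $k\in[0,d-1]$, pick a diagonal matrix $M_k=\diag(m_{k,1},\ldots,m_{k,w})$, where $m_{k,j}\in\F$ are scalars to be chosen, and let $B_k\in\CAL{B}_k$ be the matrix determined by $M_k$ as in Equation~\ref{equation: lie algebra matrices}. Because the Kronecker product of $I_w$ with a diagonal matrix (or of a diagonal matrix with $I_w$) is diagonal, each $B_k$ is diagonal, and consequently so is $B:=B_0+\cdots+B_{d-1}\in\GIMM$.

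Next I will compute the diagonal entry of $B$ at the position indexed by $x_{i,j}^{(k)}$. That position is affected only by $B_k$ (whose first block covers $\vecx_k$) and by $B_{k-1}$ (whose second block covers $\vecx_k$), where indices are taken in $\Z/d\Z$. A short case analysis, using that $\vecx_k$ is ordered row-major when $k$ is even and column-major when $k$ is odd, and handling separately the parities of $d$ together with the special form of $\CAL{B}_{d-1}$ when $d$ is odd, shows that in every case the contribution from $B_k$ is $m_{k,j}$ and the contribution from $B_{k-1}$ is $-m_{k-1,i}$. Hence the $x_{i,j}^{(k)}$-entry of $B$ equals $m_{k,j}-m_{k-1,i}$, with indices read mod~$d$. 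This case analysis is the main bookkeeping step.

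It remains to choose the $dw$ scalars $m_{k,j}$ so that all $dw^2$ values $m_{k,j}-m_{k-1,i}$ are pairwise distinct. For any pair of distinct triples $(k,i,j)\neq(k',i',j')$ in $[0,d-1]\times[w]\times[w]$, the equation $m_{k,j}-m_{k-1,i}-m_{k',j'}+m_{k'-1,i'}=0$ is linear in the $m_{*,*}$'s, and I claim it is non-trivial: a formal cancellation of the four signed terms would force either $(k,j)=(k',j')$ and $(k-1,i)=(k'-1,i')$ (which implies $(k,i,j)=(k',i',j')$) or $(k,j)=(k'-1,i')$ and $(k-1,i)=(k',j')$ (which forces $2\equiv 0\pmod d$, impossible for $d\geq 3$). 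Thus each such constraint defines a proper hyperplane in the parameter space $\F^{dw}$. Because $\char(\F)=0$ or $\char(\F)>(w^2d)^5$, the field $\F$ has far more than $\binom{dw^2}{2}$ elements, so the finite union of these hyperplanes does not cover $\F^{dw}$; any $m_{k,j}$'s outside this union yield a diagonal matrix in $\GIMM$ with $dw^2$ distinct diagonal entries.
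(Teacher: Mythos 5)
Your proposal is correct and follows essentially the same route as the paper: build a diagonal element of $\GIMM$ by summing, over $k$, the matrices of $\CAL{B}_k$ corresponding to diagonal blocks $M_k$ (justified by Lemma~\ref{theorem: lie algebra of of trace}), observe that the entry indexed by $x_{i,j}^{(k)}$ is the difference $m_{k,j}-m_{k-1,i}$, and note that these are pairwise distinct nonzero linear forms in the parameters. The only cosmetic difference is the last step: the paper fixes the parameters by a random assignment from a set of size $n^3$ (Schwartz–Zippel style), whereas you pick a point outside the finite union of hyperplanes cut out by the pairwise-equality constraints, using the largeness of $\F$ guaranteed by the characteristic assumption; the two arguments are interchangeable.
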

\begin{proof}
For $k\in [0,d-1]$, let $D_k$ be a $w\times w$ diagonal matrix whose $i$-th diagonal entry is denoted as $d_i^{(k)}$. For $k\in [0,d-1]$ let $B_k \in \CAL{B}_k$ be the diagonal matrix whose $2w^2\times 2w^2$ sub-matrix indexed by $\vecx_{k}\uplus \vecx_{k+1}$ variables, denoted $[B_k]_k$, looks as follows: if $d$ is even then
\begin{eqnarray} 
[B_k]_k &=&  \begin{bmatrix} I_w \otimes D_k & \mathbf{0} \\ \mathbf{0}  & -I_w \otimes D_k \end{bmatrix}   ~~\text{if $k$ is even}, \nonumber \\
&=&  \begin{bmatrix} D_k \otimes I_w & \mathbf{0} \\ \mathbf{0}  & -D_k \otimes I_w \end{bmatrix}  ~~\text{if $k$ is odd}.
\end{eqnarray}
If $d$ is odd, then $B_0,\ldots , B_{d-2}$ remain the same, and only $B_{d-1}$ is defined differently and in this case
\begin{equation*}
[B_{d-1}]_{d-1} = \begin{bmatrix} I_w \otimes D_{d-1} & \mathbf{0} \\ \mathbf{0}  & -D_{d-1} \otimes I_w \end{bmatrix}  ~.
\end{equation*}
Suppose $B = \sum_{k=0}^{d-1}B_k$. Then $B$ is a diagonal matrix in $\GIMM$ (from Lemma \ref{theorem: lie algebra of of trace}) whose diagonal entry indexed by the variable $x_{i,j}^{(k)}$ is equal to $d_{j}^{(k)} - d_{i}^{(k-1)}$. If we pretend the entries of $D_0, \ldots, D_{d-1}$ to be formal variables, say $\mathbf{d}$ variables, then the $n$ diagonal entries of $B$ are $n$ distinct linear forms in $\mathbf{d}$ variables. Hence, if we assign values to the $\mathbf{d}$ variables uniformly at random from a set $S \subseteq \F$  such that $|S|\geq n^3$ then with non-zero probability $B$ has all diagonal entries distinct after the random assignment. 
\end{proof}
\begin{lemma}{\label{lemma:char_poly_sq_free}}
Let $E_1, \ldots, E_a$ be a basis of $\GIMM$ and $E = \sum_{i = 1}^{a} r_iE_i$ where $r_i \in_r S \subset \mathbb{F}, |S| \geq n^4$. Then the characteristic polynomial of $E$ is square-free with probability $1 - o(1)$.
\end{lemma}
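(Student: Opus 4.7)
The plan is to view the characteristic polynomial of $E$ as a polynomial whose coefficients lie in the ring $\F[r_1, \ldots, r_a]$ and then apply the Schwartz--Zippel lemma to its discriminant, using Claim~\ref{claim:diag_distinct} to certify that this discriminant is not identically zero.

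First, treat $r_1, \ldots, r_a$ as formal indeterminates, so that $E = \sum_{i=1}^a r_i E_i$ is an $n \times n$ matrix whose entries are linear forms in the $r_i$'s. Its characteristic polynomial $\chi_E(t) \in \F[r_1, \ldots, r_a][t]$ is a monic degree-$n$ polynomial in $t$, with each coefficient a polynomial in the $r_i$'s of total degree at most $n$. Recall that $\chi_E(t)$ is square-free over $\F$ (for a specific numerical assignment of the $r_i$'s) if and only if its discriminant $\disc(\chi_E) \in \F$ is non-zero. Viewing $\disc(\chi_E)$ as a polynomial in $r_1, \ldots, r_a$, a standard formula (e.g., the resultant of $\chi_E$ and $\partial \chi_E/\partial t$) shows that $\disc(\chi_E)$ has total degree at most $n(n-1) \leq n^2$ in the $r_i$'s.

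Next, I would show that $\disc(\chi_E)$, as a polynomial in $r_1, \ldots, r_a$, is not identically zero. By Claim~\ref{claim:diag_distinct}, there exists a diagonal matrix $B \in \GIMM$ whose $n$ diagonal entries are pairwise distinct. Since $E_1, \ldots, E_a$ is a basis of $\GIMM$, there exist scalars $\alpha_1, \ldots, \alpha_a \in \F$ such that $B = \sum_{i=1}^a \alpha_i E_i$. The characteristic polynomial of $B$ is $\prod_{j=1}^{n}(t - \lambda_j)$ where $\lambda_1, \ldots, \lambda_n$ are the distinct diagonal entries of $B$, so it is square-free and hence has non-zero discriminant. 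Therefore $\disc(\chi_E)(\alpha_1, \ldots, \alpha_a) \neq 0$, which means $\disc(\chi_E)$ is a non-zero polynomial in $r_1, \ldots, r_a$.

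Finally, apply the Schwartz--Zippel lemma. Sampling $r_1, \ldots, r_a$ independently and uniformly from $S \subseteq \F$ with $|S| \geq n^4$, the probability that $\disc(\chi_E)(r_1, \ldots, r_a) = 0$ is at most $n^2/n^4 = 1/n^2 = o(1)$. On the complementary event, $\chi_E(t)$ is square-free. The only mild subtlety — not really an obstacle — is confirming the degree bound on $\disc(\chi_E)$ in the $r_i$'s; this follows from the Sylvester matrix expression $\disc(\chi_E) = \res(\chi_E, \chi_E')$, whose entries are polynomials of degree at most $n$ in the $r_i$'s, giving a crude but sufficient degree bound of $n^2$.
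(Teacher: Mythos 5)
Your proposal is correct and follows essentially the same route as the paper: treat the $r_i$'s as formal variables, use Claim~\ref{claim:diag_distinct} to certify that the discriminant of the characteristic polynomial is a non-zero polynomial of degree $O(n^2)$ in the $r_i$'s, and conclude by Schwartz--Zippel. One small quibble: the crude Sylvester-matrix bound actually gives degree about $n(2n-1)\approx 2n^2$ (the bound the paper uses), not $n^2$ --- the sharper $n(n-1)$ needs the weighted-homogeneity of the discriminant in the coefficients --- but either bound is far below $|S|\geq n^4$, so the conclusion stands.
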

\begin{proof}
If we treat $\vecr = \{r_1, \ldots, r_a\}$ as formal variables then the characteristic polynomial $h_r(x)$ of $E$ is a polynomial in $x$ with coefficients that are polynomial of degree at most $n$ in $\vecr$ variables. Observe that the discriminant of $h_r(x)$, denoted $\text{disc}(h_r(x))$ is a non-zero polynomial in the $\vecr$ variables of degree at most $2n^2$. This is because if $\text{disc}(h_r(x))$ is identically zero as polynomial in $\vecr$ variables then for every evaluation of $\vecr$ variables to field elements, $h_r(x)$ is not a square-free polynomial. This contradicts Claim \ref{claim:diag_distinct}, as we can set the $\vecr$ variables appropriately such that $E$ is a diagonal matrix with distinct entries and $h_r(x)$ for such a setting is square-free. Since $\text{disc}(h_r(x))$ is not an identically zero polynomial in $\vecr$ variables and has degree less than $2n^2$, if we set the $\vecr$ variables independently and uniformly at random from $S\subseteq \F$, $|S| \geq 2n^3$ then with probability $1-o(1)$ $\text{disc}(h_r(x)) \neq 0$, i.e., $h_r(x)$ is square-free.
\end{proof}

\emph{Claim \ref{claim:invar_subspace_is_coordinate} (restated): Any non-zero $\GIMM$-invariant subspace is a coordinate subspace of $\F^n$.}
\begin{proof}
Let $\mathbf{u} = (u_1, \ldots, u_n) \in \mathcal{U}$, $S_\mathbf{u}$ be the set of non-zero coordinates of $\mathbf{u}$, that is $S_{\mathbf{u}} := \{ j : u_j \neq 0\  \textnormal{and}\  j \in [n] \}$, and $D$ be a diagonal matrix as in Claim \ref{claim:diag_distinct} with distinct diagonal entries $\lambda_1, \ldots, \lambda_n$. Then the vectors $\{(\lambda_1^iu_1, \ldots , \lambda_n^iu_n) \in \mathcal{U} \mid i\in [0,|S_{\vecu}|-1]\}$ are $\F$-linearly independent. Hence for all $j \in S_{\vecu}$, $e_j \in \mathcal{U}$ .
This implies $\CAL{U}$ is a coordinate subspace. 
\end{proof}
\emph{Lemma \ref{lemma:irreducible_space_limm} (restated): The only irreducible $\GIMM$-invariant subspaces of $\F^n$ are $\mathcal{U}_0, \ldots, \mathcal{U}_{d-1}$.}
\begin{proof} 
It follows from Claim \ref{claim: gtrimm is block diagonal} that $\mathcal{U}_0, \ldots, \mathcal{U}_{d-1}$ are $\GIMM$-invariant subspaces. We show that $\CAL{U}_k$ is irreducible for $k\in [0,d-1]$. Suppose $\mathcal{U}$ is a non-zero $\GIMM$-invariant subspace and $\CAL{U} \subset \mathcal{U}_k$ for some $k \in [0,d-1]$. Then $\CAL{U}$ is a coordinate subspace of $\F^n$ (from Claim \ref{claim:invar_subspace_is_coordinate}). Let $e_x \in \F^n$ be the coordinate vector with one in the entry indexed by the variable $x \in \vecx$. Then to prove that $\CAL{U} = \CAL{U}_k$, it is sufficient to show that $e_x \in \CAL{U}$ for all $x \in \vecx_k$. We show this when $d$ is even. (For $d$ odd the matrices $B_k$ and $B_{k-1}$ defined below need to be appropriately redefined for $k=d-1$  so that $B_{d-1} \in \CAL{B}_{d-1}$ and $B_{d-2} \in \CAL{B}_{d-2}$.) Let $1_w$ be the all ones $w \times w$ matrix. Define the matrices $B_k\in \mathcal{B}_k$ and $B_{k-1}\in \mathcal{B}_{k-1}$ as follows: If $k$ is odd then 
$$[B_k]_k = \begin{bmatrix}
    1_w \otimes I_w & 0 \\ 
    0 & -1_w \otimes I_w
  \end{bmatrix} ~~~\text{and}~~~
[B_{k-1}]_{k-1} = \begin{bmatrix}
    -I_w \otimes 1_w & 0 \\ 
    0 & I_w \otimes 1_w
  \end{bmatrix}.
$$
If $k$ is even then 
$$[B_k]_k = \begin{bmatrix}
    -I_w \otimes 1_w & 0 \\ 
    0 & I_w \otimes 1_w
  \end{bmatrix} ~~~\text{and}~~~
[B_{k-1}]_{k-1} = \begin{bmatrix}
    1_w \otimes I_w & 0 \\ 
    0 & -1_w \otimes I_w
  \end{bmatrix} .
$$
Consider the matrix $E = B_{k-1} + B_k$ in $\GIMM$. Since $\CAL{U}$ is a coordinate subspace, there is a $y = x_{i,j}^{(k)} \in \vecx_k$ such that $e_y \in \CAL{U}$. Observation \ref{obs:col_row_var_in_u} follows from the structure of $E$ and Claim \ref{claim:invar_subspace_is_coordinate}.
\begin{observation}{\label{obs:col_row_var_in_u}}
The entries of the vector $Ee_y$ indexed by the variables in $\{ x_{i,1}^{(k)}, x_{i,2}^{(k)}, \ldots, x_{i,w}^{(k)}\}$ and $\{ x_{1,j}^{(k)}, x_{2,j}^{(k)}, \ldots, x_{w,j}^{(k)} \}$ are one and hence the coordinate vectors corresponding to these variables are in $\CAL{U}$.
\end{observation}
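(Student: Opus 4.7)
The plan is to prove Observation \ref{obs:col_row_var_in_u} by directly computing $Ee_y$ from the explicit block structures of $B_{k-1}$ and $B_k$, and then invoking Claim \ref{claim:invar_subspace_is_coordinate} to promote the nonzero entries of $Ee_y$ to coordinate vectors inside $\CAL{U}$. Since $\CAL{U}$ is already known to be a coordinate subspace, it suffices to identify the \emph{support} of $Ee_y$; the precise numerical values are irrelevant. I will work at the level of the action of the Kronecker blocks on the $w\times w$ matrix $E_{i,j}$ (with a single $1$ at $(i,j)$) to which $e_y$ corresponds under the row-/column-major identification of $\vecx_k$.

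First, split $Ee_y = B_{k-1}e_y + B_ke_y$. Since $B_{k-1},B_k$ are block-diagonal and $y\in \vecx_k$, both summands are supported entirely in the $\vecx_k$-coordinates, so no $\vecx_{k-1}$ or $\vecx_{k+1}$ coordinate of $Ee_y$ is nonzero. Next, I would translate each relevant Kronecker block into a matrix action on $Q_k$ via the standard identities $(B^T\otimes A)\,\mathrm{vec}_c(X)=\mathrm{vec}_c(AXB)$ and $(A\otimes B^T)\,\mathrm{vec}_r(X)=\mathrm{vec}_r(AXB)$. For $k$ odd, the top-left block of $[B_k]_k$ is $1_w\otimes I_w$, which corresponds to $Q_k\mapsto Q_k\cdot 1_w$; applied to $E_{i,j}$ this yields the matrix with $1$s exactly in row $i$, giving the coordinate support $\{x_{i,1}^{(k)},\ldots,x_{i,w}^{(k)}\}$. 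Symmetrically, the bottom-right block of $[B_{k-1}]_{k-1}$ is $I_w\otimes 1_w$, which translates to $Q_k\mapsto 1_w\cdot Q_k$ and, applied to $E_{i,j}$, produces $1_w\cdot E_{i,j}$ with $1$s in column $j$, i.e.\ support $\{x_{1,j}^{(k)},\ldots,x_{w,j}^{(k)}\}$. The case $k$ even is entirely parallel, with $M=-1_w$ in both blocks, yielding the same support with an overall sign.

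Combining the two contributions, $Ee_y$ is supported exactly on the union of the row and column through $(i,j)$ in $\vecx_k$ (the entry at $x_{i,j}^{(k)}$ being $\pm 2$ and all other flagged entries being $\pm 1$, all nonzero; this is where the hypothesis that $\char(\F)$ is zero or large enters). By Lemma \ref{theorem: lie algebra of of trace}, $E=B_{k-1}+B_k\in \GIMM$, so the $\GIMM$-invariance of $\CAL{U}$ forces $Ee_y\in \CAL{U}$; together with Claim \ref{claim:invar_subspace_is_coordinate}, this yields $e_x\in \CAL{U}$ for every $x$ at which $Ee_y$ is nonzero, which is precisely the claimed conclusion. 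The main nuisance, rather than a real obstacle, will be keeping the row-major vs.\ column-major conventions aligned with the Kronecker factors and tracking where $M$ versus $M^T$ sits in the definition of $\CAL{B}_k$; in the boundary case $k=0$, the index $k-1$ must also be read cyclically as $d-1$, which is precisely why the parenthetical remark in the excerpt warns that $B_{d-1}$ must be redefined when $d$ is odd.
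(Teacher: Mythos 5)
Your proposal is correct and takes essentially the same route as the paper, whose own justification is exactly that the observation ``follows from the structure of $E$'' together with Claim \ref{claim:invar_subspace_is_coordinate}; your Kronecker-identity computation of the support of $Ee_y = B_{k-1}e_y + B_k e_y$ simply makes that terse argument explicit. Your added care about signs and the entry $\pm 2$ at $x_{i,j}^{(k)}$ (so ``one'' should really be read as ``non-zero'', which is harmless under the characteristic assumption) is, if anything, slightly more precise than the paper's statement.
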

Applying Observation~\ref{obs:col_row_var_in_u} repeatedly we have that $e_x \in \mathcal{U}$ for all $x \in \mathbf{x}_k$. Hence, $\mathcal{U} = \mathcal{U}_k$ implying $\CAL{U}_k$ is irreducible. Finally, we argue that $\CAL{U}_0, \ldots, \CAL{U}_{d-1}$ are the only irreducible $\GIMM$-invariant subspaces. Let $\mathcal{U}$ be an irreducible $\GIMM$-invariant subspace and hence a coordinate subspace of $\F^n$. Suppose $e_y \in \mathcal{U}$, where $y \in \mathbf{x}_k$ for some $k \in [0,d-1]$. Applying Observation~\ref{obs:col_row_var_in_u} repeatedly we have that $e_x \in \mathcal{U}$ for all $x \in \mathbf{x}_k$. Hence, $\mathcal{U}_k \subseteq \CAL{U}$. Since $\mathcal{U}$ is irreducible, $\CAL{U} = \CAL{U}_k$.
\end{proof}
\emph{Corollary \ref{corollary:irr_inv_sub_f} (restated): If $f = \IMM_{w,d}(A\vecx)$, where $A \in \GL(n,\F)$, then the only irreducible $\G_f$-invariant subspaces of $\F^n$ are $A^{-1}\CAL{U}_0, \ldots, A^{-1}\CAL{U}_{d-1}$. }
\begin{proof}
This follows by observing that $\CAL{U}$ is an irreducible $\GIMM$-invariant subspace if and only if  $A^{-1}\CAL{U}$ is an irreducible $\G_f$-invariant subspace 
Since $\CAL{U}_0, \ldots , \CAL{U}_{d-1}$ are the only irreducible $\GIMM$-invariant subspaces, $A^{-1}\CAL{U}_0,$ $\ldots, A^{-1}\CAL{U}_{d-1}$ are the only irreducible $\G_f$-invariant subspaces.
\end{proof}
\emph{Lemma \ref{fact: symmetries of IMM} (restated): Let $\IMM_{w,d} = \tr(Q'_0\cdots Q'_{d-1})$, where $Q'_0\cdots Q'_{d-1} $ is a full-rank $(w,d,n)$-matrix product in $\vecx$ variables over $\F$. Then there are $C_0, \ldots, C_{d-1} \in \GL(w,\F)$ and $\ell\in [0,d-1]$ such that either $Q'_k = C_k\cdot Q_{\ell+k} \cdot C_{k+1}^{-1}$ for $k \in [0,d-1]$ ~or~ $Q'_k = C_k\cdot Q_{\ell -k}^T \cdot C_{k+1}^{-1}$ for $k \in [0,d-1]$.}
\begin{proof}
The proof of Lemma \ref{fact: symmetries of IMM} uses the following observation, which is on the evaluation dimension (Definition \ref{definition: evaldim}) of a polynomial expressed as the trace of a full-rank set-multilinear matrix product. The proof of Observation \ref{observation: evaldim of trace of full-rank set-mult matrix product} is similar to the proof of Observation \ref{obs:sigma_compute}.
\begin{observation}\label{observation: evaldim of trace of full-rank set-mult matrix product}
Let $f =\text{tr}(X_0\ldots X_{d-1})$, where $X_0\ldots X_{d-1}$ is a full-rank $(w,d,n)$ set-multilinear matrix product in $\vecx_0, \ldots, \vecx_{d-1}$ variables. Then a) for $k\in [0,d-1]$ and $k'\in \{k-1,k+1\}$  $\textnormal{Evaldim}_{\vecx_{k} \uplus \vecx_{k'}}(f) = w^2$, and b) for $k\in [0,d-1]$ and $k'\in [0,d-1]\setminus \{k-1, k, k+1\}$  $\textnormal{Evaldim}_{\vecx_{k} \uplus \vecx_{k'}}(f) = w^4$.
\end{observation}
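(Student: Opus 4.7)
The plan is to exploit the cyclic invariance of the trace to rewrite $f$ in a form where the two variable sets $\vecx_k$ and $\vecx_{k'}$ are isolated, and then read off the evaluation span by expanding the resulting product. The whole argument rests on one recurrent lemma, which I would prove first.

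\emph{Auxiliary lemma.} If $Z_0, Z_1, \ldots, Z_{r-1}$ are $w \times w$ linear matrices whose entries lie in pairwise disjoint variable sets $\vecy_0, \ldots, \vecy_{r-1}$ and each $Z_s$ has $w^2$ $\F$-linearly independent entries, then the $w^2$ entries of the product $Z_0 Z_1 \cdots Z_{r-1}$ are $\F$-linearly independent polynomials in $\vecy_0 \cup \cdots \cup \vecy_{r-1}$. This is proved by applying independent invertible linear changes of variables in each $\vecy_s$ to replace the entries of $Z_s$ by distinct formal variables; then the $(p,q)$-entry of $Z_0 \cdots Z_{r-1}$ is the sum over $l_1,\ldots,l_{r-1}$ of the monomials $(Z_0)_{p, l_1}(Z_1)_{l_1, l_2} \cdots (Z_{r-1})_{l_{r-1}, q}$, and distinct $(p,q)$ produce disjoint monomial supports.

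For \textbf{part (a)} with $k' = k+1$ (the case $k' = k-1$ is symmetric), trace cyclicity gives
\begin{equation*}
f = \tr\bigl(X_k X_{k+1} \cdot Y\bigr), \qquad Y := X_{k+2} X_{k+3} \cdots X_{k-1},
\end{equation*}
and expanding yields $f = \sum_{i,l \in [w]} (X_k X_{k+1})_{i,l} \cdot Y_{l,i}$. The auxiliary lemma applied to $Y$ shows that as $\alpha$ ranges over assignments to $\vecx \setminus (\vecx_k \cup \vecx_{k+1})$, the matrix $Y(\alpha)$ attains every value in $\CAL{M}_w(\F)$. The same lemma applied to the two-factor product $X_k X_{k+1}$ shows its $w^2$ entries are $\F$-linearly independent bilinear forms in $\vecx_k \cup \vecx_{k+1}$; consequently the map $M \mapsto \sum_{i,l} (X_k X_{k+1})_{i,l} M_{l,i}$ from $\CAL{M}_w$ to $\F[\vecx_k \cup \vecx_{k+1}]$ is linear and injective, so the space of restrictions of $f$ has dimension exactly $w^2$.

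For \textbf{part (b)} with $k' \notin \{k-1, k, k+1\}$, rotate the trace so that $X_k$ is the leading factor and write
\begin{equation*}
f = \tr(X_k \cdot A \cdot X_{k'} \cdot B), \qquad A := X_{k+1} \cdots X_{k'-1},\ B := X_{k'+1} \cdots X_{k-1},
\end{equation*}
where both $A$ and $B$ are nonempty by the non-adjacency hypothesis. Expanding gives
\begin{equation*}
f = \sum_{i,j,l,m \in [w]} A_{j,l}(\alpha)\, B_{m,i}(\alpha)\cdot (X_k)_{i,j}\,(X_{k'})_{l,m}.
\end{equation*}
The $w^4$ bilinear forms $(X_k)_{i,j}(X_{k'})_{l,m}$ are $\F$-linearly independent because $\{(X_k)_{i,j}\}$ and $\{(X_{k'})_{l,m}\}$ are each independent over disjoint variable sets, so the upper bound is immediate. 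For the matching lower bound, the auxiliary lemma applied separately to $A$ and to $B$ (which involve disjoint variables) shows that the $w^4$ polynomials $\{A_{j,l}\cdot B_{m,i}\}_{i,j,l,m}$ are $\F$-linearly independent; hence there exist assignments $\alpha_1, \ldots, \alpha_{w^4}$ at which the coefficient matrix $\bigl(A_{j,l}(\alpha_s)\,B_{m,i}(\alpha_s)\bigr)$ is invertible, exhibiting $w^4$ linearly independent restrictions of $f$.

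The main obstacle is a clean proof of the auxiliary lemma together with careful handling of the corner cases: when $d=3$ every pair is adjacent and part (b) is vacuous; when $d=4$ the subproducts $A$ and $B$ each consist of a single factor; and when the inner product $Y$ in part (a) is a single matrix (i.e., $d=3$). In all these corner situations the lemma reduces immediately to the hypothesis that each $X_s$ has $w^2$ linearly independent entries, so no additional argument is required.
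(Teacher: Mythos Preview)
Your plan is essentially the paper's own argument: use cyclic invariance of the trace to bring the two distinguished factors together, expand, and invoke linear independence of the entries of the relevant subproducts. Your auxiliary lemma and its proof are correct, and the case analysis (including the small-$d$ corner cases) is handled exactly as the paper would.

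There is, however, one slip you should fix. By definition, $\Evaldim_{\vecx_k \uplus \vecx_{k'}}(f)$ is the dimension of the span of the polynomials obtained by substituting values for $\vecx_k \cup \vecx_{k'}$, i.e., the resulting polynomials live in $\F[\vecx \setminus (\vecx_k \cup \vecx_{k'})]$. Throughout your argument you do the opposite: in part (a) you substitute for the variables of $Y$ and study the image in $\F[\vecx_k \cup \vecx_{k+1}]$, and in part (b) you substitute for the variables of $A$ and $B$. So what you actually compute is $\Evaldim_{\vecx \setminus (\vecx_k \cup \vecx_{k'})}(f)$. For a set-multilinear polynomial the two quantities coincide (both equal the rank of the coefficient matrix with rows indexed by monomials in $\vecx_k \cup \vecx_{k'}$ and columns indexed by monomials in the complement), but you should state this duality explicitly. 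Alternatively, and this is what the paper does, swap the roles: substitute for $\vecx_k \cup \vecx_{k'}$, use your auxiliary lemma on $X_k X_{k+1}$ to show its entries can be set to any prescribed values, and use the lemma on $Y$ (respectively on $A$ and $B$) to show the resulting restrictions are linearly independent. The ingredients are identical; only the direction of substitution changes.

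A minor phrasing issue: from linear independence of the $w^2$ entries of $Y$ you conclude that ``$Y(\alpha)$ attains every value in $\CAL{M}_w(\F)$''. Linear independence only gives that the \emph{span} of $\{Y(\alpha):\alpha\}$ is all of $\CAL{M}_w(\F)$, not that every matrix is literally attained. Fortunately the span is all you need, since the map $M \mapsto \sum_{i,l}(X_kX_{k+1})_{i,l}M_{l,i}$ is linear.
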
 
Let $A\in \GL(n,\F)$ be such that the row of $A$ indexed by the $x_{i,j}^{(k)}$ variable determine the coefficients of the linear form in the $(i,j)$-th entry of $Q'_k$ for $i,j\in [w]$ and $k\in [0,d-1]$. Then $\IMM_{w,d} = \IMM_{w,d}(A\vecx)$. Observation \ref{obs: existence of sigma for A} proves that $A$ is a block-diagonal matrix up to a rotation. 
\begin{observation}\label{obs: existence of sigma for A}
There is a permutation $\sigma$ of $[0,d-1]$ such that the non-zero entries of the rows of $A$ indexed by the $\vecx_{k}$ variables are confined to the columns of $A$ indexed by $\vecx_{\sigma(k)}$ variables. Further, there is an $\ell \in [0,d-1]$ such that either $\sigma(k) = \ell+k$ for $k\in [0,d-1]$ or $\sigma(k) = \ell-k$ for $k\in [0,d-1]$.
\end{observation}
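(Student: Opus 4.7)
\textbf{Proof plan for Lemma \ref{fact: symmetries of IMM}.} The plan is to first translate the hypothesis into a symmetry $A\in\GL(n,\F)$ of $\IMM_{w,d}$, use the Lie-algebraic facts from Section \ref{section: lie algebra of imm} to force $A$ to be block-permuted, then use the evaluation-dimension data in Observation \ref{observation: evaldim of trace of full-rank set-mult matrix product} to pin the underlying permutation to a cyclic shift or its reversal, and finally use uniqueness of full-rank matrix-product factorizations (up to gauge) to obtain the exact form of each $Q'_k$.

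\textbf{Step 1 (Reduction to a symmetry).} Define $A\in\GL(n,\F)$ by letting the row of $A$ indexed by $x^{(k)}_{i,j}$ record the coefficients of the linear form occupying the $(i,j)$-entry of $Q'_k$. Then $Q'_k(\vecx)=Q_k(A\vecx)$ for every $k$, and so $\tr(Q'_0\cdots Q'_{d-1})=\IMM_{w,d}(A\vecx)=\IMM_{w,d}(\vecx)$; the full-rank hypothesis ensures $A$ is invertible. By Corollary \ref{corollary:irr_inv_sub_f}, the only irreducible $\G_{\IMM_{w,d}}$-invariant subspaces are $A^{-1}\CAL{U}_0,\ldots,A^{-1}\CAL{U}_{d-1}$, and by Lemma \ref{lemma:irreducible_space_limm} they coincide (as a set) with $\CAL{U}_0,\ldots,\CAL{U}_{d-1}$. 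Hence there exists a permutation $\sigma$ of $[0,d-1]$ with $A\cdot\CAL{U}_k=\CAL{U}_{\sigma(k)}$; equivalently, the rows of $A$ indexed by $\vecx_k$ are supported in the columns indexed by $\vecx_{\sigma(k)}$, which gives the block structure of Observation \ref{obs: existence of sigma for A}.

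\textbf{Step 2 (Arithmetic form of $\sigma$).} Because $\IMM_{w,d}$ is invariant under $A$, evaluation dimension is preserved under $\sigma$: for every $k$, $\Evaldim_{\vecx_{\sigma(k)}\cup\vecx_{\sigma(k+1)}}(\IMM_{w,d})=\Evaldim_{\vecx_k\cup\vecx_{k+1}}(\IMM_{w,d})=w^2$. By Observation \ref{observation: evaldim of trace of full-rank set-mult matrix product}, this forces $\sigma(k+1)\in\{\sigma(k)-1,\sigma(k)+1\}$ in $\Z/d\Z$ for every $k$. A permutation of $\Z/d\Z$ with this nearest-neighbour property is necessarily either a cyclic shift $\sigma(k)=\ell+k$ or its reversal $\sigma(k)=\ell-k$ for some $\ell\in[0,d-1]$; this is the second half of Observation \ref{obs: existence of sigma for A}.

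\textbf{Step 3 (Multiplicative form of each block).} The block structure of $A$ means that each $Q'_k$ is a $w\times w$ linear matrix in $\vecx_{\sigma(k)}$ variables, and so there is an $\F$-linear automorphism $\phi_k$ of $\CAL{M}_w(\F)$ with $Q'_k=\phi_k(Q_{\sigma(k)})$ in the cyclic case and $Q'_k=\phi_k(Q_{\sigma(k)}^T)$ in the reversal case (the transpose being the only extra degree of freedom for converting a $\vecx_{\sigma(k)}$-linear matrix into $Q_{\sigma(k)}$ or its transpose). Consider first the cyclic case $\sigma(k)=\ell+k$. Cyclic invariance of the trace gives
\[
\tr\!\bigl(\phi_0(Q_\ell)\,\phi_1(Q_{\ell+1})\cdots\phi_{d-1}(Q_{\ell+d-1})\bigr)=\tr(Q_0\cdots Q_{d-1})=\tr(Q_\ell Q_{\ell+1}\cdots Q_{\ell+d-1}),
\]
an identity of polynomials in the independent variable sets $\vecx_\ell,\vecx_{\ell+1},\ldots,\vecx_{\ell+d-1}$. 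The two $(w,d,n)$ matrix products $\phi_0(Q_\ell)\cdots\phi_{d-1}(Q_{\ell+d-1})$ and $Q_\ell\cdots Q_{\ell+d-1}$ are both full-rank set-multilinear, and their traces agree, so by the standard gauge-uniqueness for full-rank set-multilinear matrix products computing $\IMM_{w,d}$ (the trace of a full-rank matrix product is, up to $\GL(w)$-conjugation on each intermediate layer, determined by the product) there exist $C_0,\ldots,C_{d-1}\in\GL(w,\F)$ with $\phi_k(X)=C_k\,X\,C_{k+1}^{-1}$ for all $k$ (indices in $\Z/d\Z$). Hence $Q'_k=C_k\,Q_{\ell+k}\,C_{k+1}^{-1}$. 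The reversal case $\sigma(k)=\ell-k$ is handled identically after applying the identity $\tr(Y_0\cdots Y_{d-1})=\tr(Y_{d-1}^T\cdots Y_0^T)$ to reduce it to the cyclic case, yielding $Q'_k=C_k\,Q_{\ell-k}^T\,C_{k+1}^{-1}$.

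\textbf{Main obstacle.} The only nontrivial step is the gauge-uniqueness invoked at the end of Step 3: extracting the conclusion $\phi_k(X)=C_k X C_{k+1}^{-1}$ from the trace identity. This amounts to the (known) fact that a full-rank set-multilinear matrix product is determined, up to a single gauge transformation $(T_0,\ldots,T_{d-2})$ per layer, by the polynomial it computes; the cyclic (``trace'') closure of the product adds the additional constraint that tightens the two free endpoint gauges $T_{-1},T_{d-1}$ into a single cyclic chain $(C_0,\ldots,C_{d-1})$ with $C_d=C_0$. The right way to carry this out formally is to slice the trace as $\sum_{i}(Q_\ell\cdots Q_{\ell+d-1})_{i,i}$, compare with the corresponding slice of $\phi_0(Q_\ell)\cdots\phi_{d-1}(Q_{\ell+d-1})$, and invoke the factorization uniqueness of Observation \ref{observation: min width ABP} (which is stated for set-multilinear ABPs of min-width) layer by layer.
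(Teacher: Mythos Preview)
For Observation \ref{obs: existence of sigma for A} itself, your Steps 1 and 2 are essentially the paper's proof: the irreducible invariant subspaces (Lemma \ref{lemma:irreducible_space_limm}, Corollary \ref{corollary:irr_inv_sub_f}) force $A$ to be block-permuted, and the evaluation-dimension dichotomy (Observation \ref{observation: evaldim of trace of full-rank set-mult matrix product}) then forces the permutation to be a cyclic shift or a reflected shift. One small slip: $A\,\CAL{U}_k=\CAL{U}_{\sigma(k)}$ says the \emph{columns} of $A$ indexed by $\vecx_k$ are supported in rows $\vecx_{\sigma(k)}$, not the row statement you wrote; the paper sets things up via $A^{-1}\CAL{U}_k=\CAL{U}_{\sigma(k)}$, which lands directly on the row statement. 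This is only a naming issue (swap $\sigma\leftrightarrow\sigma^{-1}$) and does not affect the argument.

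Your Step 3, which goes beyond the observation to finish Lemma \ref{fact: symmetries of IMM}, diverges from the paper and has a genuine gap. The paper does not use ABP gauge-uniqueness here; instead it uses the Lie-algebra conjugacy $A^{-1}\CAL{B}_kA=\CAL{B}_k$ (from Lemma \ref{lemma:block-diagonal matrix with two non-zero blocks in GIMM is in B_k} and Fact \ref{fact: lie algebra conjugacy}) to show that each diagonal block $A_k$ factors as $(I_w\otimes P_k)(S_k\otimes I_w)$ (Observation \ref{obs: the structure of the subblocks of A}), which gives $Q'_k=C_kQ_{\sigma(k)}D_k$ directly. Your gauge-uniqueness route does not close as written: Observation \ref{observation: min width ABP} concerns min-width ABPs, but the natural ABP computing $\IMM_{w,d}$ has width $w^2$ (Fact \ref{fact: ABP computing IMM}), so the gauges you recover lie in $\GL(w^2,\F)$, not $\GL(w,\F)$; and slicing the trace into its $w$ diagonal entries yields separate width-$w$ ABPs whose individual $\GL(w,\F)$ gauges need not be consistent with one another. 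Bridging that gap requires an additional structural argument---essentially the paper's Lie-algebra step, or an argument of the flavour of Claim \ref{claim: Uniqueness of commuting matrices}---so Step 3 as sketched is circular.
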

\begin{proof}
By Lemma \ref{corollary:irr_inv_sub_f}, the irreducible invariant subspaces of the Lie algebra of $\IMM_{w,d}(A\vecx)$ are $A^{-1}\CAL{U}_0, \ldots, A^{-1}\CAL{U}_{d-1}$. But the irreducible $\GIMM$-invariant subspaces are $\CAL{U}_0, \ldots, \CAL{U}_{d-1}$ (Lemma \ref{lemma:irreducible_space_limm}). Hence, there is a permutation $\sigma$ of $[0,d-1]$ such that $A^{-1}\CAL{U}_{k} = \CAL{U}_{\sigma(k)}$ for $k\in [0,d-1]$. Since $\CAL{U}_k$ is the subspace spanned by the vectors whose non-zero entries are indexed by $\vecx_k$ variables, the non-zero entries of the columns of $A^{-1}$ indexed by the $\vecx_{k}$ variables are confined to the rows of $A^{-1}$ indexed by $\vecx_{\sigma(k)}$ variables. Consequently, the non-zero entries of the rows of $A$ indexed by the $\vecx_{k}$ variables are confined to the columns of $A$ indexed by $\vecx_{\sigma(k)}$ variables. Hence, $Q'_0\ldots Q'_{d-1}$ is a full-rank $(w,d,n)$ set-multilinear matrix product in $\vecx_{\sigma(0)}, \ldots, \vecx_{\sigma(d-1)}$ variables.

For $k, k' \in [0,d-1]$, if $k' \in \{k-1, k+1\}$ then $\textnormal{Evaldim}_{\vecx_{k} \uplus \vecx_{k'}}(\IMM_{w,d}) = w^2$, and if $k'\in [0,d-1]\setminus \{k-1,k,k+1\}$ then $\textnormal{Evaldim}_{\vecx_{k} \uplus \vecx_{k'}}(f) = w^4$ (from Observation \ref{observation: evaldim of trace of full-rank set-mult matrix product}). Let $\sigma(0) = \ell$. Then again using Observation \ref{observation: evaldim of trace of full-rank set-mult matrix product} either $\sigma(k) = \ell+ k$ for $k\in [0,d-1]$, or $\sigma(k) = \ell-k$ for $k\in [0,d-1]$.
\end{proof}
Let $\sigma$ be as in Observation \ref{obs: existence of sigma for A}. We assume that there is an $\ell\in [0,d-1]$ such that $\sigma(k) = \ell+k$ for $k\in [0,d-1]$ and prove that there are matrices $C_0, \ldots, C_{d-1}, D_0, \ldots, D_{d-1} \in \GL(w,\F)$ and non-zero $\alpha_0, \ldots , \alpha_{d-1} \in \F $ such that $Q'_k = C_k\cdot Q_{\ell+k} \cdot D_k$ for $k \in [0,d-1]$, $D_{d-1}C_0 = \alpha_0I_w$, $D_{k}\cdot C_{k+1} = \alpha_{k+1}I_w$ for $k\in [0,d-2]$, and $\prod_{k\in [0,d-1]} \alpha_k =1$. Using a similar argument it can be shown that if $\sigma(k) = \ell-k$ for $k\in [0,d-1]$ then there are matrices $C_0, \ldots, C_{d-1}, D_0, \ldots, D_{d-1} \in \GL(w,\F)$ and non-zero $\alpha_0, \ldots , \alpha_{d-1} \in \F $ such that $Q'_k = C_k\cdot Q_{\ell-k}^T \cdot D_k$ for $k \in [0,d-1]$, $D_{d-1}C_0 = \alpha_0I_w$, $D_{k}\cdot C_{k+1} = \alpha_{k+1}I_w$ for $k\in [0,d-2]$, and $\prod_{k\in [0,d-1]} \alpha_k =1$. For ease of exposition, we also assume that $\ell=0$, and it can be easily verified that the arguments continue to hold for an arbitrary $\ell$. Notice that if $\ell=0$ then $A$ is a block-diagonal matrix. Denote the block of $A$ indexed by $\vecx_k$ variables as $A_k$. The proof of the lemma is now almost complete using Observation \ref{obs: the structure of the subblocks of A}.
\begin{observation}\label{obs: the structure of the subblocks of A}
For $k\in [0,d-1]$, there are matrices $P_k,S_k \in \GL(w,\F)$ such that $A_k = (I_w\otimes P_k)(S_k\otimes I_w)$.
\end{observation}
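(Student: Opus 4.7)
The plan is to exploit the equation $A\cdot \GIMM\cdot A^{-1} = \GIMM$ together with Lemma \ref{lemma:block-diagonal matrix with two non-zero blocks in GIMM is in B_k} to pin down, for each $k$, exactly how conjugation by the block $A_k$ acts on certain tensor-product subalgebras of $\CAL{M}_{w^2}$, and then invoke the Skolem--Noether theorem. First, since $\IMM_{w,d}(A\vecx) = \IMM_{w,d}$, Fact \ref{fact: lie algebra conjugacy} gives $A\cdot \GIMM\cdot A^{-1} = \GIMM$. Because $A$ is block-diagonal with blocks $A_0,\ldots,A_{d-1}$ and each $B \in \CAL{B}_k$ has its non-zero entries confined to the rows and columns indexed by $\vecx_k$ and $\vecx_{k+1}$, the conjugate $ABA^{-1}$ has the same support pattern. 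Lemma \ref{lemma:block-diagonal matrix with two non-zero blocks in GIMM is in B_k} then forces $A\cdot \CAL{B}_k\cdot A^{-1} \subseteq \CAL{B}_k$, and a dimension comparison upgrades this to equality.

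Reading off what $A\cdot \CAL{B}_k\cdot A^{-1} = \CAL{B}_k$ means for each diagonal block separately: for every even $k$ with $k\leq d-2$, conjugation by $A_k$ maps $I_w\otimes \CAL{M}_w$ bijectively to itself (using that $M \mapsto M^T$ is a bijection on $\CAL{M}_w$), and for every odd $k$ with $k\leq d-2$, conjugation by $A_k$ preserves $\CAL{M}_w\otimes I_w$. The wrap-around case $k=d-1$ is handled by the (slightly different) definitions of $\CAL{B}_{d-1}$ for even and odd $d$, but in each case some $A_j$ is shown to normalize one of $I_w\otimes \CAL{M}_w$ or $\CAL{M}_w\otimes I_w$. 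The upshot is that for every $k \in [0,d-1]$, $A_k$ lies in the normalizer of either $I_w\otimes \CAL{M}_w$ or $\CAL{M}_w\otimes I_w$ inside $\GL(w^2,\F)$.

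Now I would apply Skolem--Noether. Treat an even $k$ first: the map $I_w\otimes M \mapsto A_k\,(I_w\otimes M)\,A_k^{-1}$ is an $\F$-algebra automorphism of $I_w\otimes \CAL{M}_w \cong \CAL{M}_w$, hence inner, so there exists $P_k \in \GL(w,\F)$ with $A_k\,(I_w\otimes M)\,A_k^{-1} = I_w\otimes (P_k\,M\,P_k^{-1})$ for all $M \in \CAL{M}_w$. Equivalently, $(I_w\otimes P_k)^{-1}\cdot A_k$ commutes with every element of $I_w\otimes \CAL{M}_w$. The centralizer of $I_w\otimes \CAL{M}_w$ inside $\CAL{M}_{w^2}$ is $\CAL{M}_w\otimes I_w$, so $(I_w\otimes P_k)^{-1}\cdot A_k = S_k\otimes I_w$ for some $S_k \in \CAL{M}_w$. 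This yields $A_k = (I_w\otimes P_k)(S_k\otimes I_w)$, and invertibility of $A_k$ forces $S_k,P_k \in \GL(w,\F)$. The odd-$k$ case proceeds symmetrically using the centralizer of $\CAL{M}_w\otimes I_w$, which is $I_w\otimes \CAL{M}_w$, producing $A_k = (Q_k\otimes I_w)(I_w\otimes R_k) = (I_w\otimes R_k)(Q_k\otimes I_w)$; this has the required form with $P_k := R_k$ and $S_k := Q_k$.

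The only non-routine step is the invocation of Skolem--Noether plus the standard fact that the centralizer of $I_w\otimes \CAL{M}_w$ in $\CAL{M}_{w^2}$ is $\CAL{M}_w\otimes I_w$; both are textbook results, so the real work is simply organizing the conjugation identities above and handling the mild case split between even and odd $k$ (and between even and odd $d$ at the boundary $k=d-1$) uniformly.
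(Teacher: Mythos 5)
Your argument is correct, and its first half coincides with the paper's: both start from $\IMM_{w,d}=\IMM_{w,d}(A\vecx)$, use Fact \ref{fact: lie algebra conjugacy} together with Lemma \ref{lemma:block-diagonal matrix with two non-zero blocks in GIMM is in B_k} and a dimension count to conclude $A\cdot\CAL{B}_k\cdot A^{-1}=\CAL{B}_k$, and hence that each block $A_k$ normalizes $I_w\otimes\CAL{M}_w$ (even $k$) or $\CAL{M}_w\otimes I_w$ (odd $k$). Where you diverge is in extracting the factorization $A_k=(I_w\otimes P_k)(S_k\otimes I_w)$ from this normalization. The paper does it by hand: from the intertwining relation $(I_w\otimes M)A_k=A_k(I_w\otimes N)$ it decomposes $A_k$ into $w\times w$ sub-blocks $A_k(i,j)$ satisfying $M\cdot A_k(i,j)=A_k(i,j)\cdot N$, argues each sub-block is either invertible or zero, shows all nonzero sub-blocks are scalar multiples of a fixed invertible $P_k$, and assembles the scalars into $S_k$. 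You instead invoke the Skolem--Noether theorem to make the induced automorphism of $\CAL{M}_w$ inner, and then use the standard commutant fact (the centralizer of $I_w\otimes\CAL{M}_w$ in $\CAL{M}_{w^2}$ is $\CAL{M}_w\otimes I_w$, and symmetrically) to peel off $A_k$ as a product of the two tensor factors; the odd-$k$ case follows since the two factors commute. Both routes are sound; yours is shorter and more conceptual, and is consistent with the paper's own toolkit elsewhere (Skolem--Noether appears in Claim \ref{claim: left multiplication algebra of the given algebra} and the commutant fact as Observation \ref{claim: commutator space of tensor product}), whereas the paper's block computation is elementary and self-contained, essentially reproving the needed special cases of those facts (compare Claim \ref{claim: Uniqueness of commuting matrices}) without citing them. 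Your handling of the wrap-around block $k=d-1$ is stated briefly but is correct, since for every $k$ the first-block restriction of $\CAL{B}_k$ already supplies the required normalized subalgebra regardless of the parity of $d$.
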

\begin{proof}
Fix a $k\in [0,d-1]$ such that $k$ is even. We will show that there are matrices $P_k,S_k \in \GL(w,\F)$ such that $A_k = (I_w\otimes P_k)(S_k\otimes I_w)$, and a similar argument shows that there are matrices  $P_{k+1},S_{k+1} \in \GL(w,\F)$ such that $A_{k+1} = (I_w\otimes P_{k+1})(S_{k+1}\otimes I_w)$. Since $A$ is block-diagonal, $A^{-1}\CAL{B}_kA = \CAL{B}_k$ for all $k\in [0,d-1]$, from Lemma \ref{lemma:block-diagonal matrix with two non-zero blocks in GIMM is in B_k}, and Fact \ref{fact: lie algebra conjugacy}. Hence, for every $M\in \CAL{M}_w$ there is a unique $N\in \CAL{M}_w$ such that 
$$(I_w\otimes M) A_k = A_k ( I_w\otimes N)~.$$ 
Call the $w\times w$ sub-matrix of $A_k$ whose rows are indexed by $x_{i,1}^{(k)}, \ldots , x_{i,w}^{(k)},$ variables, and the columns are indexed by  $x_{j,1}^{(k)},, \ldots , x_{j,w}^{(k)},$ variables as $A_k(i,j)$. Note that for all $i,j\in [w]$ the following holds: $M\cdot A_k(i,j)= A_{k}(i,j) \cdot N$. Since this holds for any $M \in \CAL{M}_w$, either $A_k(i,j)$ is invertible or $A_k(i,j)$ is the zero matrix for $i,j \in [w]$. Choose an $i,j \in [w]$ such that $A_k(i,j)$ is invertible, and let $P_k = A_k(i,j)$. Since $A_k$ is invertible there exists such an $i,j\in [w]$. Let $u,v\in [w]$ be such that $A_k(u,v)$ is invertible. Then for any $M\in \CAL{M}_w$, 
$$P_k^{-1}\cdot M \cdot P_k = A_k(u,v)^{-1} \cdot M \cdot A_k(u,v)~.$$
Since the above holds for any $M\in \CAL{M}_w$, there is a non-zero $s_{u,v} \in \F$ such that $A_{k}(u,v) = s_{u,v}P_k$. Let $S_k = (s_{u,v})_{u,v\in [w]}$, where $s_{i,j}=1$, and for any $u,v\in [w]$ if $A_{k}(u,v)$ is zero then $s_{u,v}=0$. It is easily observed that $A_k = (I_w\otimes P_k)(S_k\otimes I_w)$, and as $A_k$ is invertible $S_k$ is invertible.
\end{proof}
From Observation \ref{obs: the structure of the subblocks of A}, for $k\in [0,d-1]$ the following is true: if $k$ is even then $Q'_k = S_k\cdot Q_k \cdot P_k^T$, and if $k$ is odd then $Q'_k = P_k\cdot Q_k \cdot S_k^T$. For ease of notation, if $k$ is even then rename $P_k^T$ as $D_k$ and $S_k$ as $C_k$, and  if $k$ is odd then rename $P_k$ as $C_k$ and $S_k^T$ as $D_k$. Hence for $k\in [0,d-1]$, $Q'_k = C_k\cdot Q_k\cdot D_k$. Now, observe the following
\begin{align*}
\text{tr}(Q_0\ldots Q_{d-1}) &= \text{tr}(C_0\cdot Q_0\cdot D_0 \ldots C_{d-1}\cdot Q_{d-1}\cdot D_{d-1} ) \\
&=  \text{tr}(D_{d-1}\cdot C_0\cdot Q_0\cdot D_0 \ldots C_{d-1}\cdot Q_{d-1})
\end{align*}
The last line line in the above equation follows from the fact that trace of a matrix product remains invariant under rotations. Since the entries of $Q_{d-1}$ are distinct variables disjoint from the variables in $Q_0, \ldots ,Q_{d-2}$
$$Q_0\ldots Q_{d-2} = D_{d-1}\cdot C_0\cdot Q_0\cdot D_0 \ldots Q_{d-2}\cdot D_{d-2}\cdot C_{d-1}~.$$
Substitute $Q_{k} = (D_{k}\cdot C_{k+1})^{-1}$ for $k\in [2,d-2]$, and $Q_1 = (D_0\cdot C_1)^{-1}(D_{1}\cdot C_{2})^{-1}$ in the above equation, and let $M = \prod_{k\in [0,d-2]} (D_{k}\cdot C_{k+1})^{-1}$. Then
$$Q_0 \cdot M =  D_{d-1}\cdot C_0\cdot Q_0~.$$
Since the entries of $Q_0$ are distinct variables, there is a non-zero $\alpha_0 \in \F$ such that $D_{d-1}\cdot C_0 = M =  \alpha_0 I_w$. Similarly, it can be shown that there is a non-zero $\alpha_{k+1} \in \F$ such that $D_k\cdot C_{k+1} = \alpha_{k+1} I_w$ for $k\in [0,d-2]$. Moreover, as 
$$\text{tr}(Q_0\ldots Q_{d-1}) = \text{tr}(C_0\cdot Q_0\cdot D_0 \ldots C_{d-1}\cdot Q_{d-1}\cdot D_{d-1})$$ 
it follows that $\prod_{k\in [0,d-1]} \alpha_k = 1$. Finally, observe the following
$$ Q'_{k} = \left( (\prod_{\ell \in [k+1,d-1]} \alpha_{\ell}) C_k\right) \cdot Q_k \cdot  \left( (\prod_{\ell \in [k+1,d-1]} \alpha^{-1}_{\ell}) D_k\right) ~~~ \text{for} ~~ k\in [0,d-2] .$$
Reusing symbols for ease of notation, rename $C_k$ as $(\prod_{\ell \in [k+1,d-1]} \alpha_{\ell}) C_k$, and $D_k$ as $(\prod_{\ell \in [k+1,d-1]} \alpha^{-1}_{\ell}) D_k$, and notice that $D_k = C_{k+1}^{-1}$ for $k\in [0,d-2]$, and $D_{d-1} = C_0^{-1}$.
\end{proof}

\section{Reduction from $\EIMM$ to $\IMMTI$}{\label{secappendix: reduction to multilinear equivalence testing}}
\begin{algorithm}
\caption{Reduction from $\EIMM$ to $\IMMTI$} \label{algorithm: reduction from equivalence to multilinear equivalence for trimm}
\begin{algorithmic}
\State INPUT: Blackbox access to an $n$-variate degree-$d$ polynomial $f$.
\State OUTPUT: $A' \in \GLNF$ and $w\in \N$ such that $h(\vecx) = f(A'\vecx)$ is a $d$-tensor in the variable sets $\vecx_0, \ldots, \vecx_{d-1}$ which is isomorphic to $\IMM_{w,d}$.
\end{algorithmic}
\begin{algorithmic}[1]
\Statex \begin{center}\textcolor{gray}{Compute the irreducible $\G_f$-invariant subspaces } \end{center}
\State Call Algorithm \ref{alg:irr_inv_sub_space_gf} on input $f$. Let $\{\CAL{V}_0, \ldots, \CAL{V}_{d-1}\}$ be the spaces returned by Algorithm \ref{alg:irr_inv_sub_space_gf}. If Algorithm \ref{alg:irr_inv_sub_space_gf} returns `No' then output `No'. 

~
\Statex \begin{center}\textcolor{gray}{ Reduction to $\IMMTI$ }\end{center}
\State Call Algorithm~\ref{alg:layer_spaces} on input $\{\CAL{V}_0, \ldots, \CAL{V}_{d-1}\}$, and let $A' \in \GL(n,\F)$ and $w\in \N$ be the output of Algorithm \ref{alg:layer_spaces}. If Algorithm \ref{alg:layer_spaces} returns `No' then output `No'. Otherwise, return $A'$ and $w$. 
%
\end{algorithmic}
\end{algorithm}
Algorithm \ref{algorithm: reduction from equivalence to multilinear equivalence for trimm} is analysed by assuming that there is an $A\in \GL(n,\F)$ satisfying $f=\IMM_{w,d}(A\vecx)$. The final PIT at the end of Algorithm \ref{algorithm: equivalence testing for trimm} handles the case when $f$ is not equivalent to $\IMM_{w,d}$. 
In Step $1$, Algorithm \ref{alg:irr_inv_sub_space_gf} computes a set of bases of the irreducible $\G_f$-invariant subspaces. Algorithm \ref{alg:layer_spaces} in Step $2$ uses the bases to compute an $A' \in \GL(n,\F)$ and the $w\in \N$ such that $h(\vecx) = f(A'\vecx)$ is a $d$-tensor in the variable sets $\vecx_0, \ldots, \vecx_{d-1}$ which is isomorphic to $\IMM_{w,d}$. 
%
%
%
\subsection{Computing the irreducible $\Gf$-invariant subspaces} 
Algorithm \ref{alg:irr_inv_sub_space_gf} is similar to Algorithm 3 in \cite{KayalNST19} which computes the irreducible invariant subspaces of the Lie algebra of a polynomial equivalent to $\textnormal{IMM}_{w,d}$. 
\begin{algorithm}
\caption{Computing the irreducible $\mathfrak{g}_f$-invariant subspaces}\label{alg:irr_inv_sub_space_gf}
INPUT: Blackbox access to an $n$-variate degree-$d$ polynomial $f$.\\
OUTPUT: A set of bases of the irreducible $\mathfrak{g}_f$-invariant subspaces.
\begin{algorithmic}[1] 
\State Compute a basis $\{F_1, \ldots, F_a\}$ of $\mathfrak{g}_f$ using Lemma 2.2  in \cite{KayalNST19}.
\State Pick a random element $R' = \sum_{i=1}^{a}r_i F_i \in \mathfrak{g}_f$, where $r_i \in_r S$ is chosen independently and uniformly at random from $S \subseteq \F$ for every $i\in [n-1]$, and $|S| = 2n^3$ .
\State Compute the characteristic polynomial $q(x)$ of $R'$.
\State If $q(x)$ is not square-free then output `No'. Otherwise compute the irreducible factors of $q(x)$ over $\mathbb{F}$. Call the irreducible factors $p_1(x), \ldots, p_s(x)$. 

~
\State Compute bases of the null spaces $\CAL{N}_1', \ldots, \CAL{N}_s'$ of $p_1(R'), \ldots, p_s(R')$ respectively.
\State For every $i \in [s]$, pick a non-zero vector $\mathbf{v} \in \CAL{N}_i'$ and compute a basis of the closure of $\mathbf{v}$ under the action of $\mathfrak{g}_f$ using Algorithm 4 in \cite{KayalNST19}.
\State Let $\CAL{V}_1, \ldots, \CAL{V}_s$ be the list of the closure spaces (here, we are identifying spaces with their bases). Remove duplicates from the list by comparing every pair of spaces and get the pruned list $\CAL{V}_0, \ldots, \CAL{V}_{d-1}$. If the number of distinct closure spaces is not equal to $d$, or the dimension of all the closure spaces are not the same then output `No'. 
Else, output the list $\{\CAL{V}_0, \ldots, \CAL{V}_{d-1}\}$.
\end{algorithmic}
\end{algorithm}
\vspace{0.1in}

\textbf{Steps 1--4}: A basis of $\G_f$ is computed using Lemma 2.2  in \cite{KayalNST19} (also see \cite{Kayal12}). At Step 2, let $R \in \GIMM$ such that $R = A\cdot R' \cdot A^{-1}$. Since the matrices in $\GIMM$ are block-diagonal (Claim~\ref{claim: gtrimm is block diagonal}), $R$ is a block-diagonal matrix with individual blocks $R_0, \ldots, R_{d-1}$ as shown in Figure \ref{fig:R}. The characteristic polynomial $q(x)$ computed at Step 3 is square-free with high probability (Lemma~\ref{lemma:char_poly_sq_free}). Note that $q(x) = \prod_{k \in [0,d-1]} q_k(x)$, where $q_k(x)$ is the characteristic polynomial of $R_k$. At Step $4$, the algorithm invokes a univariate polynomial factorization algorithm over $\F$. Observe that every irreducible factor $p_i(x)$ of $q(x)$ is a factor of $q_k(x)$ for some $k \in [0,d-1]$.  
\begin{figure}[h]
\centering
\begin{tikzpicture}
\coordinate (a) at (0,0);
\coordinate (b) at ($(a) + (4.5,0)$);
\coordinate (c) at ($(a) + (4.5,-4.5)$);
\coordinate (d) at ($(a) + (0,-4.5)$);

\node at ($ (a) + (0.45,-0.45)$) [fill=white!100!] {\scriptsize $R_0$};
\node at ($ (a) + (1.35,-1.35)$) [fill=white!100!] {\scriptsize $R_1$};

\node at ($ (c) - (1.35,-1.35)$) [fill=white!100!] {\scriptsize $R_{d-2}$};
\node at ($ (c) - (0.45,-0.45)$) [fill=white!100!] {\scriptsize $R_{d-1}$};

\node at ($ (a) + (-0.17,-0.45)$) [fill=white!100!] {\scriptsize $\vecx_0$};
\node at ($ (a) + (-0.17,-1.35)$) [fill=white!100!] {\scriptsize $\vecx_1$};

\node at ($ (d) + (-0.27,+1.35)$) [fill=white!100!] {\scriptsize $\vecx_{d-2}$};
\node at ($ (d) + (-0.27,+0.45)$) [fill=white!100!] {\scriptsize $\vecx_{d-1}$};

\node at ($ (a) + (0.45,0.2)$) [fill=white!100!] {\scriptsize $\vecx_0$};
\node at ($ (a) + (1.35,0.2)$) [fill=white!100!] {\scriptsize $\vecx_1$};

\node at ($ (b) + (-1.35,0.2)$) [fill=white!100!] {\scriptsize $\vecx_{d-2}$};
\node at ($ (b) + (-0.45,0.2)$) [fill=white!100!] {\scriptsize $\vecx_{d-1}$};

\node at ($ (a) + (3.2,-0.7)$) [fill=white!100!] {\scriptsize all entries outside};
\node at ($ (a) + (3.2,-1.1)$) [fill=white!100!] {\scriptsize the bordered region};
\node at ($ (a) + (3.2,-1.5)$) [fill=white!100!] {\scriptsize are zero};

\draw (a) -- (b) -- (c) -- (d) -- (a);
\draw [dashed] ($(a) + (0.9,0)$) -- ($(a) + (0.9,-1.8)$);
\draw [dashed] ($(a) + (0,-0.9)$) -- ($(a) + (1.8,-0.9)$) -- ($(a) + (1.8,-1.8)$) -- ($(a) + (0.9,-1.8)$);
\draw [dashed] ($(a) + (1.9,-1.9)$) -- ($(c) - (1.9,-1.9)$);
\draw [dashed] ($(c) - (0.9,0)$) -- ($(c) - (0.9,-1.8)$);
\draw [dashed] ($(c) - (0,-0.9)$) -- ($(c) - (1.8,-0.9)$) -- ($(c) - (1.8,-1.8)$) -- ($(c) - (0.9,-1.8)$);

\end{tikzpicture} 
\caption{A random matrix $R \in \GIMM$}
\label{fig:R}
\end{figure}
\vspace{0.1in}

\textbf{Step 5--7}: Let $\CAL{N}_i$ and $\CAL{N}_i'$ be the null spaces of $p_i(R)$ and $p_i(R')$ respectively. Then $\CAL{N}_i = A\CAL{N}_i'$. 
\begin{lemma}{\label{lemma:null_space_closure_gives_irred_spaces}}
Let $p_i(x)$ be an irreducible factor of $q_k(x)$, and $\mathbf{v} \in \CAL{N}_i'$ be a non-zero vector. Then, the closure of $\mathbf{v}$ under the action of $\mathfrak{g}_f$ is the irreducible $\G_f$-invariant subspace $A^{-1}\CAL{U}_k$. Thus, at the end of Step 7 there is a permutation $\sigma$ of $[0,d-1]$ such that $\CAL{V}_k = A^{-1}\CAL{U}_{\sigma(k)}$ for all $k\in [0,d-1]$. 
\end{lemma}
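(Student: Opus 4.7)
\textbf{Proof proposal for Lemma \ref{lemma:null_space_closure_gives_irred_spaces}.} The plan is to translate the whole argument to the ``standard'' $\GIMM$ coordinate system via the conjugacy $\G_f = A^{-1}\GIMM A$ (Fact \ref{fact: lie algebra conjugacy}), exploit the block-diagonal structure of $R \in \GIMM$ together with the square-freeness of $q(x)$ to localize the null space $\CAL{N}_i$ inside a single coordinate subspace $\CAL{U}_k$, and then invoke the strong form of irreducibility of $\CAL{U}_k$ that is actually established in the proof of Lemma~\ref{lemma:irreducible_space_limm}.

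First I would observe that since $R = AR'A^{-1}$, we have $p_i(R) = A\, p_i(R')\, A^{-1}$ and therefore $\CAL{N}_i = A\,\CAL{N}_i'$. Now, because $R$ is block-diagonal with blocks $R_0,\ldots,R_{d-1}$ (indexed by $\vecx_0,\ldots,\vecx_{d-1}$), so is $p_i(R)$ with blocks $p_i(R_0),\ldots,p_i(R_{d-1})$; hence $\CAL{N}_i = \bigoplus_{j\in[0,d-1]} \ker p_i(R_j)$ inside $\bigoplus_{j} \CAL{U}_j = \F^n$. The characteristic polynomial of $R_j$ is $q_j(x)$, so Cayley--Hamilton gives $q_j(R_j)=0$. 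Since $q(x) = \prod_j q_j(x)$ is square-free and $p_i \mid q_k$, we have $\gcd(p_i, q_j)=1$ for every $j \neq k$, so $p_i(R_j)$ is invertible for $j \neq k$ and $\ker p_i(R_j) = 0$ there. Consequently $\CAL{N}_i \subseteq \CAL{U}_k$, and hence $A\vecv \in \CAL{U}_k\setminus\{0\}$ for every non-zero $\vecv \in \CAL{N}_i'$.

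Next, let $\CAL{C}$ be the closure of $A\vecv$ under $\GIMM$. Then $\CAL{C}$ is a non-zero $\GIMM$-invariant subspace contained in $\CAL{U}_k$. I would now use the key observation that the proof of Lemma~\ref{lemma:irreducible_space_limm} establishes the stronger fact that any non-zero $\GIMM$-invariant subspace of $\CAL{U}_k$ equals $\CAL{U}_k$ itself (this is exactly what is shown by the repeated application of Observation~\ref{obs:col_row_var_in_u} starting from any coordinate vector $e_y$ with $y \in \vecx_k$). Therefore $\CAL{C} = \CAL{U}_k$. Since $\G_f = A^{-1}\GIMM A$, the map $\CAL{W} \mapsto A^{-1}\CAL{W}$ is a bijection between $\GIMM$-invariant and $\G_f$-invariant subspaces that preserves closures, so the closure of $\vecv = A^{-1}(A\vecv)$ under $\G_f$ equals $A^{-1}\CAL{C} = A^{-1}\CAL{U}_k$, which is irreducible by Corollary~\ref{corollary:irr_inv_sub_f}.

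For the permutation claim, I would argue that since $q$ is square-free each irreducible factor $p_i$ divides exactly one $q_k$, so the map $i \mapsto k$ is well defined and by the previous paragraph the closure $\CAL{V}_i'$ computed in Step~6 is $A^{-1}\CAL{U}_k$. On the other hand, $q_k$ has degree $w^2 \ge 1$, so it has at least one irreducible factor; hence every $A^{-1}\CAL{U}_k$ occurs among the $\CAL{V}_i'$. After pruning duplicates we are left with exactly the set $\{A^{-1}\CAL{U}_0,\ldots,A^{-1}\CAL{U}_{d-1}\}$, so there are exactly $d$ distinct spaces, each of dimension $w^2$, and the resulting indexing $k \mapsto \CAL{V}_k = A^{-1}\CAL{U}_{\sigma(k)}$ is a permutation of $[0,d-1]$. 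The only genuinely non-routine step is the strengthening of ``irreducible'' (indecomposable) to ``simple'' for $\CAL{U}_k$ as a $\GIMM$-module, which is why I want to be explicit that this follows from the argument inside the proof of Lemma~\ref{lemma:irreducible_space_limm} rather than from its statement alone.
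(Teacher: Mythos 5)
Your proposal is correct and follows essentially the same route as the paper: localize $\CAL{N}_i$ inside $\CAL{U}_k$ using the block-diagonal structure of $R$, Cayley--Hamilton, and coprimality coming from the square-freeness of $q$ (you apply Bezout to $p_i$ and $q_j$ directly, the paper to $q_k$ and $q_\ell$ — an immaterial variation), and then invoke the minimality of $\CAL{U}_k$ (no proper non-zero invariant subspace) established inside the proof of Lemma~\ref{lemma:irreducible_space_limm}, exactly as the paper does. Your explicit remark that one needs this "simple" rather than merely "indecomposable" property, and your explicit treatment of the permutation $\sigma$, are points the paper only states implicitly, but the underlying argument is the same.
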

\begin{proof}
Consider the following claim. 
\begin{claim}{\label{claim:Ni-subset_uk}}
 $\CAL{N}_i' \subseteq A^{-1}\CAL{U}_k$.
\end{claim}
The proofs of Lemma \ref{lemma:irreducible_space_limm} and Corollary \ref{corollary:irr_inv_sub_f} in fact show that no $\G_f$-invariant subspace is properly contained in $A^{-1}\CAL{U}_k$. Observe that the closure of a vector under the action of $\G_f$ is a $\G_f$-invariant subspace by definition. Hence, by the above claim, the closure of $\mathbf{v}$ under the action of $\mathfrak{g}_f$ is $A^{-1}\CAL{U}_k$. \\
%

\noindent \emph{Proof of Claim \ref{claim:Ni-subset_uk}.} 
It is sufficient to show that $\CAL{N}_i \subseteq \CAL{U}_k$. 
Let $\mathbf{u} \in \CAL{N}_i$. Let $\mathbf{u}_{\ell} \in \F^{w^2}$ be the vector obtained by restricting $\vecu$ to the entries that are indexed by $\vecx_{\ell}$ variables for $\ell\in [0,d-1]$. 
The matrix $q_k(R)$ is block-diagonal with blocks $q_k(R_0), \ldots, q_k(R_{d-1})$. Since $\mathbf{u} \in \CAL{N}_i$, $p_i(R)\cdot\mathbf{u} = 0$ and so $q_k (R)\cdot\mathbf{u} = 0$. Hence,
\begin{equation}\label{equation: the submatrix equation for q_k}
q_k(R_{\ell})\cdot \mathbf{u_{\ell}} = 0 ~~~~~ \text{for all $\ell \in [0,d-1]$.}
\end{equation}
Further,
\begin{equation}\label{equation: the submatrix equation for q_ell}
q_{\ell}(R_{\ell})\cdot \mathbf{u_{\ell}}=0 ~~~~~\text{for all $\ell \in [0,d-1]$,}
\end{equation}
as $q_{\ell}(R_{\ell}) = 0$ (the characteristic polynomial of $R_{\ell}$ being $q_{\ell}(x)$). Since $q_k(x)$ and $q_{\ell}(x)$ are co-prime for $k \neq \ell$, there are polynomials $s(x)$ and $t(x)$ such that $s(x)q_k(x) + t(x)q_{\ell}(x) = 1$. This implies $s(R_{\ell})q_k(R_{\ell}) + t(R_{\ell})q_{\ell}(R_{\ell}) = I_{w^2}$. Hence, $s(R_{\ell})q_k(R_{\ell})\vecu_{\ell} + t(R_{\ell})q_{\ell}(R_{\ell})\vecu_{\ell} = \vecu_{\ell}$. From Equations \ref{equation: the submatrix equation for q_k} and \ref{equation: the submatrix equation for q_ell}, $\vecu_{\ell} = \mathbf{0}$ for all $k \neq \ell$.
\end{proof}
%
\subsection{Reduction to $\IMMTI$}
\begin{algorithm}
\caption{Reduction to $\IMMTI$}\label{alg:layer_spaces}
INPUT: The irreducible $\G_f$-invariant subspaces $\CAL{V}_0, \ldots, \CAL{V}_{d-1}$.\\
OUTPUT: $A' \in \GLNF$ and $w\in \N$ such that $h(\vecx) = f(A'\vecx)$ is a $d$-tensor in the variable sets $\vecx_0, \ldots, \vecx_{d-1}$ which is isomorphic to $\IMM_{w,d}$.
\begin{algorithmic}[1]
\State Determine $w$ such that $w^2$ is the dimension of each of the spaces  $\CAL{V}_0, \ldots, \CAL{V}_{d-1}$. If there does not exist such a $w$ then output `No'.
\State Construct the $n \times n$ matrix $V$ such that the $kw^2 + 1, \ldots , (k+1)w^2$ columns of $V$ are the basis vectors of $\CAL{V}_k$ for $k\in [0,d-1]$.

~
\State Compute a permutation $\tau$ of $[0,d-1]$ that is equal to $\sigma^{-1}$ (up to a ``rotation''), where $\sigma$ is the permutation in Lemma \ref{lemma:null_space_closure_gives_irred_spaces}. 
\State Compute a block-permuted permutation matrix $B$ that maps the variables in $\vecx_{\tau(k)}$ to the variables in $\vecx_k$ for all $k\in [0,d-1]$, i.e., $B \cdot (\vecx_{\tau(0)}~\vecx_{\tau(1)} \ldots \vecx_{\tau(d-1)})^T = (\vecx_{0}~\vecx_{1} \ldots \vecx_{d-1})^T$.
\State Return $A' = V\cdot B$ and $w$.
\end{algorithmic}
\end{algorithm}
\begin{definition}[Evaluation dimension \cite{ForbesS13, Nisan91}]\label{definition: evaldim}
Let $g(\vecx)$ be an $n$-variate polynomial and $\vecx' \subseteq \vecx$. Let $g(\vecx)_{\vecx' = \boldmath\alpha}$ denote the partial evaluation of $g$ at $\vecx' = \boldmath\alpha \in \F^{|\vecx'|}$. The evaluation dimension of $g$ with respect to $\vecx'$ is defined as $\textnormal{Evaldim}_{\vecx'}(g) := \textnormal{dim}(\textnormal{span}_{\F}(\{g(\vecx)_{\vecx' = \boldmath\alpha}: \boldmath\alpha \in \F^{|\vecx'|}\} ))\ \, .$
\end{definition}

We use the above definition to analyse Algorithm \ref{alg:layer_spaces}. \vspace{0.1in}

\textbf{Steps 1--2}: The correctness of Step 1 follows from Corollary \ref{corollary:irr_inv_sub_f}. Let $V_k$ be the $n \times w^2$ matrix whose columns are the basis vectors of the $\G_f$-invariant subspace $\mathcal{V}_k$. Then the matrix $V$ constructed at Step 2 is obtained by concatenating the matrices $V_0, \ldots, V_{d-1}$ in this order, denoted $V_0 | V_1 | \ldots | V_{d-2} | V_{d-1}$. From Lemma \ref{lemma:null_space_closure_gives_irred_spaces}, there is a permutation $\sigma$ of $[0,d-1]$ such that $\CAL{V}_k = A^{-1}\CAL{U}_{\sigma(k)}$. Hence, there is a matrix $E_k\in \F^{n\times w^2}$ such that $V_k = A^{-1}E_k$ and the non-zero entries of $E_k$ are confined to the rows indexed by $\vecx_{\sigma(k)}$ variables. Let $E = E_0 | \ldots | E_{d-1}$. Then $V= A^{-1}\cdot E$. Observe that $E$ is a block-permuted matrix, i.e., the columns indexed by $\vecx_k$ variables have non-zero entries confined to the rows indexed by $\vecx_{\sigma(k)}$ variables. Thus, $g(\vecx) := f(V\vecx) = \IMM_{w,d}(E\vecx)$. \vspace{0.1in}  

\textbf{Steps 3--5}: Step $3$ uses the algorithm in next claim to determine $\tau$.  
\begin{claim}{\label{claim:reordering_layer_spaces}}
There is a randomized polynomial-time algorithm that takes input blackbox access to $g$ and with probability $1-o(1)$ outputs a permutation $\tau$ of $[0,d-1]$ such that there is an $\ell\in [0,d-1]$ satisfying either a) $\tau(k) = \sigma^{-1}(\ell+ k)$ for all $k\in [0,d-1]$, or b) $\tau(k) = \sigma^{-1}(\ell - k)$ for all $k\in [0,d-1]$.
\end{claim}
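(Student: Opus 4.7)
\textbf{Proof proposal for Claim \ref{claim:reordering_layer_spaces}.}
The plan is to recover the cyclic order of the blocks using evaluation dimension (Definition \ref{definition: evaldim}) via Observation \ref{observation: evaldim of trace of full-rank set-mult matrix product}. First I would show that, under the assumption $f = \IMM_{w,d}(A\vecx)$, the polynomial $g(\vecx) = f(V\vecx) = \IMM_{w,d}(E\vecx)$ can be written as
\[
g(\vecx) \;=\; \tr\!\left(X'_0 \cdot X'_1 \cdots X'_{d-1}\right),
\]
where each $X'_k$ is a $w\times w$ full-rank linear matrix whose entries depend only on the variables in $\vecx_{\sigma^{-1}(k)}$. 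This is because $E$ is a block-permuted invertible matrix, so the row block of $E$ indexed by $\vecx_{\sigma(k)}$ is supported in the columns indexed by $\vecx_k$, and the corresponding $w^2 \times w^2$ sub-block is invertible; pulling this substitution through $\IMM_{w,d} = \tr(Q_0 \cdots Q_{d-1})$ gives the desired representation as the trace of a full-rank set-multilinear $(w,d,n)$ matrix product in the blocks $\vecx_{\sigma^{-1}(0)}, \vecx_{\sigma^{-1}(1)}, \ldots, \vecx_{\sigma^{-1}(d-1)}$ (in this cyclic order).

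Having established this, Observation \ref{observation: evaldim of trace of full-rank set-mult matrix product} yields the clean dichotomy: for distinct $i,j \in [0,d-1]$,
\[
\textnormal{Evaldim}_{\vecx_i \uplus \vecx_j}(g) \;=\; w^2 \quad \Longleftrightarrow \quad \sigma(i) - \sigma(j) \equiv \pm 1 \pmod{d},
\]
and otherwise the evaluation dimension equals $w^4$. Thus I would build an adjacency graph $G$ on the vertex set $[0,d-1]$ with an edge between $i$ and $j$ exactly when $\textnormal{Evaldim}_{\vecx_i \uplus \vecx_j}(g) = w^2$. For $d \geq 4$, $G$ is the cycle $C_d$ whose cyclic sequence of vertices is precisely $\sigma^{-1}(0), \sigma^{-1}(1), \ldots, \sigma^{-1}(d-1)$ (up to rotation and reflection); for $d = 3$, $G$ is $K_3 = C_3$, and every permutation of $\{0,1,2\}$ is automatically of the desired form. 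Pick $\tau(0) := 0$, pick any neighbor $\tau(1)$ of $\tau(0)$ in $G$, and then set $\tau(k)$ inductively to be the unique neighbor of $\tau(k-1)$ in $G$ other than $\tau(k-2)$. The resulting $\tau$ is a traversal of the cycle, so there exists $\ell$ for which either $\tau(k) = \sigma^{-1}(\ell+k)$ for all $k$ or $\tau(k) = \sigma^{-1}(\ell-k)$ for all $k$.

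The main technical step will be computing $\textnormal{Evaldim}_{\vecx_i \uplus \vecx_j}(g)$ from black-box access to $g$. Since $g$ is set-multilinear, the evaluation dimension with respect to $T = \vecx_i \uplus \vecx_j$ equals the rank of the coefficient matrix whose rows are indexed by the (at most $w^4$) monomials in $T$ and whose columns are indexed by monomials in $\vecx \setminus T$. Equivalently, if $N = w^4 + 1$ and we sample $\vecbeta_1, \ldots, \vecbeta_N \in S^{|T|}$ and $\vecalpha_1, \ldots, \vecalpha_N \in S^{|\vecx \setminus T|}$ independently and uniformly from a set $S \subseteq \F$ of size $\poly(n)$, then the rank of the evaluation matrix $M_{s,t} := g(\vecbeta_s, \vecalpha_t)$ equals $\textnormal{Evaldim}_{\vecx_T}(g)$ with high probability by a standard Schwartz--Zippel argument (noting that both factor-evaluation-matrices have polynomially-bounded entries in the variables $\{\vecbeta_s\}$ or $\{\vecalpha_t\}$, so a full-rank witness is preserved under random substitution). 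This step runs in time $\poly(n, w, \beta)$ per pair, and there are $\binom{d}{2}$ pairs, so the whole routine is randomized polynomial time.

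The main obstacle I anticipate is justifying the evaluation-matrix rank claim rigorously (making sure that both the "sufficient $N$" bound and the high-probability preservation of rank hold simultaneously), and handling the boundary $d=3$ case where the adjacency graph degenerates to $K_3$ so that a separate (trivial) argument is needed to verify that every cyclic traversal of $K_3$ corresponds to a valid $\tau$. Everything else (constructing $G$, traversing $C_d$, and verifying the output) is straightforward combinatorics.
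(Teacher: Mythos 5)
Your proposal is correct and follows essentially the same route as the paper: both use the evaluation-dimension dichotomy ($w^2$ for adjacent blocks, $w^4$ otherwise, as in Observation \ref{obs:sigma_compute}) on all $\binom{d}{2}$ pairs to recover the neighbor structure, and then walk the resulting cycle to output $\tau$ up to rotation/reflection. Your rank-of-a-random-evaluation-matrix routine for computing $\textnormal{Evaldim}$ is just a repackaging of the paper's procedure (random partial evaluations plus Claim 2.2 of \cite{KayalNST19}), so there is no substantive difference in approach.
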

The claim is proved below after completing the analysis of Algorithm \ref{alg:layer_spaces}. Assume that $\tau(k) = \sigma^{-1}(\ell+ k)$ for all $k\in [0,d-1]$; the analysis for $\tau(k) = \sigma^{-1}(\ell - k)$ for all $k\in [0,d-1]$ is similar. Since $g = \IMM_{w,d}(E\vecx)$, there is a full-rank $(w,d,n)$-set-multilinear matrix product $X_0\cdots X_{d-1}$  in the variable sets $\vecx_{\sigma^{-1}(0)}, \ldots, \vecx_{\sigma^{-1}(d-1)}$ respectively such that $g(\vecx) = \text{tr}(X_0\cdots X_{d-1}) = \text{tr}(X_{\ell}\cdot X_{\ell+1}\ldots X_{d-1}\cdot X_0\ldots X_{\ell-1})$. Renaming $X_{\ell+k}$ as $X_k$ for all $k\in [0,d-1]$ and reusing symbols, it is inferred that there is a full-rank $(w,d,n)$-set-multilinear matrix product $X_0\cdots X_{d-1}$  in the variable sets $\vecx_{\tau(0)}, \ldots, \vecx_{\tau(d-1)}$ respectively such that $g = \text{tr}(X_0\cdots X_{d-1})$. Hence, at Steps $4$ and $5$, it is readily seen that $g(B\vecx) = f(VB\vecx)$ is computed by a full-rank $(w,d,n)$-set-multilinear matrix product $X'_0\cdots X'_{d-1}$ in the variable sets $\vecx_{0}, \ldots, \vecx_{d-1}$ respectively, i.e., $f(VB\vecx) = \text{tr}(X'_0\cdots X'_{d-1})$. \vspace{0.1in}


\begin{proof}[Proof of Claim \ref{claim:reordering_layer_spaces}]
The following observation is the key to computing $\tau$.
\begin{observation}{\label{obs:sigma_compute}}
Let $\ell \in [0,d-1]$ and $r=\sigma^{-1}(\ell)$. Then a) for $r' \in \{\sigma^{-1}(\ell-1), \sigma^{-1}(\ell+1)\}$, $\textnormal{Evaldim}_{\vecx_{r} \uplus \vecx_{r'}}(g) = w^2$, and b) for $r' \in [0,d-1]\setminus \{\sigma^{-1}(\ell), \sigma^{-1}(\ell+1), \sigma^{-1}(\ell-1)\}$, $\textnormal{Evaldim}_{\vecx_{r} \uplus \vecx_{r'}}(g) = w^4$. There is a randomized polynomial-time algorithm to compute $\textnormal{Evaldim}_{\vecx_{r} \uplus \vecx_{r'}}(g)$ for all $r, r' \in [0, d-1]$.
\end{observation}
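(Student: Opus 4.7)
The plan is to exploit the representation $g = \tr(X_0 X_1 \cdots X_{d-1})$ in which each $X_k$ is a $w \times w$ linear matrix whose $w^2$ entries are linearly independent linear forms in $\vecx_{\sigma^{-1}(k)}$; this is immediate from $g(\vecx) = \IMM_{w,d}(E\vecx)$ and the fact that the block-permuted invertible $E$ sends the coordinates of $\vecx_k$ to those of $\vecx_{\sigma(k)}$. The subsidiary fact I would use throughout is that, for any contiguous cyclic sub-product $X_a X_{a+1}\cdots X_b$ of these full-rank set-multilinear layers, its $w^2$ matrix entries are linearly independent polynomials in $\F[\vecx]$: each entry expands as a sum of path monomials $(X_a)_{i_a i_{a+1}}(X_{a+1})_{i_{a+1} i_{a+2}}\cdots (X_b)_{i_b i_{b+1}}$, and variable-disjointness across layers together with full-rankness of each individual $X_k$ makes every path monomial uniquely determined by its endpoints $(i_a, i_{b+1})$, forcing any putative linear dependence to be trivial.

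For part (a), fix $r = \sigma^{-1}(\ell)$ and, by a symmetric argument, take $r' = \sigma^{-1}(\ell+1)$. Upon substituting $\vecx_r = \alpha_1,\, \vecx_{r'} = \alpha_2$, the two adjacent matrices $X_\ell$ and $X_{\ell+1}$ specialize to numeric $w \times w$ matrices $A_\ell, A_{\ell+1}$, and cyclic invariance of the trace yields
\[
g|_\alpha \;=\; \tr(M \cdot Y), \qquad M = A_\ell A_{\ell+1}, \quad Y = X_{\ell+2}\,X_{\ell+3}\cdots X_{\ell-1}
\]
with indices cyclic mod $d$. Since the entries of $X_\ell, X_{\ell+1}$ are linearly independent over disjoint variable sets, the maps $\alpha_1 \mapsto A_\ell$ and $\alpha_2 \mapsto A_{\ell+1}$ surject onto $\CAL{M}_w$, so $M$ ranges over all of $\CAL{M}_w$; expanding $\tr(MY) = \sum_{i,j} M_{ij} Y_{ji}$ then gives $\textnormal{Evaldim}_{\vecx_r \uplus \vecx_{r'}}(g) = \dim\spa\{Y_{ji} : i,j\in [w]\} = w^2$ by the subsidiary fact applied to the $(d-2)$-layer product $Y$.

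For part (b), take $r' = \sigma^{-1}(m)$ with $m \notin \{\ell-1, \ell, \ell+1\}$, so that $X_\ell$ and $X_m$ sit at non-adjacent positions in the cyclic product. Rotating the trace yields
\[
g|_\alpha \;=\; \tr(A_\ell\,P\,A_m\,Q) \;=\; \sum_{i,j,k,l}(A_\ell)_{ij}(A_m)_{kl}\,P_{jk}\,Q_{li},
\]
where $P$ and $Q$ are the two remaining contiguous cyclic sub-products of the $X_k$'s. As $A_\ell, A_m$ vary independently over $\CAL{M}_w$, the Evaldim equals $\dim\spa\{P_{jk}\,Q_{li} : i,j,k,l \in [w]\}$. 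Since $P$ and $Q$ involve pairwise disjoint variable sets (and both disjoint from $\vecx_r \uplus \vecx_{r'}$) and each contributes $w^2$ linearly independent entries by the subsidiary fact, their pairwise products span a $w^4$-dimensional space, yielding the desired value.

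For the algorithmic claim, I would compute the rank of a Hankel-style evaluation matrix: sample $\alpha_1,\ldots,\alpha_N$ and $\beta_1,\ldots,\beta_M$ uniformly at random from $S^{2w^2}$ and $S^{n-2w^2}$ respectively, with $|S| = \poly(n,d)$, and form $H_{ij} = g(\alpha_i,\beta_j)$; then $\textnormal{rank}(H) = \textnormal{Evaldim}_{\vecx_r \uplus \vecx_{r'}}(g)$ with high probability, via two Schwartz--Zippel invocations, namely (i) $N = \poly(w)$ random $\alpha_i$'s span the image of $\alpha \mapsto g|_\alpha$ (which has dimension at most $w^4$), and (ii) $M = \poly(n,d)$ random $\beta_j$'s give an injective evaluation on that finite-dimensional polynomial image. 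The main point I would be careful about is the linear-independence claim used in both (a) and (b): set-multilinearity alone is not sufficient, and one really does need the full-rankness of each individual $X_k$ together with the variable-disjointness across layers to rigorously run the path-monomial argument that underpins the entire proof.
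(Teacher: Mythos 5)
Your proof is correct and follows essentially the same route as the paper: rotate the trace so the two chosen blocks are adjacent to the cut, use full-rankness of those layers to realize arbitrary (indicator) substitutions, identify the evaluation span with the span of the entries (case a) or pairwise products of entries (case b) of the complementary contiguous sub-products, and compute the evaluation dimension by random partial evaluations plus a rank computation, exactly as the paper does via its $Y_1,Y_2$ (resp.\ $P,T$) decomposition and its appeal to the reconstruction-style rank estimation. The only nitpick is the phrase ``every path monomial uniquely determined by its endpoints,'' which should be stated the other way around (each path monomial determines, and hence occurs in, exactly one entry of the sub-product), but the intended monomial-disjointness argument for the linear independence of the entries is the right one.
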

The observation is proved after completing the proof of the claim. Observation \ref{obs:sigma_compute} is used ${d \choose 2}$ times to determine $S_{r} = \{\sigma^{-1}(\ell-1), \sigma^{-1}(\ell+1)\}$ where $r = \sigma^{-1}(\ell)$, for every $r \in [0,d-1]$. Using the knowledge of $S_{0}, \ldots , S_{d-1}$, $\tau$ is determined which is equal to $\sigma^{-1}$ up to a rotation as follows. Choose an arbitrary element $r\in [0,d-1]$ and set $\tau(0) = r$, and let $\ell \in [0,d-1]$ be such that $\sigma^{-1}(\ell) = r$.  We can construct $\tau$ by choosing either of the elements in $S_{r} = \{\sigma^{-1}(\ell-1), \sigma^{-1}(\ell+1)\}$: if $\sigma^{-1}(\ell-1)$ is chosen then $\tau$ constructed will be such that $\tau(k) = \sigma^{-1}(\ell-k)$ for $k\in [0,d-1]$, and if $\sigma^{-1}(\ell+1)$ is chosen then $\tau$ constructed will be such that $\tau(k) = \sigma^{-1}(\ell+k)$ for $k\in [0,d-1]$. Without loss of generality assume $\sigma^{-1}(\ell+1)$ is chosen. Set $\tau(1) = \sigma^{-1}(\ell+1)$. The remaining part of $\tau$ is determined sequentially as follows. Suppose, for some $k \in [0,d-2]$, $\tau(i) = \sigma^{-1}(\ell+i)$ for all $i\in [0,k]$. Then $S_{\sigma^{-1}(\ell+k)} = \{\sigma^{-1}(\ell+k-1), \sigma^{-1}(\ell+k+1)\}$ and $\tau(k-1) = \sigma^{-1}(\ell+k-1)$. Choose the other element in $S_{\sigma^{-1}(\ell+k)}$ and set $\tau(k+1) = \sigma^{-1}(\ell+k+1)$. 
\end{proof}

\begin{proof}[Proof of Observation~\ref{obs:sigma_compute}]
There is a full-rank $(w,d,n)$-set-multilinear matrix product $X_0\cdots X_{d-1}$  in the variable sets $\vecx_{\sigma^{-1}(0)}, \ldots, \vecx_{\sigma^{-1}(d-1)}$ respectively such that $g(\vecx) = \text{tr}(X_0\cdots X_{d-1})$. 

\emph{Case a}:~ Suppose $r' = \sigma^{-1}(\ell+1)$; the proof for $r' = \sigma^{-1}(\ell-1)$ is similar. Let $\CAL{A} = \textnormal{span}_{\F}\{g(\mathbf{x})|_{\mathbf{x}_{r} \uplus \mathbf{x}_{r'} = {\boldmath\alpha} },$ $\boldmath\alpha \in \F^{2w^2}\}$. 
Observe that $g(\vecx) = \textnormal{tr}(X_0\ldots X_{d-1}) = \text{tr}(X_{\ell}\cdot X_{\ell+1}\ldots X_{d-1}\cdot X_0\ldots X_{\ell-1})$. Let $Y_1$ be the row vector of size $w^2$ whose $((i-1)\cdot w+j)$-th entry is the $(i,j)$-th entry of $X_{\ell}\cdot X_{\ell+1}$, for $i,j\in [w]$. Similarly, let $Y_2$ be the column vector of size $w^2$ whose $((j-1)\cdot w+i)$-th entry is the $(i,j)$-th entry of $X_{\ell+2}\ldots X_{d-1}\cdot X_0\ldots X_{\ell-1}$, for $i,j\in [w]$. From the construction of $Y_1$ and $Y_2$, $g(\vecx) = Y_1\cdot Y_2$. Since $X_{\ell}, X_{\ell+1}$ are full-rank linear matrices in disjoint variable sets, there is a point $\alpha_i \in \F^{2w^2}$ such that the $i$-th entry of $Y_1$ evaluated at this point is equal to one and the remaining entries are zero, for every $i\in [w^2]$. Hence, every entry of $Y_2$ is in $\CAL{A}$, and further as $X_{\ell+1}\ldots X_{d-1}\cdot X_{0}\ldots X_{\ell-1}$ is a full-rank set-multilinear matrix product, the $w^2$ entries of $Y_2$ are $\F$-linearly independent. Thus the entries of $Y_2$ form a basis of $\CAL{A}$, and $\textnormal{Evaldim}_{\vecx_{r} \uplus \mathbf{x}_{r'}}(g) = \textnormal{dim}(\CAL{A}) = w^2$. \vspace{0.1in}

\emph{Case b}:~ Suppose $r' = \sigma^{-1}(\ell')$ and $\ell' \notin \{\ell-1,\ell, \ell+1\}$. Let $\CAL{A} = \textnormal{span}_{\F}\{g(\mathbf{x})|_{\mathbf{x}_{r} \uplus \mathbf{x}_{r'} = {\alpha} }, \alpha \in \F^{2w^2}\}$. Again $g(\vecx) = \textnormal{tr}(X_{\ell}\cdot X_{\ell+1}\ldots X_{\ell'}\ldots X_{\ell-1})$. Let $P = X_{\ell+1} \ldots X_{\ell'-1} = (p_{i,j})_{i,j\in [w]}$ and $T = X_{\ell'+1}\ldots X_{\ell-1} = (t_{i,j})_{i,j\in [w]}$. Since $X_0\ldots X_{d-1}$ is a full-rank set-multilinear matrix product, the $w^4$ polynomials $\{p_{i_1,j_1} \cdot t_{i_2,j_2} \mid i_1,j_1, i_2,j_2 \in [w]\}$ are linearly independent over $\F$. Moreover,  $\vecx_{r} \uplus \vecx_{r'}$ can be substituted appropriately such that these $w^4$ polynomials are in $\CAL{A}$. Since $\CAL{A} =  \textnormal{span}_{\F}\{ p_{i_1,j_1} \cdot t_{i_2,j_2} \mid i_1,j_1, i_2,j_2 \in [w]\}$, the $w^4$ polynomials $\{p_{i_1,j_1} \cdot t_{i_2,j_2} \mid i_1,j_1, i_2,j_2 \in [w]\}$ form a basis of $\CAL{A}$. This implies $\textnormal{Evaldim}_{\vecx_{r} \uplus \mathbf{x}_{r'}}(g) = \textnormal{dim}(\CAL{A}) = w^4$. \vspace{0.1in}

\emph{A polynomial-time randomized procedure to compute} $\textnormal{Evaldim}_{\mathbf{x}_{r} \uplus \mathbf{x}_{r'}}(g)$:  Let $S\subset \F$ such that $|S| = n^{4}$. Choose points $\mathbf{a}_1, \ldots, \mathbf{a}_{w^4} \in_{r} S^{2w^2}$ independently and uniformly at random and output the dimension of the $\F$-linear space spanned by the polynomials $g(\vecx\setminus \{\vecx_r, \vecx_{r'}\}, \mathbf{a}_1), \ldots, g(\vecx\setminus \{\vecx_r, \vecx_{r'}\}, \mathbf{a}_{w^4})$ using Claim 2.2 in \cite{KayalNST19}. The proof of correctness of this procedure is similar to the proof of correctness of the randomized procedure in Observation E.1 in \cite{KayalNST19}. 
\end{proof}

\section{Proofs from Section \ref{sec: reduction from IMM to DET}} \label{secappendix: reduction to det}
\emph{Claim \ref{claim: uniqueness from transpose} (restated): Let $X$ be a $w\times w$ full-rank linear matrix and $Y=I_w\otimes X$. Then there does not exist non-zero matrices $T,S \in \CAL{M}_{w^2}(\F)$ such that $T\cdot Y = Y^T\cdot S$.}
\begin{proof}
Since $X$ is a full-rank linear matrix, by applying an invertible transformation we may assume without loss of generality that the entries of $X$ are distinct $w^2$ variables. Hence, it is sufficient to prove the claim when $X$ is symbolic matrix with entries being distinct variables. Suppose for contradiction, there are  non-zero matrices $T$ and $S$ such that $T,S \in \CAL{M}_{w^2}(\F)$ and $T\cdot Y = Y^T\cdot S$. Let $T_{i,j}$ (respectively $S_{i,j}$) denote the $(i,j)$-th $w\times w$ sub-matrix of $T$ (respectively $S$) corresponding to the rows numbered from $(w(i-1) + 1)$ to $(wi)$, and columns numbered from $(w(j-1)+1)$ to $(wj)$ of $T$ (respectively $S$), for $i,j\in [w]$. Then $T_{i,j}\cdot X = X^T\cdot S_{i,j}$, for every $i,j\in [w]$. For $u\in [2,w]$, observe that the $(1,u)$ entries of $T_{i,j}\cdot X$ and $X^T\cdot S_{i,j}$ are variable disjoint implying that all the columns except the first column of of $S_{i,j}$ are zero columns for every $i,j\in [w]$. Similarly comparing the $(2,1)$ entries of $T_{i,j}\cdot X$ and $X^T\cdot S_{i,j}$, it is observed that even the first column of $S_{i,j}$ is a zero column for every $i,j\in [w]$. This implies $S$ is a zero matrix, and hence $T$ is a zero matrix.
\end{proof}
\emph{Claim \ref{claim: Uniqueness of commuting matrices} (restated): Let $X$ be a $w \times w$ full-rank linear matrix and $Y = I_w\otimes X$, and suppose $T,S \in \CAL{M}_{w^2}(\F)$ such that $T\cdot Y = Y \cdot S$. Then $T = S = M \otimes I_w$ for some $M \in \CAL{M}_w(\F)$.}
\begin{proof}
Similar to the proof of Claim \ref{claim: uniqueness from transpose}, it is sufficient to  prove Claim \ref{claim: Uniqueness of commuting matrices} for the case when $X$ is a  $w\times w$ symbolic matrix with entries being distinct variables. Let $T,S \in \CAL{M}_{w^2}(\F)$ be such that $T\cdot Y = Y\cdot S$. Also let $\veca \in \F^{w^2}$ be such that  $X$ evaluated at $\veca$ is equal to $I_{w}$. Now evaluating the expression $T\cdot Y= Y\cdot S$ at $\veca$, it is  inferred that $T=S$. Let $T_{i,j}$ denote the $(i,j)$-th $w\times w$ sub-matrix of $T$ corresponding to the rows numbered from $(w(i-1) + 1)$ to $(wi)$, and columns numbered from $(w(j-1)+1)$ to $(wj)$ of $T$, for $i,j\in [w]$. Then $T_{i,j}\cdot X = X\cdot T_{i,j}$, for every $i,j\in [w]$. Since the entries of $X$ are distinct variables, $T_{i,j} = m_{i,j}I_w$, where $m_{i,j}\in \F$. Hence $T = M\otimes I_{w}$, where $M = (m_{i,j})_{i,j\in [w]}$.
%
\end{proof}
\emph{Observation \ref{observation:exactly_one_succeds} (restated): If $h(\vecx_0, \ldots, \vecx_{d-1})$ is isomorphic to $\IMM_{w,d}$ then for matrices $Y'_k$ and $Z_k$ as computed in Algorithm \ref{algorithm: multilinear equivalence test for trimm}, where $k \in [1,d-2]$, there are no matrices $T'_{k-1}, S'_k\in \GL(w^2,\F)$ such that both $T'_{k-1} \cdot Y'_{k} = Z_k \cdot S'_k$ and $T'_{k-1} \cdot Y'_{k} = Z_k^{T} \cdot S'_k$ are simultaneously true.}
\begin{proof}
Since $h(\vecx)$ is multilinearly equivalent to $\IMM_{w,d}$, as argued in Section \ref{section: multilinear equivalence testing} for all $k \in [1,d-2]$, $Z_k = I_w \otimes X'_k$, and $Y'_k = T_{k-1}^{-1}\cdot Y_{k}\cdot T_k$. Moreover, either 
$$Y_k = (I_w\otimes C_k)\cdot Z_k\cdot (I_w\otimes D_k) ] ~~~~\text{or}~~~~ Y_k = (I_w\otimes C_k)\cdot Z_k^T\cdot (I_w\otimes D_k).$$ 
We prove the observation when $Y_k = (I_w\otimes C_k)\cdot Z_k\cdot (I_w\otimes D_k)$, and the proof for $Y_k = (I_w\otimes C_k)\cdot Z_k^T\cdot (I_w\otimes D_k)$ is similar. Suppose there are matrices $T'_{k-1}, S'_k\in \GL(w^2,\F)$ such that both $T'_{k-1} \cdot Y'_{k} = Z_k \cdot S'_k$ and $T'_{k-1} \cdot Y'_{k} = Z_k^{T} \cdot S'_k$ are simultaneously true. Then Equations \ref{equation: multiplication without transpose} and \ref{equation: multiplication with transpose} are simultaneously true. 
\begin{align}
    T'_{k-1} \cdot (T_{k-1}^{-1}\cdot Y_{k}\cdot T_k) &= Z_k \cdot S'_k \nonumber\\
    (T'_{k-1}T_{k-1}^{-1})\cdot (I_w\otimes C_k)\cdot Z_k\cdot (I_w\otimes D_k)\cdot T_k &= Z_k \cdot S'_k  \nonumber\\
    (T'_{k-1}T_{k-1}^{-1})\cdot (I_w\otimes C_k)\cdot Z_k &= Z_k \cdot (S'_k T_{k}^{-1}) \cdot (I_w\otimes D_k^{-1}) \label{equation: multiplication without transpose}
\end{align}
\begin{align}
    T'_{k-1} \cdot (T_{k-1}^{-1}\cdot Y_{k}\cdot T_k) &= Z_k^T \cdot S'_k \nonumber\\
    (T'_{k-1}T_{k-1}^{-1})\cdot (I_w\otimes C_k)\cdot Z_k\cdot (I_w\otimes D_k)\cdot T_k &= Z_k^T \cdot S'_k \nonumber\\
    (T'_{k-1}T_{k-1}^{-1})\cdot (I_w\otimes C_k)\cdot Z_k &= Z_k^T \cdot (S'_k T_{k}^{-1}) \cdot (I_w\otimes D_k^{-1}) \label{equation: multiplication with transpose}
\end{align}
Since $X'_k$ is a full-rank linear matrix in $\vecx_k$ variables and $Z_k = I_w\otimes X'_k$, this contradicts Claim \ref{claim: uniqueness from transpose}.
\end{proof}
\emph{Observation \ref{observation:uniquess_ti_si} (restated):  The matrices $T'_{k-1}$ and $S'_k$ computed at Step 5 of Algorithm \ref{algorithm: multilinear equivalence test for trimm}, where $k\in [1,d-2]$, satisfy the following: $(T'_{k-1})^{-1} = T_{k-1}^{-1} \cdot (I_w \otimes C_k)  \cdot (M_k^{-1} \otimes I_w)$ and $S'_k = (M_k \otimes I_w) \cdot (I_w \otimes D_k) \cdot T_k$, where $M_k \in \GL(w, \F)$. }
\begin{proof}
Substitute $Y'_k = T_{k-1}^{-1} \cdot (I_w \otimes C_k) \cdot Z_k \cdot (I_w \otimes D_k) \cdot T_k$ \,in $T'_{k-1} \cdot Y'_k = Z_k \cdot S'_k$. Then
$$T'_{k-1} \cdot T_{k-1}^{-1} \cdot (I_w \otimes C_k) \cdot Z_k = Z_k \cdot  S'_k \cdot T_k^{-1} \cdot  (I_w \otimes D_k^{-1}) .$$
Hence, from Claim \ref{claim: Uniqueness of commuting matrices}, there is a matrix $M_k\in \GL(w,\F)$ such that
$$ (T'_{k-1} T_{k-1}^{-1}) \cdot (I_w \otimes C_k) = (S'_k T_k^{-1}) \cdot (I_w \otimes D_k^{-1}) = M_k \otimes I_w \, .$$
This implies $(T'_{k-1})^{-1} = T_{k-1}^{-1} \cdot (I_w \otimes C_k)  \cdot (M_k^{-1} \otimes I_w)$ and $S'_k = (M_k \otimes I_w) \cdot (I_w \otimes D_k) \cdot T_k$. 
\end{proof}
\emph{Observation \ref{obs:structure of widehatY ABP} (restated): Let $M_{1}, \ldots, M_{d-2}$ be the matrices as defined in Observation \ref{observation:uniquess_ti_si}. Then 
\begin{enumerate}
\item  $\widehat{Y}_k = (M_k M_{k+1}^{-1} \otimes I_w)\cdot(I_w \otimes (C_k^{-1} \cdot X_k \cdot C_{k+1}))$ ~~for $k \in [1,d-3]$, 
\item  $\widehat{Y}_{d-2} = I_w \otimes (C_{d-2}^{-1} \cdot X_{d-2} \cdot D_{d-2}^{-1})$, 
\item  $\widehat{Y}_0 = Y_0\cdot (I_w \otimes C_1)  \cdot (M_1^{-1} \otimes I_w)$, and $\widehat{Y}_{d-1} = (M_{d-2} \otimes I_w) \cdot (I_w \otimes D_{d-2})\cdot Y_{d-1}$.
\end{enumerate}}
\begin{proof}
Let $T'_k, S'_k, T_k$ for $k\in [0,d-2]$, and $Y'_k, Y_k, X'_k, X_k$ for $k\in [0,d-1$ be as in Section \ref{section: multilinear equivalence testing}. \vspace{0.1in}

\emph{a)} For $k \in [1,d-3]$, $\widehat{Y}_k = T'_{k-1}\cdot Y'_k\cdot (T'_k)^{-1}$,~ $Y'_k = T_{k-1}^{-1}\cdot Y_k\cdot T_k$, ~and $Y_k = I_w\otimes X_k$. From Observation  \ref{observation:uniquess_ti_si}, 
$$T'_{k-1} =  (M_k \otimes I_w) \cdot (I_w \otimes C_k^{-1}) \cdot T_{k-1} ~~~~\text{and}~~~~(T'_{k})^{-1} = T_{k}^{-1} \cdot (I_w \otimes C_{k+1})  \cdot (M_{k+1}^{-1} \otimes I_w)~,$$ and hence for $k\in [1,d-3]$
$$\widehat{Y}_k = T'_{k-1}\cdot Y'_k\cdot (T'_k)^{-1} = (M_k \otimes I_w) \cdot (I_w \otimes C_k^{-1}) \cdot Y_k \cdot (I_w \otimes C_{k+1})  \cdot (M_{k+1}^{-1} \otimes I_w)\,.$$
Since $(I_w \otimes (C_k^{-1}\cdot X_k\cdot C_{k+1})) \cdot (M_{k+1}^{-1} \otimes I_w) = (M_{k+1}^{-1} \otimes I_w) \cdot (I_w \otimes (C_k^{-1} \cdot X_k \cdot C_{k+1}))$, for $k\in [1,d-3]$
$$\widehat{Y}_k = (M_k M_{k+1}^{-1} \otimes I_w)\cdot(I_w \otimes (C_k^{-1}\cdot  X_k \cdot C_{k+1}))\, .$$
\emph{b)} Recall that $\widehat{Y}_{d-2} = T'_{d-3}\cdot Y'_{d-2}\cdot (S'_{d-2})^{-1}$, $Y'_{d-2} = T_{d-3}^{-1}\cdot Y_{d-2}\cdot T_{d-2}$, and $Y_{d-2} = I_w\otimes X_{d-2}$. From Observation \ref{observation:uniquess_ti_si}, 
$$T'_{d-3} =  (M_{d-2} \otimes I_w) \cdot (I_w \otimes C_{d-2}^{-1}) \cdot T_{d-3}, ~~~~\text{and}~~~~(S'_{d-2})^{-1} = T_{d-2}^{-1}\cdot  (I_w \otimes D_{d-2}^{-1}) \cdot (M_{d-2}^{-1} \otimes I_w).$$
Hence,
$$\widehat{Y}_{d-2} = T'_{d-3}\cdot Y'_{d-2}\cdot (S'_{d-2})^{-1} = (M_{d-2} \otimes I_w) \cdot (I_w \otimes C_{d-2}^{-1}) \cdot Y_{d-2}\cdot (I_w \otimes D_{d-2}^{-1}) \cdot (M_{d-2}^{-1} \otimes I_w)\,.$$
Since $(I_w \otimes (C_{d-2}^{-1} \cdot X_{d-2} \cdot D_{d-2}^{-1})) \cdot (M_{d-2}^{-1} \otimes I_w) = (M_{d-2}^{-1} \otimes I_w)\cdot (I_w \otimes (C_{d-2}^{-1} \cdot X_{d-2} \cdot D_{d-2}^{-1}))$, 
$$\widehat{Y}_{d-2} = I_w \otimes (C_{d-2}^{-1} \cdot X_{d-2} \cdot D_{d-2}^{-1})\,.$$
\emph{c)} Recall $\widehat{Y}_0 = Y'_0\cdot (T'_0)^{-1}$, ~$\widehat{Y}_{d-1} = S'_{d-2}\cdot Y'_d$, and $Y'_0 = Y_0\cdot T_0$, ~$Y'{d-1} = T_{d-2}^{-1}\cdot Y_{d-1}$. From Observation \ref{observation:uniquess_ti_si}, $(T'_{0})^{-1} = T_0^{-1}\cdot (I_w \otimes C_1)  \cdot (M_1^{-1} \otimes I_w)$ and $S'_{d-2} = (M_{d-2} \otimes I_w) \cdot (I_w \otimes D_{d-2}) \cdot T_{d-2}$. Hence,
$$\widehat{Y}_0 = Y'_0\cdot (T'_0)^{-1} = Y_0\cdot (I_w \otimes C_1)  \cdot (M_1^{-1} \otimes I_w)~~\text{and}$$
$$ Y'_{d-1} = T_{d-2}^{-1}\cdot Y_{d-1}= (M_{d-2} \otimes I_w) \cdot (I_w \otimes D_{d-2}) \cdot Y_{d-1}\, .$$
\end{proof}

\section{Reduction from $\IMMTI$ to $\MMTI$:~ Proof of Theorem \ref{theorem: reduction from equivalence testing for degree d to degree 3}}\label{sec: reduction from degree d to degree 3}
The input to Algorithm \ref{algorithm: reduction from degree d to degree 3} is blackbox access to a $d$-tensor $f$ in the variable sets $\vecx_{0}, \ldots, \vecx_{d-1}$, and oracle access to $\MMTI$. With high probability the algorithm does the following: If $f$ is isomorphic to $\IMM_{w,d}$ then it outputs $B_0,\ldots, B_{d-1}\in \GL(w^2,\F)$ such that $f =\IMM_{w,d}(B_0\vecx_0, \ldots ,B_{d-1}\vecx_{d-1})$, otherwise it outputs `No'. Since a PIT at the end of the algorithm ensures that the output of the algorithm is correct with high probability, we can assume that $f$ is isomorphic to $\IMM_{w,d}$.
\begin{algorithm}[ht!]
\caption{Reduction from $\IMMTI$ to $\MMTI$} \label{algorithm: reduction from degree d to degree 3}
\begin{algorithmic}
\State INPUT: Blackbox access to a $d$-tensor $f(\vecx_0, \ldots, \vecx_{d-1})$, where $d\geq 3$, and oracle access to $\MMTI$.
\State OUTPUT: Matrices $B_0, \ldots, B_{d-1} \in \GL(w^2,\F)$ such that $f(\vecx) = \IMM_{w,d}(B_0\vecx_0, \ldots, B_{d-1}\vecx_{d-1})$.
\end{algorithmic}
\begin{algorithmic}[1]
\Statex \begin{center}\textcolor{gray}{Computing the first three matrices}\end{center}
\State Choose $d-3$ random points $\veca_{3}, \ldots \veca_{d-1} \in S^{w^2}$, where $S \subseteq \F$ and $|S| \geq n^5$. Let $\vecy = \uplus_{k\in [0,2]} \vecx_k$, and $h(\vecy) = f(\vecy, \veca_3, \ldots, \veca_{d-1})$.
\State Query $\MMTI$ on input $h(\vecy)$. If $\MMTI$ outputs `No' then output `No'. Otherwise, let $B_0,B_1,B_2 \in \GL(w^2, \F)$ be the output of $\MMTI$.  
Using $B_0,B_1,B_2$ compute $w\times w$ linear matrices $X'_0, X'_1, X'_2$ respectively such that $h(\vecy) = \tr(X'_0\cdot X'_1 \cdot X'_2)$.

~
\Statex \begin{center}\textcolor{gray}{The $d=4$ case}\end{center}
\State For $i,j \in [w]$, and $k \in [0,2]$, compute $\vecb_{i,j}^{(k)} \in \F^{w^2}$ such that $X'_k(\vecb_{i,j}^{(k)})$ has one in the $(i,j)$-th entry and zero elsewhere. Here $X'_k(\vecb_{i,j}^{(k)})$ is the matrix $X'_k$ with its entries evaluated at $\vecb_{i,j}^{(k)}$. 
\State If $d=4$ then construct $X'_3$ such that its $(i,j)$-th entry is the linear form $f(\vecb^{(0)}_{j,1},\vecb^{(1)}_{1,1},\vecb_{1,i}^{(2)},\vecx_{3})$. Let $B_3$ be the transformation matrix on $\vecx_3$ which is derived from $X'_3$. Go to Step 10.  

~
\Statex \begin{center}\textcolor{gray}{The $d \geq 5$ case}\end{center}
\State Let $g(\vecx\setminus \vecy) = f(\vecb^{(0)}_{1,1}, \vecb^{(1)}_{1,1}, \vecb^{(2)}_{1,1}, \vecx\setminus\vecy)$. Use the set-multilinear ABP reconstruction algorithm in \cite{KlivansS03} (also see Claim 2.4 in \cite{KayalNST19}) to construct a full-rank $(w,d,n)$-set-multilinear ABP $Y'_3 \cdots Y'_{d-1}$ in $\vecx_3, \ldots ,\vecx_{d-1}$ variables that computes the polynomial $g$. 
\State For $j\in [w]$, compute $\vecb_{j}^{(d-1)}\in \F^{w^2}$ such that the $j$-th entry of $Y'_{d-1}(\vecb_{j}^{(d-1)}) \in \F^w$ is one and other entries are zero. For $k\in [4,d-2]$ let $X'_k = Y'_k$, and compute $\vecb_{i,j}^{(k)} \in \F^{w^2}$ such that $X'_k(\vecb_{i,j}^{(k)}) \in \F^{w\times w}$ has one in the $(i,j)$-th entry and other entries are zero, where $i,j\in [w]$.

~
\State Construct $X'_3$ whose $(i,j)$-th entry is $f(\vecb_{1,1}^{(0)}, \vecb_{1,1}^{(1)}, \vecb_{1,i}^{(0)}, \vecx_3, \mathbf{b}_{j,j}^{(4)}, \mathbf{b}_{j,j}^{(5)}, \ldots, \mathbf{b}_{j}^{(d-1)})$. 
For $i,j\in [w]$, compute $\vecb_{i,j}^{(3)} \in \F^{w^2}$ such that the $(i,j)$-th entry of $X'_3(\vecb_{i,j}^{(3)})$ is one and other entries zero.
\State Construct $X'_{d-1}$ such that its $(i,j)$ entry is $f(\vecb_{j,1}^{(0)}, \vecb_{1,1}^{(1)}, \ldots, \vecb_{1,i}^{(d-2)}, \vecx_{d-1})$, for $i,j\in [w]$. Finally, $B_k$ be the transformation matrix on $\vecx_k$ which is derived from $X'_k$ for $k \in [3, d-1]$.

~
\Statex \begin{center} \textcolor{gray}{Final PIT} \end{center}
\State Pick random points $\mathbf{c}_0, \ldots, \mathbf{c}_{d-1} \in S^{w^2}$, where $S \subseteq \F$ and $|S| \geq n^5$. If $f(\mathbf{c}_0, \ldots, \mathbf{c}_{d-1}) = \IMM_{w,d}(B_0\mathbf{c}_0, \ldots, B_{d-1}\mathbf{c}_{d-1})$ then output $B_0, \ldots ,B_{d-1}$, otherwise output `No'. 
\end{algorithmic}
\end{algorithm}
\vspace{0.1in}

\textbf{Steps 1--2}: Let $n=w^2d$. Since $f$ is isomorphic to $\IMM_{w,d}$, there is a full-rank $(w,d,n)$-set-multilinear matrix product $X_0 \ldots X_{d-1}$ in $\vecx_0,\ldots, \vecx_{d-1}$ variables such that $f = \tr(X_0 \ldots X_{d-1})$. Hence, $h(\vecy) = \tr(X_0\cdot X_1\cdot X_2\cdot X_3(\veca_3)\ldots  X_{d-1}(\veca_{d-1}))$, where $X_k(\veca_k)$ is $X_k$ with its entries evaluated at $\veca_k$. Since $X_k$ is a full-rank linear matrix, with high probability, $X_k(\veca_k)\in \GL(w,\F)$. Then $M = X_3(\veca_3) \ldots X_{d-1}(\veca_{d-1}) \in \GL(w,\F)$, and $h(\vecy) = \tr(X_0\cdot X_1\cdot (X_2 M))$ is isomorphic to $\IMM_{w,3}$. At Step 2, $\MMTI$ returns $B_0,B_1,B_2 \in \GL(w^2,\F)$ such that $h(\vecy) = \IMM_{w,d}(B_0\vecx_0, B_1\vecx_1, B_2\vecx_2)$. It now follows from Lemma \ref{fact: symmetries of IMM} that corresponding to $X'_0, X'_1, X'_2$  there are matrices $C_0, C_1, C_2 \in \GL(w,\F)$ such that $X'_0 = C_0^{-1}\cdot X_0 \cdot C_1$, $X'_1 = C_1^{-1}\cdot X_1 \cdot C_2$, and $X'_2 = C_2^{-1}\cdot (X_2 M) \cdot C_0$. \vspace{0.1in}



\textbf{Steps 3--4}: The $d=4$ case arises in Algorithm \ref{algorithm: reduction from fmai to trace equivalence}. We present this case separately as it is easier to handle. Since $X'_0, X'_1, X'_2$ are full-rank linear matrices, at Step 3 a point $\vecb_{i,j}^{(k)}$ exists such that the $(i,j)$-th entry of $X_k(\vecb_{i,j}^{(k)})$ is one and other entries are zero. The point $\vecb_{i,j}^{(k)}$ can be computed by solving a system of linear equations. If $d=4$ then $f = \tr(X'_0\cdot X'_1\cdot X'_2\cdot (C_0^{-1} M^{-1})\cdot X_3\cdot C_0))$. Verify that $f(\vecb_{j,1}^{(0)}, \vecb_{1,1}^{(1)}, \vecb_{1,i}^{(2)}, \vecx_3)$ is the $(i,j)$-th entry of $(C_0^{-1}M^{-1})\cdot X_3\cdot C_0$. \vspace{0.1in} 

%
%
\textbf{Steps 5--8}: Let $Y = (C_0^{-1}M^{-1})\cdot X_3 \ldots X_{d-1} C_0$. Observe that $f(\vecx) = \tr(X'_0\cdot X'_1\cdot X'_2\cdot Y)$. At Step 5, $g(\vecx\setminus \vecy)$ is equals the $(1,1)$ entry of $Y$. Let $Y_3 = (C_0^{-1} M^{-1})\cdot X_3$, $Y_k = X_k$ for $k\in [4,d-2]$, and $Y_{d-1} = X_{d-1}\cdot C_0$, and $Y_3(i,*)$ and $Y_{d-1}(*,j)$ denote the $i$-th row of $Y_3$ and the $j$-th column of $Y_{d-1}$ respectively. Then $g(\vecx\setminus \vecy)$ is computed by the full-rank $(w,d,n)$-set-multilinear ABP $Y_3(1,*)\cdot Y_4\ldots Y_{d-2}\cdot Y_{d-1}(*,1)$ in $\vecx_3, \ldots, \vecx_{d-1}$ variables. Using the algorithm in \cite{KlivansS03}, a full-rank $(w,d,n)$-set-multilinear ABP $Y'_3 \cdot Y'_4 \cdots Y'_{d-2} \cdot Y'_{d-1}$ in $\vecx_3, \ldots ,\vecx_{d-1}$ variables is constructed that computes $g$. The ABP constructed is such that there are matrices $C_3, \ldots, C_{d-2} \in \GL(w,\F)$ such that $Y'_3 = Y_3(1,*)\cdot C_3$, $Y'_k = C_{k-1}^{-1}\cdot Y_k\cdot C_{k}$ for $k\in [4,d-2]$, and $Y'_{d-1} = C_{d-2}^{-1}\cdot Y_{d-1}(*,1)$. At Step 6, the points $\vecb_j^{(d-1)}$ and $\vecb_{i,j}^{(k)}$ are computed by solving systems of linear equations. Verify that $f = \textnormal{Trace}(X'_0\cdot X'_1\cdot X'_2 \cdot (Y_3 C_3)\cdot X'_4  \cdots X'_{d-2}\cdot (C_{d-2}^{-1} Y_{d-1}))$. This implies that $f(\vecb_{1,1}^{(0)}, \vecb_{1,1}^{(1)}, \vecb_{1,i}^{(0)}, \vecx_3, \mathbf{b}_{j,j}^{(4)}, \mathbf{b}_{j,j}^{(5)}, \ldots, \mathbf{b}_{j}^{(d-1)})$ is the $(i,j)$-th entry of $Y_3C_3$ at Step 7. Further, $f(\vecb_{j,1}^{(0)}, \vecb_{1,1}^{(1)}, \ldots, \vecb_{1,i}^{(d-2)}, \vecx_{d-1})$ is the $(i,j)$-th entry of $C_{d-2}^{-1} Y_{d-1}$ at Step 8. Hence, $X'_3 = Y_3C_3$ and $X'_{d-1} = C_{d-2}^{-1}Y_{d-1}$. In particular, $f = \tr(X'_0 \ldots X'_{d-1}) = \IMM_{w,d}(B_0\vecx_0, \ldots, B_{d-1}\vecx_{d-1})$. 

\section{Proofs from Section \ref{section: reduction from fmai to imm}}\label{secappendix: reduction from fmai to imm}
\emph{Lemma \ref{lemma: characterization of trace by its lie algebra} (restated): Let $f$ be a non-zero $d$-tensor in the variable sets $\vecx_0, \ldots ,\vecx_{d-1}$ such that for all $k \in [0,d-1]$ $\CAL{B}_k \subseteq \G_f$. Then there is an $\alpha \in \F^{\times}$ such that $f(\vecx) = \alpha \cdot \IMM_{w,d}(\vecx)$.
}
\begin{proof}
A path monomial (see Definition \ref{definition :path monomial}) looks like $\mu = x_{i_0,i_1}^{(0)}\cdot x_{i_1,i_2}^{(1)} \ldots x_{i_{d-1},i_0}^{(d-1)}$. The next claim shows that the coefficient of a non-path monomial in $f$ is zero. In the claim, $f_{(u,v),(p,q)}^{(k,k+1)}$ denotes the coefficient of $x_{u,v}^{(k)}x_{p,q}^{(k+1)}$ in $f$ over $\F[\vecx\setminus \{\vecx_k,\vecx_{k+1}\}]$, where $u,v,p,q \in [w]$.
\begin{claim}\label{claim: non-path monomials have zero coefficient}
Let $\mu$ be a non-path monomial. Then the coefficient of $\mu$ in $f$ is zero.
\end{claim}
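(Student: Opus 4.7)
The plan is to turn the hypothesis $\CAL{B}_k \subseteq \G_f$ into an explicit family of linear constraints on the coefficients of $f$ and to read off from those constraints that every non-path monomial must vanish. First I would translate the condition $\CAL{B}_k \subseteq \G_f$ into a concrete PDE. Using the structure of $[B]_k$ worked out in Section \ref{section: lie algebra of imm} (together with the parity check done in the proof of Lemma \ref{theorem: lie algebra of of trace}), the matrix $B \in \CAL{B}_k$ associated with a parameter $M \in \CAL{M}_w$ induces, uniformly in the parity of $k$ and in the wrap-around case $k = d-1$, the derivation sending $Q_k \mapsto Q_k M$, $Q_{k+1} \mapsto -M Q_{k+1}$ (index mod $d$), and fixing the remaining $Q_\ell$. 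Plugging into the definition of $\G_f$ gives
\[
\sum_{i,j,\ell \in [w]} M_{\ell,j}\, x_{i,\ell}^{(k)} \frac{\partial f}{\partial x_{i,j}^{(k)}} \;-\; \sum_{b,a,b' \in [w]} M_{b,b'}\, x_{b',a}^{(k+1)} \frac{\partial f}{\partial x_{b,a}^{(k+1)}} \;=\; 0,
\]
valid for every $M \in \CAL{M}_w$.

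Second, I would expand $f$ in the pair $(\vecx_k, \vecx_{k+1})$ as $f = \sum_{u,v,p,q} f^{(k,k+1)}_{(u,v),(p,q)} \cdot x_{u,v}^{(k)} x_{p,q}^{(k+1)}$, substitute into the PDE, and collect the coefficient of the monomial $x_{u,v}^{(k)} x_{p,q}^{(k+1)}$ (which is a polynomial in $\vecx \setminus \{\vecx_k, \vecx_{k+1}\}$). This produces the linear identity
\[
\sum_{j \in [w]} M_{v,j}\, f^{(k,k+1)}_{(u,j),(p,q)} \;=\; \sum_{b \in [w]} M_{b,p}\, f^{(k,k+1)}_{(u,v),(b,q)}, \qquad \forall\, M \in \CAL{M}_w.
\]

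Third, I would specialise $M$ to the elementary matrices $E_{s,t}$ and vary $s,t \in [w]$. The three informative sub-cases yield (i) $f^{(k,k+1)}_{(u,v),(p,q)} = 0$ whenever $v \neq p$, and (ii) the ``diagonal'' coefficient $f^{(k,k+1)}_{(u,v),(v,q)}$ does not depend on $v$. Statement (i) is precisely the claim: if $\mu = \prod_{\ell} x_{i_\ell, j_\ell}^{(\ell)}$ is non-path, then by definition there exists some $k$ with $j_k \neq i_{k+1}$ (indices mod $d$), and so $f^{(k,k+1)}_{(i_k, j_k),(i_{k+1}, j_{k+1})}$ is the identically-zero polynomial; the coefficient of $\mu$ in $f$ is a coefficient of this polynomial and must therefore be zero.

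I do not anticipate any real obstacle: the argument is bookkeeping for the derivation associated with a single $B \in \CAL{B}_k$, followed by the standard elementary-matrix trick. The only point deserving a moment's care is the wrap-around case $k = d-1$, where the block forms of $[B]_{d-1}$ look different for $d$ even and $d$ odd; a direct check in each case nevertheless recovers the same uniform description $Q_{d-1} \mapsto Q_{d-1} M$, $Q_0 \mapsto - M Q_0$, so the same PDE and the same argument apply. Note that (ii) is not needed for the claim itself, but it is precisely the ingredient the remainder of the proof of Lemma \ref{lemma: characterization of trace by its lie algebra} will use to chain the diagonal coefficients across all $k$ and force every path monomial to share a common coefficient $\alpha$, giving $f = \alpha \cdot \IMM_{w,d}$.
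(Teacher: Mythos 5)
Your proposal is correct and takes essentially the same route as the paper: membership of the $\CAL{B}_k$-matrices in $\G_f$ is converted into linear constraints on the coefficients $f^{(k,k+1)}_{(u,v),(p,q)}$, and specializing the parameter matrix $M$ (the paper uses the single diagonal choice $M$ with only the $(j_k,j_k)$ entry nonzero, you use all elementary matrices) forces the coefficients with $v\neq p$ to vanish, which kills every non-path monomial. Your uniform description $Q_k\mapsto Q_kM$, $Q_{k+1}\mapsto -MQ_{k+1}$ is exactly what the paper's parity/wrap-around case analysis verifies, and your byproduct (ii) is the content of the paper's separate Claim \ref{claim: path monomials from adjacent sets have the same coefficient}.
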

\begin{proof}
Let $\mu = x_{i_0,j_0}^{(0)}\cdot x_{i_1,j_1}^{(1)}\ldots  x_{i_{d-1},j_{d-1}}^{(d-1)}$ be a non-path monomial. Hence there is a $k \in [0,d-1]$ such that $j_{k} \neq i_{k+1}$. Suppose $k \in [0,d-2]$, and $k$ is even.  Let $D \in \CAL{M}_w$ be a diagonal matrix such that its $(j_k,j_k)$ entry is one and all the other entries are zero. Let $B \in \CAL{B}_k$ be a block-diagonal matrix  whose $2w^2 \times 2w^2$ sub-matrix indexed by $\vecx_k\uplus \vecx_{k+1}$ looks like 
\begin{equation*}
    \begin{bmatrix}
     I_w \otimes D & \mathbf{0} \\
     \mathbf{0}  & -I_w \otimes D
    \end{bmatrix} ~~ .
\end{equation*}
Since $B \in \G_f$, we have (by the variable ordering in $\vecx_k$ and $\vecx_{k+1}$),
\begin{equation}\label{equation: coefficient equation for non path monomials}
    \sum_{u \in [w]} x_{u,j_k}^{(k)} \left(\sum_{p,q \in [w]} x_{p,q}^{(k+1)} f_{(u,j_k),(p,q)}^{(k,k+1)}\right) = \sum_{u \in [w]} x_{j_k,u}^{(k+1)} \left(\sum_{p,q \in [w]} x_{p,q}^{(k)} f_{(p,q),(j_k,u)}^{(k,k+1)}\right).
\end{equation}
From Equation \ref{equation: coefficient equation for non path monomials} we conclude that for all $u,q \in [w]$, $f_{(u,j_k),(p,q)}^{(k,k+1)} =0$ if $p \neq j_k$. Now suppose for contradiction the coefficient of $\mu$ in $f$ is non-zero. Then the coefficient of $x_{i_k,j_k}^{(k)}x_{i_{k+1},j_{k+1}}^{(k+1)}$ (i.e., $f_{(i_k,j_k),(i_{k+1},j_{k+1})}^{(k,k+1)}$) is non-zero. Since $j_k \neq i_{k+1}$, this is a contradiction. If $k \in [0,d-1]$ and $k$ is odd then the proof is similar and the only thing to note in this case is that $B \in \CAL{B}_k$ is such that its $2w^2 \times 2w^2$ sub-matrix indexed by the $\vecx_k\uplus \vecx_{k+1}$ looks like
\begin{equation*}
    \begin{bmatrix}
     D \otimes I_w & \mathbf{0} \\
     \mathbf{0}  & -D \otimes I_w
    \end{bmatrix} ~~ .
\end{equation*}
Finally, if $k = d-1$ and $d$ is odd then again the proof is similar but $B \in \CAL{B}_{d-1}$ in this case is such that the $w^2 \times w^2$ sub-matrix of $B$ indexed by $\vecx_{d-1}$ variables is equal to $I_w\otimes D$, and the $w^2 \times w^2$ sub-matrix of $B$ indexed by $\vecx_0$ variables is equal to $-D\otimes I_w$.   
\end{proof}
\begin{claim}\label{claim: path monomials from adjacent sets have the same coefficient}
Let $\mu_1 = x_{i_0,i_1}^{(0)}\cdot x_{i_1,i_2}^{(1)}\ldots x_{i_k,i_{k+1}}^{(k)}\cdot x_{i_{k+1},i_{k+2}}^{(k+1)}\ldots x_{i_{d-1},i_0}^{(d-1)}$, and $\mu_2 = x_{i_0,i_1}^{(0)}\cdot x_{i_1,i_2}^{(1)}\ldots x_{i_k,i'_{k+1}}^{(k)}\cdot x_{i'_{k+1},i_{k+2}}^{(k+1)}\ldots x_{i_{d-1},i_0}^{(d-1)}$ be two path monomials. Then the coefficients of $\mu_1$ and $\mu_2$ in $f$ are equal. 
\end{claim}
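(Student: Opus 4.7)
The plan is to prove Claim~\ref{claim: path monomials from adjacent sets have the same coefficient} by applying the Lie algebra constraint of an appropriate $B\in\CAL{B}_k$ to $f$ and then comparing the coefficients of a carefully chosen ``mixed'' monomial on the two sides of the resulting polynomial identity. The previous claim already guarantees that only path monomials have non-zero coefficients in $f$, so the task reduces to propagating an equality of coefficients between two adjacent-index-swapped path monomials $\mu_1,\mu_2$.

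First I would treat the case $k\le d-2$ and $k$ even. Set $M=E_{i'_{k+1},i_{k+1}}\in\CAL{M}_w$, the elementary matrix with a single $1$ in entry $(i'_{k+1},i_{k+1})$, and let $B\in\CAL{B}_k$ be the corresponding block-diagonal matrix with
\[
[B]_k=\begin{bmatrix} I_w\otimes M^T & \mathbf{0}\\ \mathbf{0} & -I_w\otimes M\end{bmatrix}.
\]
Unfolding this Kronecker-product structure together with the row-major ordering of $\vecx_k$ and the column-major ordering of $\vecx_{k+1}$ from Section~\ref{subsec: Variable ordering, notations and definitions}, one reads off that the only non-zero entries of $B$ are $B_{x_{i,i_{k+1}}^{(k)},\,x_{i,i'_{k+1}}^{(k)}}=1$ for each $i\in[w]$ and $B_{x_{i'_{k+1},j}^{(k+1)},\,x_{i_{k+1},j}^{(k+1)}}=-1$ for each $j\in[w]$. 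Substituting into the defining identity $\sum_{i,j}B_{ij}x_j\tfrac{\partial f}{\partial x_i}=0$ of Definition~\ref{def:lie_algebra} yields
\begin{equation}\label{eq:plan-identity}
\sum_{i\in[w]} x_{i,i'_{k+1}}^{(k)}\,\frac{\partial f}{\partial x_{i,i_{k+1}}^{(k)}}
\;=\;
\sum_{j\in[w]} x_{i_{k+1},j}^{(k+1)}\,\frac{\partial f}{\partial x_{i'_{k+1},j}^{(k+1)}}.
\end{equation}

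Next I would extract the coefficient of the mixed monomial
\[
\nu \;:=\; x_{i_0,i_1}^{(0)}\cdots x_{i_{k-1},i_k}^{(k-1)}\cdot x_{i_k,i'_{k+1}}^{(k)}\cdot x_{i_{k+1},i_{k+2}}^{(k+1)}\cdot x_{i_{k+2},i_{k+3}}^{(k+2)}\cdots x_{i_{d-1},i_0}^{(d-1)}
\]
from both sides of \eqref{eq:plan-identity}. On the left side, the only summand that can contribute $\nu$ is the one with $i=i_k$, and its contribution is the coefficient of $\nu/x_{i_k,i'_{k+1}}^{(k)}=\mu_1/x_{i_k,i_{k+1}}^{(k)}$ in $\partial f/\partial x_{i_k,i_{k+1}}^{(k)}$, which is exactly the coefficient $c_1$ of $\mu_1$ in $f$. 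On the right side, the only summand that can contribute $\nu$ is the one with $j=i_{k+2}$, and its contribution is the coefficient of $\nu/x_{i_{k+1},i_{k+2}}^{(k+1)}=\mu_2/x_{i'_{k+1},i_{k+2}}^{(k+1)}$ in $\partial f/\partial x_{i'_{k+1},i_{k+2}}^{(k+1)}$, which is the coefficient $c_2$ of $\mu_2$ in $f$. Equating the coefficients of $\nu$ on the two sides forces $c_1=c_2$, as desired.

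The remaining cases follow the same template. For $k$ odd with $k\le d-2$, the block $[B]_k$ replaces $I_w\otimes M^T$ and $-I_w\otimes M$ with $M^T\otimes I_w$ and $-M\otimes I_w$, but the column/row-major conventions for $\vecx_k$/$\vecx_{k+1}$ exactly compensate, so the same elementary $M=E_{i'_{k+1},i_{k+1}}$ and the same mixed monomial $\nu$ produce an identity of the shape \eqref{eq:plan-identity} (possibly with the two sides swapped), and the coefficient extraction argument is unchanged. For $k=d-1$, one works cyclically with $k+1$ read modulo $d$ and uses the appropriate $\CAL{B}_{d-1}$-matrix from Section~\ref{section: lie algebra of imm}, splitting into the subcases $d$ even and $d$ odd; in both subcases a single elementary $M$ again yields an identity analogous to \eqref{eq:plan-identity} with indices $(k,k+1)$ replaced by $(d-1,0)$. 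The main obstacle is purely bookkeeping: tracking which of $M$ or $M^T$, and which of $I_w\otimes(\cdot)$ or $(\cdot)\otimes I_w$, appears in each of the four parity subcases, and verifying that the chosen mixed monomial picks up exactly $c_1$ on one side and $c_2$ on the other. The algebraic mechanism — a single rank-one elementary $M$ in $\CAL{B}_k$ together with a single mixed monomial — is identical across all subcases.
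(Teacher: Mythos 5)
Your proposal is correct and follows essentially the same route as the paper's proof: the same rank-one choice $M=E_{i'_{k+1},i_{k+1}}$, the same $B\in\CAL{B}_k$, the identical Lie-algebra identity, and extraction of the coefficient of the same mixed monomial $\nu$, which yields $\alpha_1-\alpha_2=0$; the remaining parity/cyclic cases are deferred in the same way. No gaps.
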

\begin{proof}
Suppose $k \in [0,d-2]$ and $k$ is even. Let $M \in \CAL{M}_w$ be such that its $(i'_{k+1},i_{k+1})$ is one and all its other entries are zero, and $B \in \CAL{B}_k$ be a block-diagonal matrix such that $B$ restricted to the $2w^2 \times 2w^2$ sub-matrix indexed by $\vecx_{k}\uplus \vecx_{k+1}$ variables is as shown below
\begin{equation*}
    \begin{bmatrix}
     I_w \otimes M^T & \mathbf{0} \\
     \mathbf{0}  & -I_w \otimes M
    \end{bmatrix} ~~ .
\end{equation*}
Let the coefficients of $\mu_1$ and $\mu_2$ in $f$ be equal to $\alpha_1$ and $\alpha_2$. Since $B \in \G_f$, we have 
\begin{equation}\label{equation: coefficient equation for adjacent path monomials}
    \sum_{u \in [w]} x_{u,i'_{k+1}}^{(k)} \frac{\partial f}{\partial x_{u,i_{k+1}}^{(k)}}~ - ~\sum_{u \in [w]} x_{i_{k+1},u}^{(k+1)} \frac{\partial f}{\partial x_{i'_{k+1},u}^{(k+1)}} = 0.
\end{equation}
The coefficient of $\mu = x_{i_0,i_1}^{(0)}\cdot x_{i_1,i_2}^{(1)}\ldots x_{i_k,i'_{k+1}}^{(k)}\cdot x_{i_{k+1},i_{k+2}}^{(k+1)}\ldots x_{i_{d-1},i_0}^{(d-1)}$ in Equation \ref{equation: coefficient equation for adjacent path monomials} is equal to $\alpha_1 - \alpha_2 = 0$. Hence $\alpha_1 = \alpha_2$. The proof for the two remaining cases: a) $k \in [0,d-1]$ and $k$ odd, and b) $k = d-1$ and $d$ odd, follow similarly by constructing appropriate $B$ matrices. 
\end{proof}
We now use above the claim to show the following. 
\begin{claim}\label{claim: path monomials have the same coefficient}
Let $\mu_1 = x_{i_0,i_1}^{(0)}\cdot x_{i_1,i_2}^{(1)}\ldots x_{i_{d-1},i_0}^{(d-1)}$, and $\mu_2 = x_{j_0,j_1}^{(0)}\cdot x_{j_1,j_2}^{(1)}\ldots x_{j_{d-1},j_0}^{(d-1)}$ be two path monomials. Then the coefficient of $\mu_1$ and $\mu_2$ in $f$ are equal. 
\end{claim}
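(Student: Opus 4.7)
The strategy is to construct a chain of path monomials from $\mu_1$ to $\mu_2$ along which consecutive monomials differ in exactly one cyclic index, and apply Claim~\ref{claim: path monomials from adjacent sets have the same coefficient} to each adjacent pair to conclude that the coefficient in $f$ is preserved throughout.

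A path monomial is uniquely determined by its index tuple $(i_0, i_1, \ldots, i_{d-1}) \in [w]^d$, where $i_k$ serves simultaneously as the column index of the variable from $\vecx_{k-1}$ and the row index of the variable from $\vecx_k$ (positions read cyclically modulo $d$). With $\mu_1$ corresponding to $(i_0, \ldots, i_{d-1})$ and $\mu_2$ to $(j_0, \ldots, j_{d-1})$, define a sequence $\nu_0, \nu_1, \ldots, \nu_d$ of path monomials by letting $\nu_t$ be the path monomial with index tuple
$$(j_0, \ldots, j_{t-1}, i_t, i_{t+1}, \ldots, i_{d-1}).$$
Then $\nu_0 = \mu_1$ and $\nu_d = \mu_2$, and for every $t \in [1,d]$ the monomials $\nu_{t-1}$ and $\nu_t$ agree in every cyclic position except position $t-1$, where the index changes from $i_{t-1}$ to $j_{t-1}$.

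For each $t \in [1,d]$, the pair $(\nu_{t-1}, \nu_t)$ is exactly of the form handled by Claim~\ref{claim: path monomials from adjacent sets have the same coefficient}, applied to the adjacent pair of variable sets $\vecx_{t-2}$ and $\vecx_{t-1}$ (with positions read modulo $d$). That claim therefore yields that the coefficients of $\nu_{t-1}$ and $\nu_t$ in $f$ are equal. Chaining these $d$ equalities gives that the coefficient of $\mu_1$ equals that of $\mu_2$ in $f$.

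The only subtlety is the ``wrap-around'' step $t = 1$, which alters $i_0$ and hence the index shared between $\vecx_{d-1}$ and $\vecx_0$; this corresponds to invoking Claim~\ref{claim: path monomials from adjacent sets have the same coefficient} with $k = d-1$, which the proof of that claim explicitly handles (via the odd-$k$ case if $d$ is even, and via the special $k = d-1$ case of $\CAL{B}_{d-1}$ if $d$ is odd). Beyond this bookkeeping, no new ideas are required, so this is essentially a routine transitivity argument.
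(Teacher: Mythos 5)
Your proposal is correct and follows essentially the same route as the paper: both construct a chain of intermediate path monomials that changes one shared index at a time and invoke Claim~\ref{claim: path monomials from adjacent sets have the same coefficient} for each adjacent pair, including its $k=d-1$ (wrap-around) case; the only difference is that you change the index $i_0 \to j_0$ first while the paper defers it to the last step, which is immaterial.
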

\begin{proof}
For $k \in [1,d-2]$, let $\nu_k = x_{i_0,j_1}^{(0)}\cdot x_{j_1,j_2}^{(1)} \ldots x_{j_{k-1},j_k}^{(k-1)} \cdot x_{j_k,i_{k+1}}^{(k)}\cdot x_{i_{k+1},i_{k+2}}^{(k+1)}\ldots x_{i_{d-1},i_{0}}^{(d-1)}$, and  $\nu_{d-1} = x_{i_0,j_1}^{(0)}\cdot x_{j_1,j_2}^{(1)} \ldots x_{j_{d-2},j_{d-1}}^{(d-2)} \cdot x_{j_{d-1},i_0}^{(d-1)}$. From Claim \ref{claim: path monomials from adjacent sets have the same coefficient}, the coefficients of $\mu_1$ and $\nu_1$ in $f$ are equal, the coefficients of $\nu_k$ and $\nu_{k+1}$ in $f$ are equal for $k \in [1,d-2]$, and the coefficients of $\nu_{d-1}$ and $\mu_2$ in $f$ are equal. Hence the coefficients of $\mu_1$ and $\mu_2$ in $f$ are equal.  
\end{proof}
It follows immediately that there is an $\alpha \in \F^{\times}$ such that $f = \alpha \cdot \IMM_{w,d}(\vecx)$. 
\end{proof}

\emph{Corollary \ref{corollary: lie algebra characterization for equivalent polynomials} (restated): Let $B \in \mathsf{GL}(n,\F)$ be a block-diagonal matrix with individual blocks $B_0, \ldots, B_{d-1}$ and  $f$ be a non-zero $d$-tensor in the variable sets $\vecx_0, \ldots ,\vecx_{d-1}$ such that for all $k \in [0,d-1]$, $B^{-1}\cdot \CAL{B}_k \cdot B \subseteq \G_f$. Then there is an $\alpha \in \F^{\times}$ such that $f(\vecx) = \alpha \cdot \IMM_{w,d}(B_0\vecx_0,\ldots,$  $B_{d-1}\vecx_{d-1})$.}
\begin{proof}
Let $g(\vecx) = f(B_0^{-1}\vecx_0, \ldots , B_{d-1}^{-1}\vecx_{d-1})$. Since $B^{-1}\cdot \CAL{B}_k\cdot B \subseteq \G_f$, $\CAL{B}_{k} \subseteq \G_g$ for $k\in [0,d-1]$ (from Fact \ref{fact: lie algebra conjugacy}). Hence, from Lemma \ref{lemma: characterization of trace by its lie algebra} there is an $\alpha \in \F^{\times}$ such that $g = \alpha\cdot \IMM_{w,d}$. Thus $f(\vecx) = \alpha\cdot \IMM_{w,d}(B_0\vecx_0, \ldots ,  $ $B_{d-1}\vecx_{d-1})$. 
\end{proof}

\emph{Claim \ref{claim: left multiplication algebra of the given algebra} (restated): Suppose $\CAL{A} \cong \CAL{M}_w$ for some $w \in \N$. Then there exists a $K \in \GL(w^2, \F)$ and linearly independent matrices $\{C_{1,1}, \ldots , C_{w,w}\}$ in $\CAL{M}_w$  such that $L_{i,j} = K^{-1}\cdot (I_w \otimes C_{i,j}) \cdot K$~ for all $i,j \in [w]$.}
\begin{proof}
Let $\CAL{L}$ be the algebra generated by the matrices $\{L_{1,1}, \ldots , L_{w,w}\}$. It is easy to see that $\CAL{A} \cong \CAL{L} \cong \CAL{M}_w$ and $\CAL{L}$ contains $I_{w^2}$. 
From Skolem-Noether theorem (see Theorem 5 in \cite{GargGKS19}, and \cite{Lorenz08}) we have that there is a $K \in \mathsf{GL}(w^2, \F)$ and linearly independent matrices $\{C_{1,1}, \ldots , C_{w,w}\}$ in $\CAL{M}_w$  such that $L_{i,j} = K^{-1}\cdot (I_w \otimes C_{i,j}) \cdot K$ for all $i,j \in [w]$.
\end{proof}

\end{document}